\title{Construction of Sparse Suffix Trees and LCE~Indexes in~Optimal~Time and Space} %TODO Please add
\author{Dmitry Kosolobov}{Ural Federal University, Ekaterinburg, Russia}{dkosolobov@mail.ru}{0000-0002-2909-2952}{}
\author{Nikita Sivukhin}{Ural Federal University, Ekaterinburg, Russia}{sivukhin.nikita@yandex.ru}{0000-0003-4995-6954}{}
\authorrunning{D. Kosolobov and N. Sivukhin} %TODO mandatory. First: Use abbreviated first/middle names. Second (only in severe cases): Use first author plus 'et al.'
\keywords{$(\tau,\delta)$-partitioning set, longest common extension, sparse suffix tree} %TODO mandatory; please add comma-separated list of keywords
\newcommand\ArxivVersion{1}
\newcommand\Oh{\mathcal{O}}
\newcommand\lce{\mathop{\mathsf{lce}}}
\newcommand\lbit{\mathop{\mathsf{bit}}}
\newcommand\vbit{\mathop{\mathsf{vbit}}}
\newcommand{\appendixv}[2]{%
\ifdefined\ArxivVersion%
Appendix~\ref{#1}%
\else%
Appendix~#2 in the full version~\cite{self}%
\fi%
}
\begin{document}
\maketitle

\begin{abstract}
The notions of synchronizing and partitioning sets are recently introduced variants of locally consistent parsings with a great potential in problem-solving. In this paper we propose a deterministic algorithm that constructs for a given readonly string of length $n$ over the alphabet $\{0,1,\ldots,n^{\Oh(1)}\}$ a variant of a $\tau$-partitioning set with size $\Oh(b)$ and $\tau = \frac{n}{b}$ using $\Oh(b)$ space and $\Oh(\frac{1}{\epsilon}n)$ time provided $b \ge n^\epsilon$, for $\epsilon > 0$. As a corollary, for $b \ge n^\epsilon$ and constant $\epsilon > 0$, we obtain linear time construction algorithms with $\Oh(b)$ space on top of the string for two major small-space indexes: a~sparse suffix tree, which is a compacted trie built on $b$ chosen suffixes of the string, and a \emph{longest common extension} (LCE) index, which occupies $\Oh(b)$ space and allows us to compute the longest common prefix for any pair of substrings in $\Oh(n/b)$ time. For both, the $\Oh(b)$ construction storage is asymptotically optimal since the tree itself takes $\Oh(b)$ space and any LCE index with $\Oh(n/b)$ query time must occupy at least $\Oh(b)$ space by a known trade-off (at least for $b \ge \Omega(n / \log n)$). In case of arbitrary~$b \ge \Omega(\log^2 n)$, we present construction algorithms for the partitioning set, sparse suffix tree, and LCE index with $\Oh(n\log_b n)$ running time and $\Oh(b)$ space, thus also improving the state of the art.
\end{abstract}

\algtext*{EndIf}% Remove "end if" text
\algtext*{EndWhile}% Remove "end while" text
\algtext*{EndFor}% Remove "end for" text

\newpage

\section{Introduction}

Indexing data structures traditionally play a central role in algorithms on strings and in information retrieval.  Due to constantly growing volumes of data in applications, the attention of researchers in the last decades was naturally attracted to small-space indexes. 
In this paper we study two closely related small-space indexing data structures: a sparse suffix tree and a longest common extension (LCE) index. We investigate them in the general framework of (deterministic) locally consistent parsings that was developed by Cole and Vishkin~\cite{ColeVishkin}, Je{\.z}~\cite{Jez,Jez2,Jez3,Jez4}, and others~\cite{AlstrupBrodalRauhe,FischerIKoppl,GanczorzGawrychowskiJezKociumaka,GawrychowskiEtAl,GoldbergPlotkinShannon,MelhornSundarUhrig,NishimotoEtAl,SahinalpVishkin}  (the list is not exhausting) 
and was recently revitalized in the works of Birenzwige et al.~\cite{BirenzwigeEtAl} and~Kempa and Kociumaka~\cite{KempaKociumaka} where two new potent concepts of partitioning and synchronizing sets were introduced.

The sparse suffix tree (\emph{SST}) for a given set of $b$ suffixes of a string is a compacted trie built on these suffixes. 
It can be viewed as the suffix tree from which all suffixes not from the set were removed (details follow). 
The tree takes $\Oh(b)$ space on top of the input string and can be easily constructed in $\Oh(n)$ time from the suffix tree, where $n$ is the length of the string. One can build the suffix tree in $\Oh(n)$ time~\cite{Farach} provided the letters of the string are sortable in linear time. However, if at most $\Oh(b)$ space is available on top of the input, then in general there is not enough memory for the full suffix tree and the problem, thus, becomes much more difficult. The $\Oh(b)$ bound is optimal since the tree itself takes $\Oh(b)$ space. The construction problem with restricted $\Oh(b)$ space naturally arises in applications of the sparse suffix tree and the sparse suffix array (which is easy to retrieve from the tree) where we have to index data in the setting of scarce memory. As is common in algorithms on strings, it is assumed that the input string is readonly, its letters are polynomially bounded integers $\{0,1,\ldots,n^{\Oh(1)}\}$, and the space is at least polylogarithmic, i.e., $b \ge \log^{\Omega(1)} n$. We note, however, that in supposed usages the memory restrictions can often be relaxed even more to $b \ge n^\epsilon$, for constant~$\epsilon > 0$.

The $\Oh(b)$-space construction problem was posed by K{\"a}rkk{\"a}inen and Ukkonen~\cite{KarkkainenUkkonen} who showed how to solve it in linear time for the case of evenly spaced $b$ suffixes. In a series of works~\cite{BilleEtAl3,BirenzwigeEtAl,FischerIKoppl,GawrychowskiKociumaka,IEtAl,KarkkainenSandersBurkhardt}, the problem was settled for the case of randomized algorithms: an optimal linear $\Oh(b)$-space Monte Carlo construction algorithm for the sparse suffix tree was proposed by Gawrychowski and Kociumaka~\cite{GawrychowskiKociumaka} and an optimal linear $\Oh(b)$-space Las-Vegas algorithm was described by Birenzwige et al.~\cite{BirenzwigeEtAl}. The latter authors also presented the best up-to-date deterministic solution that builds the sparse suffix tree within $\Oh(b)$ space in $\Oh(n \log\frac{n}b)$ time~\cite{BirenzwigeEtAl} 
% NOTE: can be moved to footnote
($\log$ is in base~2 unless explicitly stated otherwise). 
All these solutions assume (as we do too) that the input string is readonly and its alphabet is $\{0,1,\ldots,n^{\Oh(1)}\}$; the case of rewritable inputs is apparently very different, as was shown by Prezza~\cite{Prezza}.

The LCE index, crucial in string algorithm applications, preprocesses a readonly input string of length $n$ so that one can answer queries $\lce(p,q)$, for any positions $p$ and $q$, computing the length of the longest common prefix of the suffixes starting at $p$ and $q$. The now classical result of Harel and Tarjan states that the LCE queries can be answered in $\Oh(1)$ time provided $\Oh(n)$ space is used~\cite{HarelTarjan}. In~\cite{BilleEtAl2} Bille et al.~presented an LCE index that, for any given user-defined parameter $b$, occupies $\Oh(b)$ space on top of the input string and answers queries in $\Oh(\frac{n}{b})$ time. In~\cite{Kosolobov} it was proved that this time-space trade-off is optimal provided $b \ge \Omega(n / \log n)$ (it is conjectured that the same trade-off lower bound holds for a much broader range of values $b$; a weaker trade-off appears in~\cite{BilleEtAl4,BrodalDavoodiRao}). In view of these lower bounds, it is therefore natural to ask how fast one can construct, for any parameter $b$, an LCE index that can answer queries in $\Oh(\frac{n}{b})$ time using $\Oh(b)$ space on top of the input. The space $\Oh(b)$ is optimal for this query time and the construction algorithm should not exceed it. The issue with the data structure of~\cite{BilleEtAl2} is that its construction time is unacceptably slow, which motivated a series of works trying to solve this problem. As in the case of sparse suffix trees, the problem was completely settled in the randomized setting: an optimal linear $\Oh(b)$-space Monte Carlo construction algorithm for an LCE index with $\Oh(\frac{n}{b})$-time queries was presented by Gawrychowski and Kociumaka~\cite{GawrychowskiKociumaka} and a Las-Vegas construction with the same time and space was proposed by Birenzwige et al.~\cite{BirenzwigeEtAl} provided $b \ge \Omega(\log^2 n)$. The best deterministic solution is also presented in~\cite{BirenzwigeEtAl} and runs in $\Oh(n\log\frac{n}{b})$ time answering queries in slightly worse time $\Oh(\frac{n}{b}\sqrt{\log^* n})$ provided $b \ge \Omega(\log n)$ (the previous best solution was from~\cite{TanimuraEtAl} and it runs in $\Oh(n\cdot\frac{n}{b})$ time but, for some exotic parameters $b$, has slightly better query time).
%; see Table~\ref{tbl:results}. 
The input string is readonly in all these solutions and the alphabet is $\{0,1,\ldots,n^{\Oh(1)}\}$.
% NOT: do we need link to prezza again (we gave it in previous paragraph)
%the case of rewritable inputs differs~\cite{Prezza}. 
%A related (but different) problem of indexes in compressed space was studied in~\cite{I,TanimuraEtAl2}

For a broad range of values $b$, we settle both construction problems, for sparse suffix trees and LCE indexes, in $\Oh(b)$ space in the deterministic case. Specifically, given a readonly string of length $n$ over the alphabet $\{0,1,\ldots,n^{\Oh(1)}\}$, we present two algorithms: one that constructs the sparse suffix tree, for any user-defined set of $b$ suffixes such that $b \ge \Omega(\log^2 n)$, in $\Oh(n \log_b n)$ time using $\Oh(b)$ space on top of the input; and another that constructs an LCE index with $\Oh(\frac{n}{b})$-time queries, for any parameter $b$ such that $b \ge \Omega(\log^2 n)$, in $\Oh(n \log_b n)$ time using $\Oh(b)$ space on top of the input. This gives us optimal $\Oh(b)$-space solutions with $\Oh(\frac{1}{\epsilon} n) = \Oh(n)$ time when $b \ge n^\epsilon$, for constant $\epsilon > 0$, which arguably includes most interesting cases. As can be seen in Table~\ref{tbl:results}, our result beats the previous best solution in virtually all settings since $n \log_b n = o(n\log\frac{n}{b})$, 
% NOTE: why? if b = O(1) then asymptotics are equivalent...
for $b = o(n)$.

\begin{table}[t]
\caption{LCE indexes deterministically constructible in $\Oh(b)$ space on a readonly input, for $b \ge \Omega(\log^2 n)$.}
%\vskip-2mm
\centering
%\begin{tabular}{r|c|l}\hline%\toprule
%Query time & Construction in $\Oh(b)$ space & Algorithm \\\midrule
%$\Oh(n)$ & $\Oh(1)$ & $\Oh(1)$ & na{\"i}ve  \\
%$\Oh(1)$ & $\Oh(n)$ & $\Oh(n)$ & \cite{HarelTarjan}  \\
%$\Oh(\frac{n}{b} \log\min\{b,\frac{n}{b}\})$ & $\Oh(n\cdot \frac{n}{b})$ & \cite{TanimuraEtAl} \\
%$\Oh(\frac{n}{b} \sqrt{\log^* n})$ & $\Oh(n\log\frac{n}{b})$ & \cite{BirenzwigeEtAl} \\
%$\bm{\Oh(\frac{n}{b})}$ & $\bm{{\Oh(n\log_b n)}}$ & Theorem~\ref{thm:sparse-lce} \\\hline      %\bottomrule
%\end{tabular}
\begin{tabular}{r|c|c|c}\hline%\toprule
Algorithm & Tanimura et al.~\cite{TanimuraEtAl} & Birenzwige et al.~\cite{BirenzwigeEtAl} & Theorem~\ref{thm:sparse-lce} \\\hline
Query time & $\Oh(\frac{n}{b} \log\min\{b,\frac{n}{b}\})$ & $\Oh(\frac{n}{b} \sqrt{\log^* n})$ & $\bm{\Oh(\frac{n}{b})}$ \\\hline
Construction in $\Oh(b)$ space & $\Oh(n\cdot \frac{n}{b})$ & $\Oh(n\log\frac{n}{b})$ & $\bm{{\Oh(n\log_b n)}}$ \\\hline
%Query time & Construction in $\Oh(b)$ space & Algorithm \\\midrule
%$\Oh(n)$ & $\Oh(1)$ & $\Oh(1)$ & na{\"i}ve  \\
%$\Oh(1)$ & $\Oh(n)$ & $\Oh(n)$ & \cite{HarelTarjan}  \\
%$\Oh(\frac{n}{b} \log\min\{b,\frac{n}{b}\})$ & $\Oh(n\cdot \frac{n}{b})$ & \cite{TanimuraEtAl} \\
%$\Oh(\frac{n}{b} \sqrt{\log^* n})$ & $\Oh(n\log\frac{n}{b})$ & \cite{BirenzwigeEtAl} \\
%$\bm{\Oh(\frac{n}{b})}$ & $\bm{{\Oh(n\log_b n)}}$ & Theorem~\ref{thm:sparse-lce} \\\hline      %\bottomrule
\end{tabular}
\label{tbl:results}
\end{table}

In order to achieve these results, we develop a new algorithm that, for any given parameter $b \ge \Omega(\log^2 n)$, constructs a so-called $\tau$-partitioning set of size $\Oh(b)$ with $\tau = \frac{n}{b}$. This result is of independent interest.

We note that there is another natural model where the input string is packed in memory in such a way that one can read in $\Oh(1)$ time any $\Theta(\log_\sigma n)$ consecutive letters of the input packed into one $\Theta(\log n)$-bit machine word, where $\{0,1,\ldots, \sigma{-}1\}$ is the input alphabet. In this case the $\Oh(n)$ construction time is not necessarily optimal for the sparse suffix tree and the LCE index and one might expect to have $\Oh(n / \log_\sigma n)$ time. As was shown by Kempa and Kociumaka~\cite{KempaKociumaka}, this is indeed possible for LCE indexes in $\Oh(n / \log_\sigma n)$ space. It remains open whether one can improve our results for the $\Oh(b)$-space construction in this setting; 
note that the lower bound of~\cite{Kosolobov} does not apply here due to its assumption of single-letter input memory cells.
%note that the lower bound of~\cite{Kosolobov} does not hold in this case since it assumes that every input memory cell contains only one letter.

\subparagraph{Techniques.}
The core of our solution is a version of locally consistent parsing developed by Birenzwige et al.~\cite{BirenzwigeEtAl}, the so-called $\tau$-partitioning sets (unfortunately, we could not adapt the more neat $\tau$-synchronizing sets from~\cite{KempaKociumaka} for the deterministic case).
It was shown by Birenzwige et al.~that the $\Oh(b)$-space construction of a sparse suffix tree or an LCE index can be performed in linear time provided a $\tau$-partitioning set of size $\Oh(b)$ with $\tau = \frac{n}{b}$ is given. We define a variant of $\tau$-partitioning sets and, for completeness, repeat the argument of Birenzwige et al.~with minor adaptations to our case. The main bulk of the text is devoted to the description of an $\Oh(b)$-space algorithm that builds a (variant of) $\tau$-partitioning set of size $\Oh(b)$ with $\tau = \frac{n}{b}$ in $\Oh(n\log_b n)$ time provided $b \ge \Omega(\log^2 n)$, which is the main result of the present paper. In comparison Birenzwige et al.'s algorithm for their $\tau$-partitioning sets runs in $\Oh(n)$ \emph{expected} time (so that it is a Las Vegas construction) and $\Oh(b)$ space; their deterministic algorithm takes $\Oh(n\log\tau)$ time but the resulting set is only $\tau\log^* n$-partitioning. 
Concepts very similar to partitioning sets appeared also in~\cite{RobertsEtAl,SchleimerWilkersonAiken}.

Our solution combines two well-known approaches to deterministic locally consistent parsings: the \emph{deterministic coin tossing} introduced by Cole and Vishkin~\cite{ColeVishkin} and developed in~\cite{AlstrupBrodalRauhe,FischerIKoppl,GanczorzGawrychowskiJezKociumaka,GawrychowskiEtAl,GoldbergPlotkinShannon,MelhornSundarUhrig,NishimotoEtAl,SahinalpVishkin}, and the \emph{recompression} invented by Je{\.z}~\cite{Jez4} and studied in~\cite{I,Jez,Jez2,Jez3}. The high level idea is first to use Cole and Vishkin's technique that constructs a $\tau$-partitioning set of size $\Oh(b\log^* n)$ where $\tau =\frac{n}{b}$ (in our algorithm the size is actually $\Oh(b\log\log\log n)$ since we use a ``truncated'' version of Cole and Vishkin's bit reductions); second, instead of storing the set explicitly, which is impossible in $\Oh(b)$ space, we construct a string $R$ of length $\Oh(b\log^* n)$ in which every letter corresponds to a position of the set and occupies $o(\log\log n)$ bits so that $R$ takes $o(b\log^* n \log\log n)$ bits in total and, thus, can be stored into $\Oh(b)$ machine words of size $\Oh(\log n)$ bits; third, Je{\.z}'s recompression technique is iteratively applied to the string $R$ until $R$ is shortened to length $\Oh(b)$; finally, the first technique generating a $\tau$-partitioning set is performed again but this time we retain and store explicitly those positions that correspond to surviving letters of the string $R$. There are many hidden obstacles on this path and because of them our solution is only of purely theoretical value in its present form due to numerous internal complications in the actual scheme (in contrast, randomized results on synchronizing sets~\cite{DinklageEtAl,KempaKociumaka} seem quite practical).

The paper is organized as follows. In Section~\ref{sec:partitioning-sets} we define $\tau$-partitioning sets and show how one can use them to build an LCE index. Section~\ref{sec:vishkin-process} describes the first stage of the construction of a $\tau$-partitioning set that is based on a modification of Cole and Vishkin's technique. Section~\ref{sec:time-improvement} improves the running time of this stage from $\Oh(n \log\tau)$ to $\Oh(n\log_b \tau)$. In Section~\ref{sec:recompression} the second stage based on a modification of Je{\.z}'s recompression technique is presented. \appendixv{appx:small-tau}{C} describes separately the case of very small $\tau$.

\section{Partitioning Sets with Applications}
\label{sec:partitioning-sets}

Let us fix a readonly string $s$ of length $n$ whose letters $s[0], s[1], \ldots, s[n{-}1]$ are from a polynomially bounded alphabet $\{0,1,\ldots,n^{\Oh(1)}\}$. We use $s$ as the input in our algorithms. As is standard, the algorithms are in the word-RAM model, their space is measured in $\Theta(\log n)$-bit machine words, and each $s[i]$ occupies a separate word. We write $s[i..j]$ for the \emph{substring} $s[i]s[i{+}1]\cdots s[j]$, assuming it is empty if $i > j$; $s[i..j]$ is called a \emph{suffix} (resp., \emph{prefix}) of $s$ if $j = n - 1$ (resp., $i = 0$). For any string $t$, let $|t|$ denote its length. We say that $t$ \emph{occurs} at position $i$ in $s$ if $s[i..i{+}|t|{-}1] = t$. Denote $[i..j] = \{k \in \mathbb{Z} \colon i\le k\le j\}$, $(i..j] = [i..j]\setminus \{i\}$, $[i..j) = [i..j]\setminus\{j\}$, $(i..j) = [i..j)\cap (i..j]$. A~number $p \in [1..|t|]$ is called a \emph{period} of $t$ if $t[i] = t[i-p]$ for each $i \in [p..|t|)$. For brevity, denote $\log\log\log n$ by $\log^{(3)} n$. We assume that $n$, the length of $s$, is sufficiently large: larger than $2^{\max\{16,c\}}$, where $c$ is a constant such that $n^c$ upper-bounds the alphabet. 

Given an integer $\tau \in [4..n/2]$, a set of positions $S \subseteq [0..n)$ is called a~\mbox{\emph{$\tau$-partitioning set}} if it satisfies the following properties:
\begin{enumerate}%[label=(\alph*)]
\item[(a)] if $s[i{-}\tau..i{+}\tau] = s[j{-}\tau..j{+}\tau]$ for $i,j \in [\tau..n{-}\tau)$, then $i \in S$ iff $j \in S$;
\item[(b)] if $s[i..i{+}\ell] = s[j..j{+}\ell]$, for $i,j \in S$ and some $\ell \ge 0$, then, for each $d \in [0..\ell{-}\tau)$, $i + d \in S$ iff $j + d \in S$;
\item[(c)] if $i,j \in S \cup \{ 0, n{-}1 \}$ with $j{-}i > \tau$ and $(i..j) \cap S = \emptyset$, then the period of $s[i .. j]$ is at most $\tau / 4$.
\end{enumerate}

Our definition is inspired by the \emph{forward synchronized \mbox{$(\tau,\tau)$-partitioning} sets} from~{\cite[Def.~3.1 and~6.1]{BirenzwigeEtAl}} but slightly differs; nevertheless, we retain the term ``partitioning'' to avoid inventing unnecessary new terms for very close concepts. In the definition, (a), (b), and (c) state, respectively, that $S$ is locally consistent, forward synchronized, and dense: the choice of positions depends only on short substrings around them, long enough equal right ``contexts'' of positions from $S$ are ``partitioned'' identically, and $S$ has a position every $\tau$ letters unless a long range with small period is encountered. In our construction of $S$ a certain converse of~(c) will also hold: whenever a substring $s[i..j]$ has a period at most $\tau / 4$, we will have $S \cap [i + \tau .. j - \tau] = \emptyset$ (see Lemma~\ref{lem:main-lemma}). This converse is not in the definition since it is unnecessary for our applications and we will use auxiliary $\tau$-partitioning sets not satisfying it. The definition also implies the following convenient property of ``monotonicity''.

% NOTE: we use this lemma only once. Do we really need it?
\begin{lemma}
For any $\tau' \ge \tau$, every $\tau$-partitioning set is also \mbox{$\tau'$-partitioning}.\label{lem:monotone-synch}
\end{lemma}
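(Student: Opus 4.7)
The plan is to verify the three defining properties (a), (b), (c) of a $\tau'$-partitioning set directly from the hypothesis that $S$ is $\tau$-partitioning, using in each case that the $\tau'$-hypothesis is a strengthening of the $\tau$-hypothesis (so the $\tau$-version of the property applies with no extra work). Throughout, I implicitly assume $\tau' \in [4..n/2]$, so that the definition even makes sense; since $\tau \ge 4$ this only restricts the upper range of $\tau'$.

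For (a), if $s[i-\tau'..i+\tau'] = s[j-\tau'..j+\tau']$ and $i,j \in [\tau'..n-\tau')$, then since $\tau' \ge \tau$ these substrings contain the shorter equal windows $s[i-\tau..i+\tau] = s[j-\tau..j+\tau]$, and $[\tau'..n-\tau') \subseteq [\tau..n-\tau)$, so property (a) for $\tau$ yields $i \in S$ iff $j \in S$. For (b), if $s[i..i+\ell] = s[j..j+\ell]$ for $i,j \in S$, then property (b) for $\tau$ ensures the equivalence $i+d \in S$ iff $j+d \in S$ for all $d \in [0..\ell-\tau)$, and in particular for the smaller range $d \in [0..\ell-\tau')$, which is exactly what (b) for $\tau'$ requires.

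For (c), if $i,j \in S$ with $j-i > \tau'$ and $(i..j) \cap S = \emptyset$, then also $j-i > \tau$, so property (c) for $\tau$ provides a period $p \le \tau/4$ of $s[i..j]$; since $\tau/4 \le \tau'/4$, this same $p$ witnesses (c) for $\tau'$. This completes all three checks. There is no real obstacle: the lemma is essentially a tautology coming from the fact that each of the three conditions is monotone in $\tau$ (larger $\tau$ imposes a weaker constraint), and the only mildly delicate point worth spelling out is that the ``dense'' condition (c) uses a bound $\tau/4$ that automatically weakens under $\tau \le \tau'$.
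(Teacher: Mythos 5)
Your proof is correct. The paper actually omits a proof of this lemma entirely, treating it as an immediate consequence of the definitions; your direct verification of the three properties (a), (b), (c), observing in each case that the $\tau'$-hypothesis is stronger and the $\tau'$-conclusion is weaker, is exactly the intended (and only natural) argument.
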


Due to~(c), all $\tau$-partitioning sets in some strings have size at least $\Omega(n / \tau)$. In the remaining sections we devise algorithms that construct a $\tau$-partitioning set of $s$ with size $\Oh(n / \tau)$ (matching the lower bound) using $\Oh(n / \tau)$ space on top of~$s$; for technical reasons, we assume that $\Omega(\log^2 n)$ space is always available, i.e., $n / \tau \ge \Omega(\log^2 n)$, which is a rather mild restriction. Thus, we shall prove the following main theorem.

\begin{theorem}
For any string of length $n$ over an alphabet $[0..n^{\Oh(1)}]$ and any $\tau \in [4..\Oh(n / \log^2 n)]$, one can construct in $\Oh(n\log_b n)$ time and $\Oh(b)$ space on top of the string a $\tau$-partitioning set of size $\Oh(b)$, for $b = n / \tau$.
\label{thm:main-theorem}
\end{theorem}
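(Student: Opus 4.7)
The plan is to follow the four-stage pipeline sketched in the techniques paragraph. Since $\Oh(b)$ space is insufficient to list an intermediate set of size $\Oh(b \log^{(3)} n)$ directly, the idea is to represent such a set implicitly by a short string $R$ over a narrow alphabet, shrink $R$ by recompression down to length $\Oh(b)$, and only at the very end materialize the actual positions of $s$ that correspond to the surviving letters of $R$.

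First I would develop a deterministic coin-tossing stage in the spirit of Cole and Vishkin (Section~\ref{sec:vishkin-process}). Starting from a trivial base set and applying a bounded number of truncated bit-reduction rounds, I obtain a subset whose local consistency, forward synchronization, and density give a $\tau$-partitioning set of size $\Oh(b \log^{(3)} n)$. Truncating the Cole--Vishkin iteration keeps each surviving position labellable by an $o(\log\log n)$-bit symbol depending only on an $\Oh(\log^* n)$-neighborhood in $s$, which immediately yields property~(a). Naively this stage costs $\Oh(n \log \tau)$ time; a blocked batching argument (Section~\ref{sec:time-improvement}) amortizes the bit reductions inside windows of length $\Theta(b)$ and lowers this to $\Oh(n \log_b \tau)$. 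The crucial book-keeping is that survivors are output into a packed string $R$ of length $\Oh(b \log^{(3)} n)$ with $o(\log\log n)$-bit letters, so $R$ occupies $o(b \log^{(3)} n \cdot \log\log n) = o(b \log n)$ bits, i.e.\ $\Oh(b)$ machine words.

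Next I would apply Je{\.z}'s recompression to $R$ iteratively (Section~\ref{sec:recompression}). Each round performs pair compression together with block compression on $R$, preserving the correspondence between letters of $R$ and positions of $s$ and shrinking $R$ by a constant factor; after $\Oh(\log^{(3)} n)$ rounds the length of $R$ drops to $\Oh(b)$. The invariants I would track are that (i) the indicator of which positions of $s$ are still ``present'' depends only on a bounded local context in $s$, preserving properties~(a) and~(b), and (ii) gaps between consecutive present positions remain compatible with property~(c) because recompression is itself a locally consistent parse and never merges positions separated by aperiodic material. Finally, armed with the $\Oh(b)$-size compressed description, I re-stream $s$ and perform the first stage once more, but this time, since the surviving letters of $R$ are known in advance, only the at most $\Oh(b)$ matching positions have to be retained explicitly while the rest are discarded on the fly; the very small $\tau$ regime where $\log_b \tau$ is trivial is dispatched by Section~\ref{sec:small-tau}.

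The main obstacle will be the middle stage: making Je{\.z}'s recompression interact correctly with the partitioning-set semantics on $s$. Concretely, I must ensure that when a block of $R$ is collapsed the corresponding stretch of $s$ still carries a witness period of length at most $\tau/4$ so that property~(c) survives, and that recompression's auxiliary dictionaries, the evolving alphabet of $R$, and the working buffers all stay within the tight $\Oh(b)$-word budget throughout the $\Oh(\log^{(3)} n)$ rounds, which is precisely why the statement requires $b \ge \Omega(\log^2 n)$.
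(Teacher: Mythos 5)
Your plan reproduces the paper's skeleton—Cole--Vishkin-style truncated bit reductions, a windowed speed-up, a compact intermediate string $R$, recompression on $R$, and a final re-stream to materialize the survivors—so at the top level you have the same architecture. But the middle stage, which is the paper's main technical content, is where your proposal has a genuine gap, and you have only half-acknowledged it.

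First, the string $R$ is not simply the packed Cole--Vishkin labels. The paper first builds $S'\subseteq S$ by deleting, among pairs of nearby positions $i<j$ with $j-i\le\tau/4$ and $s[i..i{+}\tau/2]=s[j..j{+}\tau/2]$, everything in $(i..j]$, and proves that $S'$ is still $\tau$-partitioning with a converse of property (c) (Lemma~\ref{lem:periodicity-gap}). Only then does it construct letters $a_p$ for each $p\in S'$ by four rounds of $\vbit$ reductions on tuples comparing $s[p..p{+}\tau/2]$ against all nearby survivors $p_j\in S'\cap(p..p{+}\tau/2^5]$; this gives a letter of $\Oh((\log^{(3)} n)^2)$ bits whose defining property is that \emph{any} two positions within $\tau/2^5$ get distinct letters (Lemma~\ref{lem:r-letters-equality}), not just adjacent ones, and that each letter depends only on a bounded right context of length $\frac78\tau$ (Lemma~\ref{lem:r-letters-locality}). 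Second, the recompression used is not the pair-plus-block compression of Je{\.z}'s original scheme: the paper never introduces new alphabet symbols and never merges letters. It \emph{deletes} letters of $R$ by picking an approximate maximum directed cut on a digraph whose vertices are the alphabet of $R$ and whose edge weights count adjacent pairs at small distance. Introducing fresh pair symbols would destroy the bijection between letters of $R$ and positions of $s$, and block compression has nothing to act on because the distinctness lemma already forbids equal adjacent letters at small distance. Third, the argument that properties (a), (b), (c) survive recompression depends on three ingredients you omit: the distance threshold $\tau/2^j$ is lowered gradually over $j\in[6..\lceil\log^{(4)} n\rceil]$ with at most three cut iterations per value of $j$ so that the local ``drift'' of the decision stays bounded (this is the heart of Lemma~\ref{lem:main-lemma}); a letter $R[i]$ whose predecessor is more than $\tau/2^5$ away is never removed (so gap endpoints from $S'$ survive into $S^*$ and property (c) is inherited); and the arrays $M_i$ encoding $|S'\cap(p..p{+}\tau/2^{j'}]|$ are the only distance information kept, and must be updated consistently after each deletion. ``Recompression is a locally consistent parse and never merges positions separated by aperiodic material'' is not a usable invariant here; what is actually shown is that the sequences of remaining letters at $p$ and $q$ can diverge only within a window of length $\Oh(\tau/2^h)$ near their ends, which shrinks as $h$ decreases. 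Finally, the small-$\tau$ regime is not dispatched because $\log_b\tau$ is trivial; it is needed because the per-position $\vbit$ reductions and $M_i$ maintenance cost $\mathrm{poly}(\log^{(3)} n)$ per position, which exceeds $\Oh(n)$ unless $\tau$ is at least polylogarithmic in $\log\log n$, and the paper buys back the lost factor with a four-russians table.
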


Let us sketch how one can construct indexes with the $\tau$-partitioning set of Theorem~\ref{thm:main-theorem}.
%Let us sketch how the $\tau$-partitioning set from Theorem~\ref{thm:main-theorem} can be used for linear SST construction.
%Let us outline using the $\tau$-partitioning set from Theorem~\ref{thm:main-theorem} for the linear construction of small-space indexes

% NOTE: in case of emergency we can move all proofs to the appendix and leave only theorem statements here + move lemma 3 closer to the usage
\subparagraph{LCE index and sparse suffix tree.}
An LCE index is a data structure on $s$ that, given a pair of positions $p$ and $q$, answers the \emph{LCE query} $\lce(p,q)$ computing the length of the longest common prefix of $s[p..n{-}1]$ and $s[q..n{-}1]$. Such index can be stored in $\Oh(b)$ space on top of $s$ with $\Oh(\frac{n}b)$ query time~\cite{BilleEtAl2} and this trade-off is optimal, at least for $b \ge \Omega(\frac{n}{\log n})$~\cite{Kosolobov}.%(conjectured to be optimal for $b \ge n^{\Omega(1)}$).

Given $b$ suffixes $s[i_1..n{-}1], s[i_2..n{-}1],\ldots,s[i_b..n{-}1]$, their \emph{sparse suffix tree} \cite{KarkkainenUkkonen} is a compacted trie on these suffixes in which all edge labels are stored as pointers to corresponding substrings of $s$. Thus, the tree occupies $\Oh(b)$ space.

Our construction scheme for these two indexes is roughly as follows: given a \mbox{$\tau$-partitioning} set $S$ with $\tau = \frac{n}{b}$ and size $\Oh(b) = \Oh(n / \tau)$, we first build the sparse suffix tree $T$ for the suffixes $s[j..n{-}1]$ with $j \in S$, then we use it to construct an LCE index, and, using the index, build the sparse suffix tree for arbitrarily chosen $b$ suffixes. We elaborate on this scheme below; our exposition, however, is rather sketchy and some details are omitted since the scheme is essentially the same as in~\cite{BirenzwigeEtAl} and is given here mostly for completeness.

%\medskip

To construct the sparse suffix tree $T$ for all $s[j..n{-}1]$ with $j \in S$, we apply the following lemma. Its cumbersome formulation is motivated by its subsequent use in Section~\ref{sec:time-improvement}. In the special case when $m = n$ and $\sigma = n^{\Oh(1)}$, which is of primary interest for us now, the lemma states that $T$ can be built in $\Oh(n)$ time: this case implies that $m \log_b \sigma = \Oh(n \log_b n)$ is $\Oh(n)$ if $b > n / \log n$, and $b\log b$ is $\Oh(n)$ if $b \le n / \log n$, and, therefore, $\min\{m \log_b \sigma, b \log b\} = \Oh(n)$. The proof essentially follows arguments of~\cite{BirenzwigeEtAl} and is given in \appendixv{appx:sst-special}{A}.

\begin{restatable}{lemma}{sstSpecial}
Given an integer $\tau \ge 4$ and a read-only string $s$ of length $m$ over an alphabet $[0..\sigma)$, let $S$ be an ``almost'' $\tau$-partitioning set of size $b = \Theta(m / \tau)$: it satisfies properties (a) and (b), but not necessarily (c). The sparse suffix tree $T$ for all suffixes $s[j..m{-}1]$ with $j \in S$ can be built in $\Oh(m + \min\{m \log_b \sigma, b \log b\})$ time and $\Oh(m / \tau)$ space on top of the space required for $s$.
\label{lem:sst-special}
\end{restatable}

For our LCE index, we equip $T$ with the lowest common ancestor (LCA) data structure~\cite{HarelTarjan}, which allows us to compute $\lce(p,q)$ in $\Oh(1)$ time for $p, q \in S$, and we preprocess an array $N[0..b{-}1]$ such that $N[i] = \min\{j \ge i\tau \colon j \in S\}$ for $i \in [0..b)$, which allows us to calculate $\min\{j \ge p \colon j \in S\}$, for any $p$, in $\Oh(\tau)$ time by traversing $j_k, j_{k+1}, \ldots$ in $S$, for $j_k = N[\lfloor p / \tau \rfloor]$. In order to answer an arbitrary query $\lce(p,q)$, we first calculate $p' = \min\{j \ge p + \tau \colon j\in S\}$ and $q' = \min\{j \ge q + \tau \colon j\in S\}$ in $\Oh(\tau)$ time. If either $p' - p \le 2\tau$ or $q' - q \le 2\tau$, then by the local consistency of $S$, $s[p..n{-}1]$ and $s[q..n{-}1]$ either differ in their first $3\tau$ positions, which is checked na{\"i}vely, or $s[p..p'] = s[q..q']$ and the answer is given by $p' - p + \lce(p', q')$ using $T$. If $\min\{p' - p, q' - q\} > 2\tau$, then the strings $s[p{+}\tau..p']$ and $s[q{+}\tau..q']$ both have periods at most $\tau / 4$ due to property~(c); we compare $s[p..p{+}2\tau]$ and $s[q..q{+}2\tau]$ na{\"i}vely and, if there are no mismatches, therefore, due to periodicity, $s[p{+}\tau .. p']$ and $s[q{+}\tau .. q']$ have a common prefix of length $\ell = \min\{p' - p, q' - q\} - \tau$; hence, the problem is reduced to $\lce(p + \ell, q + \ell)$, which can be solved as described above since either $p' - (p + \ell) \le 2\tau$ or $q' - (q + \ell) \le 2\tau$. We thus have proved the following theorem.

\begin{theorem}
For any string of length $n$ over an alphabet $[0..n^{\Oh(1)}]$ and any $b \ge \Omega(\log^2 n)$, one can construct in $\Oh(n\log_b n)$ time and $\Oh(b)$ space on top of the string an LCE index that can answer LCE queries in $\Oh(n / b)$ time.\label{thm:sparse-lce}
\end{theorem}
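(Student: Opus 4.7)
The plan is to set $\tau = n/b$ and assemble three ingredients: Theorem~\ref{thm:main-theorem}, Lemma~\ref{lem:sst-special}, and the query algorithm already sketched in the paragraph preceding the statement. First, I would invoke Theorem~\ref{thm:main-theorem} to build a $\tau$-partitioning set $S$ of size $\Oh(b)$ in $\Oh(n \log_b n)$ time and $\Oh(b)$ space on top of $s$. Then I would apply Lemma~\ref{lem:sst-special} with $m = n$ and $\sigma = n^{\Oh(1)}$ to construct the sparse suffix tree $T$ on the suffixes $\{s[j..n{-}1] : j \in S\}$. Since $\min\{n \log_b \sigma, b \log b\} = \Oh(n)$ (split cases at $b = n / \log n$), this step costs $\Oh(n)$ time and $\Oh(b)$ space; note that $S$ satisfies (a) and (b), which is all the lemma needs.

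Next I would equip $T$ with the Harel--Tarjan LCA structure in $\Oh(b)$ time, yielding $\Oh(1)$-time computation of $\lce(p,q)$ whenever $p,q \in S$, and I would precompute the array $N[0..b{-}1]$ with $N[i] = \min\{j \ge i\tau : j \in S\}$, so that $\min\{j \ge p : j \in S\}$ can be obtained in $\Oh(\tau)$ time by scanning $S$ forward from $N[\lfloor p / \tau\rfloor]$. Both additions fit in $\Oh(b)$ space and $\Oh(b + n/\tau) = \Oh(b)$ preprocessing time.

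For a general query $\lce(p,q)$, I would follow the case analysis already outlined. Compute $p' = \min\{j \ge p+\tau : j \in S\}$ and $q' = \min\{j \ge q+\tau : j \in S\}$ in $\Oh(\tau)$ time. If $\min\{p'-p, q'-q\} \le 2\tau$, property~(a) ensures that a na\"ive scan of the first $3\tau$ characters of $s[p..]$ and $s[q..]$ either finds a mismatch or certifies $s[p..p'] = s[q..q']$, reducing the problem to $\lce(p', q')$ which is served in $\Oh(1)$ via $T$. Otherwise property~(c) forces both $s[p{+}\tau..p']$ and $s[q{+}\tau..q']$ to have period at most $\tau/4$; a na\"ive $\Oh(\tau)$-time comparison of $s[p..p{+}2\tau]$ and $s[q..q{+}2\tau]$ either exposes a mismatch or, combined with Lemma~\ref{lem:fine-wilf}, shows that the two suffixes agree for the next $\ell = \min\{p'-p, q'-q\} - \tau$ characters, so I recurse on $\lce(p+\ell, q+\ell)$ which falls into the first case by construction. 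Every query therefore runs in $\Oh(\tau) = \Oh(n/b)$.

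The total construction time is dominated by the $\Oh(n \log_b n)$ cost of building $S$; all further steps are linear in $n$ or $b$, and the total working space stays $\Oh(b)$. The only real obstacle is Theorem~\ref{thm:main-theorem} itself, which we are entitled to assume; beyond that, the main thing to verify carefully is that properties (a) and (c) of a $\tau$-partitioning set are exactly what the query reduction needs and that the recursive call in the periodic case indeed lands in a situation with $\min\{p' - (p+\ell), q' - (q+\ell)\} \le 2\tau$, terminating in one additional round.
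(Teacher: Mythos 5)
Your proposal is correct and matches the paper's own argument step for step: the same invocation of Theorem~\ref{thm:main-theorem} to get $S$, the same use of Lemma~\ref{lem:sst-special} (including the case split at $b = n/\log n$ to bound $\min\{n\log_b\sigma, b\log b\}$ by $\Oh(n)$), the same LCA structure and array $N$, and the same two-case query reduction with the one-round recursion into the non-periodic case. The paper leaves the Fine--Wilf/periodicity justification implicit ("due to periodicity"), which you correctly unpack via Lemma~\ref{lem:fine-wilf}.
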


Let us consider the construction of the SST for $b$ suffixes $s[i_1..n{-}1]$, $s[i_2..n{-}1], \ldots, s[i_b..n{-}1]$. Denote by $j_k$ the $k$th position in a given $\tau$-partitioning set $S$ of size $\Oh(b)$ with $\tau = \frac{n}{b}$ (so that $j_1 < \cdots < j_{|S|}$). For each suffix $s[i_\ell..n{-}1]$, we compute in $\Oh(\tau)$ time using the array $N$ an index $k_\ell$ such that $j_{k_\ell} = \min\{j \ge i_\ell + \tau \colon j \in S\}$. It takes $\Oh(b\tau) = \Oh(n)$ time in total. Then, we sort all strings $s[i_\ell..i_\ell{+}4\tau]$ in $\Oh(n)$ time as in the proof of Lemma~\ref{lem:sst-special} and assign to them ranks $r_\ell$ (equal strings are of equal ranks). For each $k \in [1..|S|]$, we obtain from the tree $T$ the rank $\bar{r}_k$ of $s[j_k..n{-}1]$ among the suffixes $s[j..n{-}1]$ with $j \in S$. Suppose that $j_{k_\ell} \le i_\ell + 3\tau$, for all $\ell \in [1..b]$. By property~(a), the equality $r_\ell = r_{\ell'}$, for any $\ell, \ell' \in [1..b]$, implies that $j_{k_\ell} - i_\ell = j_{k_{\ell'}} - i_{\ell'}$ when $j_{k_\ell} - i_\ell \le 3\tau$. Then, we sort the suffixes  $s[i_\ell..n{-}1]$ with $\ell \in [1..b]$ in $\Oh(b)$ time using the radix sort on the corresponding pairs $(r_\ell, \bar{r}_{j_{k_\ell}})$. The SST can be assembled from the sorted suffixes in $\Oh(b\tau) = \Oh(n)$ time using the LCE index to calculate longest common prefixes of adjacent suffixes.

The argument is more intricate when the condition $j_{k_\ell} > i_\ell + 3\tau$ does not hold.
Suppose that $j_{k_\ell} > i_\ell + 3\tau$, for some $\ell \in [1..b]$. Then, by property~(c), the minimal period of $s[i_\ell{+}\tau..j_{k_\ell}]$ is at most $\tau / 4$. Denote this period by $p_{\ell}$. We compute $p_{\ell}$ in $\Oh(\tau)$ time using a linear $\Oh(1)$-space algorithm~\cite{CrochemoreRytter2} and, then, we find the leftmost position $t_{\ell} > j_{k_\ell}$ breaking this period: $s[t_\ell] \ne s[t_\ell{-}p_\ell]$. As $j_{k_\ell}{-}p_\ell > i_\ell{+}2 \tau > j_{k_\ell-1}$, we obtain $s[j_{k_\ell}{-}\tau..j_{k_\ell}{+}\tau] \ne s[j_{k_\ell}{-}p_\ell{-}\tau..j_{k_\ell}{-}p_\ell{+}\tau]$ (since otherwise $j_{k_\ell}{-}p_\ell \in S$ by property~(a)) and, hence, $t_\ell \in (j_{k_\ell}..j_{k_\ell}{+}\tau]$. Therefore, the computation of $t_\ell$ takes $\Oh(\tau)$ time. Thus, all $p_\ell$ and $t_\ell$ can be calculated in $\Oh(b\tau) = \Oh(n)$ total time. We then sort the strings $s[t_\ell..t_\ell{+}\tau]$ in $\Oh(n)$ time and assign to them ranks $\tilde{r}_\ell$. For each suffix $s[i_\ell..n{-}1]$ with $\ell \in [1..b]$, we associate the tuple $(r_\ell,0,0,\bar{r}_{j_{k_\ell}})$ if $j_{k_\ell} \le i_\ell + 3\tau$, and the tuple $(r_\ell,d_\ell,\tilde{r}_\ell,\bar{r}_{j_{k_\ell}})$ if $j_{k_\ell} > i_\ell{+}3\tau$, where $d_\ell = \pm(t_{\ell}{-}i_\ell - n)$ with plus if $s[t_{\ell}] < s[t_{\ell}{-}p_{\ell}]$ and minus otherwise. We claim that the order of the suffixes $s[i_\ell..n{-}1]$ is the same as the order of their associated tuples and, hence, the suffixes can be sorted by sorting the tuples in $\Oh(n)$ time using the radix sort. We then assemble the SST as above using the LCE index. We do not dive into the proof of the claim since it essentially repeats similar arguments in~\cite{BirenzwigeEtAl}; see~\cite{BirenzwigeEtAl} for details.

\begin{theorem}
For any string of length $n$ over an alphabet $[0..n^{\Oh(1)}]$ and any $b \ge \Omega(\log^2 n)$, one can construct in $\Oh(n\log_b n)$ time and $\Oh(b)$ space on top of the string the sparse suffix tree for arbitrarily chosen $b$ suffixes.\label{thm:sst}
\end{theorem}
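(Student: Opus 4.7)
The plan is to reduce the construction to sorting $b$ integer tuples, where the tuples encode each chosen suffix via information that can be recovered in $\Oh(\tau)$ time from the sparse suffix tree $T$ built on the $\tau$-partitioning set $S$. First I would invoke Theorem~\ref{thm:main-theorem} to obtain, in $\Oh(n\log_b n)$ time and $\Oh(b)$ space, a $\tau$-partitioning set $S$ of size $\Oh(b)$ for $\tau = n/b$; then use Lemma~\ref{lem:sst-special} to get the internal sparse suffix tree $T$ for the suffixes $s[j..n{-}1]$ with $j \in S$ in $\Oh(n)$ time (since $\min\{n\log_b n, b\log b\} = \Oh(n)$ in the target parameter range); and the argument in Theorem~\ref{thm:sparse-lce} to get the $\Oh(n/b)$-time LCE index. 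I would also build the bucket array $N[0..b{-}1]$ from Theorem~\ref{thm:sparse-lce} so that the successor $\min\{j \ge p : j \in S\}$ can be found in $\Oh(\tau)$ time.

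For each requested suffix $s[i_\ell..n{-}1]$ I would locate $j_{k_\ell} = \min\{j \ge i_\ell + \tau : j\in S\}$ in $\Oh(\tau)$ time, which costs $\Oh(b\tau) = \Oh(n)$ in total. The key observation is that property~(a) of $S$ together with the rank of the substring $s[i_\ell..i_\ell{+}4\tau]$ reveals the offset $j_{k_\ell} - i_\ell$ whenever this offset is small, i.e., $\le 3\tau$; and when the offset exceeds $3\tau$, property~(c) forces $s[i_\ell{+}\tau..j_{k_\ell}]$ to have a period $p_\ell \le \tau/4$, which I would compute in $\Oh(\tau)$ time by a linear in-place algorithm. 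The point where this period breaks, $t_\ell > j_{k_\ell}$, must satisfy $t_\ell \le j_{k_\ell} + \tau$: otherwise shifting by $p_\ell$ would preserve the $2\tau$-window around $j_{k_\ell}$ and force $j_{k_\ell} - p_\ell \in S$ by property~(a), contradicting the definition of $j_{k_\ell}$; so $t_\ell$ is also found in $\Oh(\tau)$ time.

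I would then sort all length-$4\tau$ substrings $s[i_\ell..i_\ell{+}4\tau]$ and all length-$\tau$ substrings $s[t_\ell..t_\ell{+}\tau]$ into ranks $r_\ell$ and $\tilde r_\ell$ in $\Oh(n)$ time, and read from $T$ the rank $\bar r_{j_{k_\ell}}$ of each $S$-suffix. To each requested suffix I would associate the tuple $(r_\ell,0,0,\bar r_{j_{k_\ell}})$ in the near case and $(r_\ell,d_\ell,\tilde r_\ell,\bar r_{j_{k_\ell}})$ in the far case, where $d_\ell$ encodes the sign of the first period-breaking mismatch and is scaled so that larger $d_\ell$ corresponds to longer agreement with the periodic extension; the sort is then done by radix sort in $\Oh(b) = \Oh(n)$ time, and finally the compacted trie is assembled from the sorted order using the LCE index to compute the LCP of each pair of lexicographically adjacent chosen suffixes.

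The main obstacle is verifying that the lexicographic order of the chosen suffixes really coincides with the lexicographic order of these tuples, so that the radix sort produces the correct ordering. This reduces to showing that the first component $r_\ell$ resolves all comparisons whose first mismatch lies in $s[i_\ell..i_\ell{+}4\tau]$; that when the mismatch lies inside the $j_{k_\ell}$-anchored synchronized region, $\bar r_{j_{k_\ell}}$ together with the equality of offsets $j_{k_\ell} - i_\ell$ (which equal $r_\ell$ determines in the near case by property~(a)) resolves it via $T$; and that in the far case equal $r_\ell$ forces the same period $p_\ell$, reducing the comparison to one determined by $(d_\ell,\tilde r_\ell)$ and, past $t_\ell$, again to $\bar r_{j_{k_\ell}}$. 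Since this last casework is essentially the verification carried out in~\cite{BirenzwigeEtAl}, I would simply cite it and only argue the parts specific to our variant of $\tau$-partitioning set.
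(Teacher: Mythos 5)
Your proposal follows the paper's own proof essentially step by step: build $S$ via Theorem~\ref{thm:main-theorem}, build $T$ via Lemma~\ref{lem:sst-special}, compute the anchors $j_{k_\ell}$ using the array $N$, split into the near case ($j_{k_\ell} \le i_\ell + 3\tau$) and the far case where period $p_\ell \le \tau/4$ and break point $t_\ell$ are extracted exactly as in the paper, radix-sort the same four-component tuples $(r_\ell, d_\ell, \tilde r_\ell, \bar r_{j_{k_\ell}})$, assemble the trie via the LCE index, and defer the tuple-order-equals-lexicographic-order casework to~\cite{BirenzwigeEtAl}. This is the same approach and, as far as it goes, correct.
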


\section{\boldmath Refinement of Partitioning Sets}
\label{sec:vishkin-process}

In this section we describe a process that takes the trivial partitioning set $[0..n)$ and iteratively refines it in $\lfloor\log\frac{\tau}{2^4\log^{(3)} n}\rfloor$ phases removing some positions so that, after the $k$th phase, the set is $(2^{k+3}\lfloor\log^{(3)} n\rfloor)$-partitioning and has size $\Oh(n / 2^k)$; moreover, it is ``almost'' $2^{k+3}$-partitioning, satisfying properties~(a) and~(b) but not necessarily~(c) (for $\tau = 2^{k+3}$). In particular, the set after the last phase is $\frac{\tau}{2}$-partitioning 
% NOTE: maybe we can remove this reference in case of emergency...?
(and, thus, $\tau$-partitioning by Lemma~\ref{lem:monotone-synch}) 
and has size $\Oh(\frac{n}{\tau} \log^{(3)} n)$. Each phase processes all positions of the currently refined set from left to right and, in an almost online fashion, chooses which of them remain in the set. Rather than performing the phases one after another, which requires $\Oh(n)$ space, we run them simultaneously feeding the positions generated by the $k$th phase to the $(k{+}1)$th phase. Thus, the resulting set is produced in one pass. The set, however, has size $\Oh(\frac{n}{\tau}\log^{(3)} n)$, which is still too large to be stored in $\Oh(n / \tau)$ space; this issue is addressed in Section~\ref{sec:recompression}. Let us elaborate on the details of this process.

Throughout this section, we assume that $\tau \ge 2^5\log^{(3)} n$ 
and, hence, the number of phases is non-zero; the case $\tau < 2^5 \log^{(3)} n$ is addressed in \appendixv{appx:small-tau}{E}. Consider the $k$th phase, for $k \ge 1$. Its input is a set $S_{k-1}$ produced by the $(k{-}1)$th phase; for $k=1$, $S_0 = [0..n)$. Denote by $j_h$ the $h$th position in $S_{k-1}$ (so that $j_1 < \cdots < j_{|S_{k-1}|}$). The phase processes $j_1, j_2, \ldots$ from left to right and decides which of them to put into the new set $S_k \subseteq S_{k-1}$ under construction. The decision for $j_h$ is based on the distances $j_{h} - j_{h-1}$ and $j_{h+1} - j_{h}$, on the substrings $s[j_{h+\ell}..j_{h+\ell}{+}2^k]$ with $\ell \in [-1..4]$, and on certain numbers $v_{h-1}, v_{h}, v_{h+1}$ computed for $j_{h-1}$, $j_{h}$, $j_{h+1}$, which we define below.

For any distinct integers $x, y \ge 0$, denote by $\lbit(x,y)$ the index of the lowest bit in which the bit representations of $x$ and $y$ differ (the lowest bit has index~$0$); e.g., $\lbit(1,0) = 0$, $\lbit(2,8) = 1$, $\lbit(8,0) = 3$. It is well known that $\lbit(x,y)$ can be computed in $\Oh(1)$ time provided $x$ and $y$ occupy $\Oh(1)$ machine words~\cite{Willard}. Denote $\vbit(x,y) = 2\lbit(x,y) + a$, where $a$ is the bit of $x$ with index $\lbit(x,y)$; e.g., $\vbit(8,0) = 7$ and $\vbit(0,8) = 6$. Note that the bit representation of the number $\vbit(x,y)$ is obtained from that of $\lbit(x,y)$ by appending $a$.

Let $w$ be the number of bits in an $\Oh(\log n)$-bit machine word sufficient to represent letters from the alphabet $[0..n^{\Oh(1)}]$ of $s$. For each $j_h$, denote $s_h = \sum_{i=0}^{2^k} s[j_h{+}i] 2^{wi}$. Each number $s_h$ takes $(2^k{+}1)w$ bits and its bit representation coincides with that of the string $s[j_h..j_h{+}2^k]$, when we treat this string as a number stored in memory in the little endian format. The numbers $s_h$ are introduced merely for convenience of the exposition, they are never discerned from their corresponding substrings $s[j_h..j_h{+}2^k]$ in the algorithm. For each $j_h$, define $v'_h = \vbit(s_h, s_{h+1})$ if $j_{h+1} - j_h \le 2^{k-1}$ and $s_h \ne s_{h+1}$, and $v'_h = \infty$ otherwise. Observe that $\lbit(s_h, s_{h+1}) = w\ell + \lbit(s[j_h{+}\ell], s[j_{h+1}{+}\ell])$, where $\ell = \lce(j_h, j_{h+1})$; i.e., $\lbit(s_h, s_{h+1})$ is given by an LCE query in the bit string of length $wn$ obtained from $s$ by substituting each letter with its $w$-bit representation. Define $v''_h = \vbit(v'_h, v'_{h+1}),v'''_h = \vbit(v''_h, v''_{h+1}),v_h = \vbit(v'''_h, v'''_{h+1})$,
%\[
%\begin{aligned}[c]
%v''_h = \vbit(v'_h, v'_{h+1}),\,\,
%v'''_h = \vbit(v''_h, v''_{h+1}),\,\,
%v_h = \vbit(v'''_h, v'''_{h+1}),
%\end{aligned}
%\]
assuming $\vbit(x,y) = \infty$ if either $x = \infty$ or $y = \infty$. % The condition $j_{h+1} - j_h < 2^{k-5}$ in the definition of $v_h$ guarantees that only those numbers $v_h$ are negative that are followed by a ``long gap'' $j_{h+1} - j_{h}$ of length at least $2^{k-5}$.

For each $j_h$, denote by $R(j_h)$ a predicate that is true iff $j_{h+1}{-}j_h \le 2^{k-1}$ and $s_h = s_{h+1}$; to verify whether $R(j_h)$ holds, we always check the former condition first and only then the latter if the former condition is satisfied.

\medskip

\noindent \textbf{Refinement rule.}
\emph{The $k$th phase decides to put a position $j_h$ into $S_k$ either if $\infty > v_{h-1} > v_{h}$ and $v_{h} < v_{h+1}$ (i.e., $v_{h-1} \ne \infty$ and $v_{h}$ is a local minimum of the sequence $v_1,v_2,\ldots$), or in three ``boundary'' cases: (i)~$j_{h+1} - j_h > 2^{k-1}$ or $j_h - j_{h-1} > 2^{k-1}$; (ii)~$R(j_{h-1})$ does not hold while $R(j_h)$, $R(j_{h+1})$, $R(j_{h+2})$ hold; (iii)~$R(j_h)$ holds but $R(j_{h+1})$ does not.}

\medskip

For now, assume that the numbers $\lbit(s_h, s_{h+1})$, required to calculate $v'_h$ and $R(j_h)$, are computed by the na{\"i}ve comparison of $s[j_h..j_h{+}2^k]$ and $s[j_{h+1}..j_{h+1}{+}2^k]$ in $\Oh(2^k)$ time (we will change it later). Thus, the process is well defined. The trick with local minima and $\vbit$ reductions is, in essence, as in the deterministic approach of Cole and Vishkin to locally consistent parsings~\cite{ColeVishkin}. In what follows we derive some properties of this approach in order to prove that the $k$th phase indeed produces a $(2^{k+3}\lfloor\log^{(3)} n\rfloor)$-partitioning set.

% NOTE: we can remove examples with references to fig and move them to the fig caption completely
It is convenient to interpret the $k$th phase as follows (see Fig.~\ref{fig:hills}): the sequence $j_1, j_2, \ldots$ is split into maximal disjoint contiguous regions such that, for any pair of adjacent positions $j_h$ and $j_{h+1}$ inside each region, the distance $j_{h+1} - j_h$ is at most $2^{k-1}$ and $R(j_h) = R(j_{h+1})$. Thus, the regions are of two types: all-$R$ ($\{j_{16}, \ldots, j_{20}\}$ in Fig.~\ref{fig:hills}) and all-non-$R$ ($\{j_{1}, \ldots, j_{15}\}$ or $\{j_{21},\ldots,j_{25}\}$ in Fig.~\ref{fig:hills}). By case~(i), for each long gap $j_{h+1} - j_h > 2^{k-1}$ between regions, we put both $j_h$ and $j_{h+1}$ into $S_k$. In each all-$R$ region, we put into $S_k$ its last position due to case~(iii) and, if the length of the region is at least $3$, its first position by case~(ii). In each all-non-$R$ region, we put into $S_k$ all local minima $v_{h}$ such that $v_{h-1} \ne \infty$. Only all-non-$R$ regions have positions $j_h$ with $v_h \ne \infty$; moreover, as it turns out, only the last three or four their positions $j_h$ have $v_h = \infty$ whereas, for other $j_h$, $v_h \ne \infty$ and $v_h \ne v_{h+1}$. Lemmas~\ref{lem:all-R},~\ref{lem:all-non-R} describe all this formally; their proof is deferred to \appendixv{appx:lcp}{B.1}.

The goal of the fourfold $\vbit$ reduction for $v_h$ is to make $v_h$ small enough so that local minima occur often and, thus, the resulting set $S_k$ is not too sparse. This is the key observation of Cole and Vishkin~\cite{ColeVishkin} and it is stated in Lemma~\ref{lem:v-reduction} and directly follows from the construction of $v_h$ and Lemma~\ref{lem:vishkin}.

\begin{lemma}[{see \cite{ColeVishkin}}]
	Given a string $a_1 a_2 \cdots a_m$ over an alphabet $[0..2^u)$ such that $a_i \ne a_{i+1}$ for any $i \in [1..m)$, the string $b_1 b_2 \cdots b_{m-1}$ such that $b_i = \vbit(a_i, a_{i+1})$, for $i \in [1..m)$, satisfies $b_i \ne b_{i+1}$, for any $i \in [1..m{-}1)$, and $b_i \in [0..2u)$.\label{lem:vishkin}
\end{lemma}
\begin{proof}
Consider $b_i$ and $b_{i+1}$. Denote $\ell = \lbit(a_i, a_{i+1})$ and $\ell' = \lbit(a_{i+1}, a_{i+2})$. As $a_i,a_{i+1} \in [0..2^u)$, we have $\ell \in [0..u)$. Hence, $b_i \le 2\ell + 1 \le 2u - 1$, which proves $b_i \in [0..2u)$. If $b_i = b_{i+1}$, then $\ell = \ell'$ and the bits with indices $\ell$ and $\ell' = \ell$ in $a_i$ and $a_{i+1}$ coincide; on the other hand, by the definition of $\ell = \lbit(a_i, a_{i+1})$, $a_i$ and $a_{i+1}$ must differ in this bit, which is a contradiction.
\end{proof}

\begin{lemma}[{see \cite{ColeVishkin}}]
	For any $v_h \ne \infty$ in the $k$th phase, we have $v_h \in [0..2\log^{(3)} n{+}3)$.\label{lem:v-reduction}
\end{lemma}
\begin{proof}
Since $v'_h \in [0..2nw) = [0..\Oh(n\log n))$, we deduce from Lemma~\ref{lem:vishkin} that $v''_h \in [0..\Oh(\log n))$, $v'''_h \in [0..2\log\log n + \Oh(1))$, and, due to the inequality $\log(x+\delta) \le \log x + \frac{\delta\log e}{x}$, we finally obtain $v_h \in [0..2\log^{(3)} n + 3)$, for sufficiently large $n$.
\end{proof}

The refinement rule implies that, for contiguous regions $j_p, j_{p+1}, \ldots, j_q$ where $R(j_h)$ holds, only $j_p$ and $j_q$ may be in $S_k$ and the period of $s[j_p .. j_q + 2^k]$ is ${\le}2^{k-1}$;  for ``dense'' contiguous regions $j_p, j_{p+1}, \ldots, j_q$ where $R(j_h)$ does not hold, Lemma~\ref{lem:vishkin} ensure frequent local minima. This is summarized in Lemmas~\ref{lem:all-R}, \ref{lem:all-non-R} (the proofs are in \appendixv{appx:lcp}{B.1}).

\begin{restatable}{lemma}{allR}
	Let $j_{p}, j_{p+1}, \ldots, j_{q}$ be a maximal contiguous region of $j_1, j_2, \ldots$ such that, for all $h \in [p..q]$, $R(j_h)$ holds. Then, we have $j_{q} \in S_k$. Further, if $q - p \ge 2$ or $j_p - j_{p-1} > 2^{k-1}$, we have $j_p \in S_k$. All other positions $j_h$ in the region do not belong to $S_k$. The string $s[j_p..j_q{+}2^k]$ has a period at most $2^{k-1}$.\label{lem:all-R}
\end{restatable}

\begin{restatable}{lemma}{allNonR}
	Let $j_{p}, j_{p+1}, \ldots, j_q$ be a maximal contiguous region of $j_1, j_2, \ldots$ such that, for all $h \in [p..q]$, $R(j_h)$ does not hold and, for $h \in [p..q)$, we have $j_{h+1} - j_h \le 2^{k-1}$. Then, $v_h \ne \infty$ for $h \in [p..q{-}4]$, $v_h = \infty$ for $h \in (q{-}3..q]$, and $v_{q-3}$ may be $\infty$ or not. Further, for $h \in [p..q{-}3]$, we have $v_h \ne v_{h+1}$ whenever $v_h \ne \infty$. For $h \in (p..q)$, $j_h \in S_k$ iff $\infty > v_{h-1} > v_{h}$ and $v_{h} < v_{h+1}$; $j_p \in S_k$ iff $j_{p} - j_{p-1} > 2^{k-1}$; $j_q \in S_k$ iff $j_{q+1} - j_q > 2^{k-1}$.\label{lem:all-non-R}
\end{restatable}

\begin{figure}[t]
\center
\includegraphics[width=0.8\textwidth]{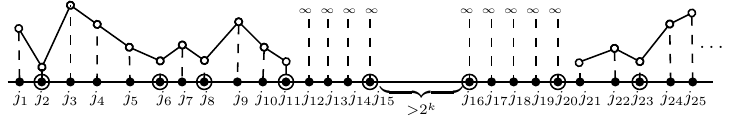}
\caption{The $k$th phase. The heights of the dashed lines over $j_h$ are equal to $v_h$. Encircled positions are put into $S_k$: they are local minima of $v_h$, or are at the ``boundaries'' of all-$R$ regions, or form a gap of length ${>}2^k$. In the figure $R(j_{16}),\ldots, R(j_{20})$ hold and $R(j_{21})$ does not hold.}\label{fig:hills}
\end{figure}

By Lemmas~\ref{lem:all-non-R} and~\ref{lem:v-reduction}, any sequence of $8\log^{(3)} n + 12$ numbers $v_h$ all of which are not $\infty$ contains a local minimum $v_h$ and $j_h$ will be put in $S_k$. Thus, we obtain the following lemma.

\begin{restatable}{lemma}{localDensity}
Let $S_{k-1}$ and $S_k$ be the sets generated by the $(k{-}1)$th and $k$th phases. Then, any range $j_\ell, j_{\ell+1}, \ldots, j_m$ of at least $8\log^{(3)} n + 12$ consecutive positions from $S_{k-1}$ such that $v_h \ne \infty$, for all $h \in [\ell..m]$, has a position from $S_k$.\label{lem:local-density}
\end{restatable}

The following lemma is intuitive but its proof is far from trivial; see \appendixv{appx:refinement-sparsity}{B.2}.
\begin{restatable}{lemma}{localSparsity}
For any $i, i' \in [0..n]$, $|S_k \cap [i..i')| \le 2^6\lceil(i' - i) / 2^{k}\rceil$; in particular, $|S_k| \le n / 2^{k-6}$.\label{lem:local-sparsity}
\end{restatable}

Now we are able to prove that $S_k$ is a $(2^{k+3}\lfloor\log^{(3)} n\rfloor)$-partitioning set and, moreover, it is almost a \mbox{$2^{k+3}$-partitioning} set, in a sense. The proof technique is very similar to the one in~\cite{BirenzwigeEtAl}; for brevity, we defer its detailed proof to \appendixv{appx:refinement-partitioning}{B.2}.

\begin{restatable}{lemma}{partitioningTau}
The $k$th phase generates a $(2^{k+3}\lfloor\log^{(3)} n\rfloor)$-partitioning set $S_k$. Moreover, $S_k$ is almost \mbox{$2^{k+3}$-partitioning:} for $\tau = 2^{k+3}$, it satisfies properties (a) and (b) but not (c), i.e., if $(i..j) \cap S_k = \emptyset$, for $i,j \in S_k$ such that $2^{k+3} < j{-}i \le 2^{k+3}\lfloor\log^{(3)} n\rfloor$, then $s[i..j]$ does not necessarily have period ${\le}2^{k+2}$.\label{lem:2k-partitioning}
\end{restatable}

\section{Speeding up the Refinement Procedure}
\label{sec:time-improvement}

Since, for any $k$, $|S_k| \le n / 2^{k-6}$ by Lemma~\ref{lem:local-sparsity}, it is evident that the algorithm of Section~\ref{sec:vishkin-process} takes $\Oh(|S_0| + |S_1| + \cdots) = \Oh(n)$ time plus the time needed to calculate the numbers $v'_h$, for all positions (from which the numbers $v_h$ are derived). For a given $k \ge 1$, denote by $j_h$ the $h$th position in $S_{k-1}$. For each $j_h$, the number $v'_h$ can be computed by checking whether $j_{h+1} - j_h > 2^{k-1}$ (in this case $v'_h = \infty$), and, if $j_{h+1} - j_h \le 2^{k-1}$, by the na{\"i}ve comparison of $s[j_h..j_h{+}2^k]$ and $s[j_{h+1}..j_{h+1}{+}2^k]$ in $\Oh(2^k)$ time. Thus, all numbers $v'_h$ for the set $S_{k-1}$ can be computed in $\Oh(2^k |S_{k-1}|) = \Oh(n)$ time, which leads to $\Oh(n\log\tau)$ total time for the whole algorithm. This na{\"i}ve approach can be sped up if one can perform the LCE queries that compare $s[j_h..j_h{+}2^k]$ and $s[j_{h+1}..j_{h+1}{+}2^k]$ faster; in fact, if one can do this in $\Oh(1)$ time, the overall time becomes linear. To this end, we exploit the online nature of the procedure. Let us briefly outline the procedure again on a high level.

The algorithm runs simultaneously $\lfloor\log\frac{\tau}{2^4\log^{(3)} n}\rfloor$ phases: the $k$th phase takes positions from the set $S_{k-1}$ produced by the $(k{-}1)$th phase and decides which of them to feed to the $(k{+}1)$th phase, i.e., to put into $S_k$ (the ``top'' phase feeds the positions to an external procedure described in the next section). To make the decision for $j_h \in S_{k-1}$, the $k$th phase needs to know the distance $j_h{-}j_{h-1}$ and the distances $j_{h+\ell}{-}j_h$ to the positions $j_{h+\ell}$ with $\ell \in [1..5]$ such that $j_{h+\ell}{-}j_h \le 5\cdot 2^{k-1}$. Then, the $k$th phase calculates $\min\{2^k{+}1, \lce(j_{h + \ell - 1}, j_{h + \ell})\}$, for all $\ell \in [0..5]$ such that $j_{h + \ell}{-}j_{h + \ell - 1} \le 2^{k-1}$ and $j_{h + \ell}{-}j_h \le 5\cdot 2^{k-1}$, and, based on the distances and the LCE values, computes $v_{h-1}, v_{h}, v_{h+1}$ and decides the fate of $j_h$.

The key for our optimization is the locality of the decision making in the phases that is straightforward for the described process: for any prefix $s[0..d]$, once the positions $S_{k-1} \cap [0..d]$ are known to the $k$th phase, it reports all positions from the set $S_k \cap [0..d{-}5\cdot 2^{k-1}]$ and no position from the set $S_{k-1} \cap [0..d{-}6\cdot 2^{k-1}]$ will be accessed by an LCE query of the $k$th phase in the future. Thus, we can discard all positions $S_{k-1} \cap [0..d{-}6\cdot 2^{k-1}]$ and have to focus only on positions $S_{k-1} \cap (d{-}6\cdot 2^{k-1}..\infty]$ and LCE queries on them in the future.
We deduce from this that after processing the prefix $s[0..d]$ by the whole algorithm, the $k$th phase reports all positions from the set $S_{k} \cap [0..d{-}5\sum_{k'=0}^{k-1} 2^{k'}] \supseteq S_{k} \cap [0..d{-}5\cdot 2^{k}]$ and no LCE query in the $k$th phase accesses positions from the set $S_{k-1} \cap [0..d{-}6\cdot 2^{k}]$ in the future.

This locality of the decision procedure guarantees that, at the time we processed a length-$\ell$ prefix of the string $s$, for some $\ell \ge 0$, all positions from the set $S_k \cap [0..\ell{-}5\cdot 2^k]$ are reported and no position from the set $S_{k-1} \cap [0..\ell{-}5\cdot 2^k]$ will be accessed by an LCE query of the $k$th phase in the future.
Let us summarize this as follows.

\begin{lemma}
Suppose we run the described $\lfloor\log\frac{\tau}{2^4\log^{(3)} n}\rfloor$ phases on a string $s$ of length $n$ from left to right. Then, for any $k \ge 1$ and $d \ge 0$, after processing the prefix $s[0..d]$, the $k$th phase reports all positions from $S_k \cap [0..d{-}5\cdot 2^k]$ to the $(k+1)$th phase and will not perform queries $\lce(j,j')$ on positions $j, j' \in S_{k-1}$ such that $\min\{j,j'\} \le d - 6\cdot 2^k$ in the future.\label{lem:dist-process}
\end{lemma}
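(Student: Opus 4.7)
The plan is to argue by induction on $k \ge 1$, exploiting the strictly local nature of the refinement rule in $S_{k-1}$. Before starting the induction, I would pin down the exact window of dependence by unrolling the fourfold $\vbit$ reduction: $v_h$ depends on $v'''_h, v'''_{h+1}$, hence on $v''_h, v''_{h+1}, v''_{h+2}$, hence on $v'_h, \ldots, v'_{h+3}$, hence on the substrings $s_h, \ldots, s_{h+4}$. Consequently, deciding $j_h$ (which needs $v_{h-1}, v_h, v_{h+1}$ together with the boundary and $R(\cdot)$ predicates) consults at most the positions $j_{h-1}, j_h, \ldots, j_{h+5}$ of $S_{k-1}$, and the LCE on $(j_{h'-1}, j_{h'})$ is issued only under the guard $j_{h'} - j_{h'-1} \le 2^{k-1}$.

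The base case $k = 1$ is immediate once I treat the ``$0$th phase'' as the trivial scanner that exposes $S_0 = [0..n)$ as letters of $s$ are read: the five-position look-ahead of the refinement rule translates into the requirement $h \le d - \Oh(1)$, which sits comfortably inside $h \le d - 5 \cdot 2^1$. For the inductive step I would assume the lemma for $k - 1$, so that after $s[0..d]$ has been consumed the $(k{-}1)$th phase has reported all of $S_{k-1} \cap [0..d - 5 \cdot 2^{k-1}]$ and its future LCE queries avoid positions $\le d - 6 \cdot 2^{k-1}$. The $k$th phase can then decide any $j_h$ whose entire $S_{k-1}$-window $[j_h, j_h + 5 \cdot 2^{k-1}]$ lies in the reported portion: either all five intermediate gaps are $\le 2^{k-1}$ and $j_{h+5}$ itself lies in this window, or some gap $j_{h+\ell+1} - j_{h+\ell} > 2^{k-1}$ is detected (the segment $(j_{h+\ell}, j_{h+\ell} + 2^{k-1}]$ is empty of reported positions), which makes $v'_{h+\ell} = \infty$ and propagates $\infty$ through the cascade. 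In both cases $j_h + 5 \cdot 2^{k-1} \le d - 5 \cdot 2^{k-1}$, i.e., $j_h \le d - 5 \cdot 2^k$, suffices to make $j_h$ decidable.

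For the query bound, any future LCE issued by the $k$th phase arises from a not-yet-decided $j_h$ with $j_h > d - 5 \cdot 2^k$ and has the form $\lce(j_{h'-1}, j_{h'})$ for some $h' \in [h..h+5]$. The guard forces $j_{h'} - j_{h'-1} \le 2^{k-1}$, and since $h' \ge h$ we have $\min\{j_{h'-1}, j_{h'}\} = j_{h'-1} \ge j_{h'} - 2^{k-1} \ge j_h - 2^{k-1} > d - 5 \cdot 2^k - 2^{k-1} = d - 11 \cdot 2^{k-1} > d - 6 \cdot 2^k$, closing the argument. I expect the main obstacle to be the short-circuit property of the $\vbit$ cascade---the observation that a single $v'_\ell = \infty$ is enough to force $v_{h-1}, v_h, v_{h+1}$ to $\infty$ without the $k$th phase having to consult positions beyond $j_{\ell+1}$---since it must be articulated cleanly so that the window argument survives wildly uneven distances in $S_{k-1}$.
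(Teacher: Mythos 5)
Your proof is correct and follows the same route the paper sketches in the paragraph preceding the lemma: read off the seven-position window $j_{h-1},\ldots,j_{h+5}$ from the fourfold $\vbit$ cascade (with $\infty$ short-circuiting at large gaps), use the lemma for $k{-}1$ to conclude $S_{k-1}$ is known up to $d-5\cdot 2^{k-1}$, and hence that every $j_h \le d-5\cdot 2^k$ is decidable, while the guarded LCE queries for undecided $j_h > d - 5\cdot 2^k$ touch only positions $> d-5\cdot 2^k - 2^{k-1} > d-6\cdot 2^k$. The paper phrases the same calculation as a telescoping chain of the local ``$d \mapsto d-5\cdot 2^{k-1}$'' shift across phases $1,\ldots,k$ (giving the slightly tighter $d-5(2^k-1)$ before relaxing to $d-5\cdot 2^k$), whereas you package it as a direct induction on $k$; the arithmetic and the underlying locality argument are the same.
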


Recall that we have $\Oh(\log\tau)$ phases and $\Oh(\log\tau)$ space is always available.
Let us briefly describe main techniques for speeding up the algorithm. Details are given in \appendixv{appx:refinement-speed}{C}.
%Let us briefly describe main techniques for speeding up the algorithm (see details in Appendix~\ref{appx:refinement-speed}).

%Each phase in our scheme uses $\Oh(1)$ space so that the overall space is $\Oh(\log\tau)$, which, as we assumed, is always available. 

Suppose that $\tau < \sqrt{n}$ . We have $b = \Theta(\frac{n}{\tau}) \ge \Omega(\sqrt{n})$ additional space for the algorithm in this case. To answer all required LCE queries in constant time, when the algorithm processes a letter $s[d]$, the LCE data structure~\cite{HarelTarjan} is maintained for the leftmost substring $C_i = s[i\lfloor\sqrt{n}\rfloor..(i + 3)\lfloor\sqrt{n}\rfloor - 1]$ whose middle part contains the position $d$ (i.e. $d \in (i+1)\lfloor\sqrt{n}\rfloor..(i + 2)\lfloor\sqrt{n}\rfloor - 1]$). By Lemma~\ref{lem:dist-process}, we can use the data structure to correctly handle all queries because all LCE queries performed by the algorithm at the step $d$ lie within the substring $C_i$. Since we must build the LCE data structure for every $C_i$ once, the overall running time is $\Oh(n + \sum_i |C_i|) = \Oh(n)$ and the occupied space is $\Oh(\sqrt{n}) = \Oh(b)$.

Let us generalize this idea to the case $\tau \ge \sqrt{n}$. Denote $b = \frac{n}{\tau}$. We have $\Oh(b) < \Oh(\sqrt{n})$ space and cannot use the scheme described above since LCE data structures for substrings of length $\Oh(b)$ are not enough to answer queries of the form $\min\{2^k{+}1, \lce(j, j')\}$ when $2^k > \Omega(b)$. The key idea is to group contiguous phases into ``levels'' and maintain SST for a sliding window of positions in each level (in the case $\tau < \sqrt{n}$ we had a single ``level'' and a sliding window of size $\Oh(\sqrt{n})$). We must choose ``level'' size to be large enough to build less SSTs and fit in the $\Oh(n \log_{b} n)$ running time, but also the ``levels'' must be small to efficiently reduce the number of positions in each level and fit all supporting data structures in the $\Oh(b)$ space. To achieve this, we split evenly all $\lfloor\log\frac{\tau}{2^4\log^{(3)} n}\rfloor$ phases into ``levels'', each containing $\Theta(\log\hat{b})$ phases, where $\hat{b} = \lfloor\frac{b}{\log n}\rfloor$. For each ``level'', we maintain a window of $\Oh(\hat{b})$ positions from $S_k$, where $k$ is the lowest phase in the ``level''; one window spans a substring of length $\Oh(2^k \hat{b})$ and the windows change $\Oh(\frac{n}{2^k\hat{b}})$ times in total. Overall we use $\Oh(\hat{b} \log\hat{b}) = \Oh(b)$ space. By Lemma~\ref{lem:2k-partitioning}, the set $S_k$ is ``almost'' $2^k$-partitioning, so we can build SST for each ``level'' as in Lemma~\ref{lem:sst-special} in time $\Oh(2^k\hat{b} + \min\{2^k\hat{b}\log_{\hat{b}} n, \hat{b}\log\hat{b}\})$, which simplifies to $\Oh(\hat{b} \log_{\hat{b}} n)$ for the first ``level'' and to $\Oh(2^k \hat{b})$ for subsequent ``levels''. Overall we can upperbound the running time with $\Oh(n \log_{b} n)$ for all $\Oh(\log_{b} n)$ ``levels''. Thus, the described routine builds partitioning sets $S_k$ in time $\Oh(n\log_{b} n)$ and space $\Oh(b)$. The described sketch of the algorithm is elaborated in details in \appendixv{appx:refinement-speed}{C}.

\section{Recompression}
\label{sec:recompression}

Let $S$ be the set produced by the last phase of the procedure from Sections~\ref{sec:vishkin-process} and~\ref{sec:time-improvement}. By Lemma~\ref{lem:2k-partitioning}, $S$ is a $\frac{\tau}2$-partitioning set of size $\Oh(\frac{n}{\tau} \log^{(3)} n)$. Throughout this section, we assume that $\tau \ge (\log^{(3)} n)^4$ so that the size of $S$ is at most $\Oh(\frac{n}{(\log^{(3)} n)^3})$; the case $\tau < (\log^{(3)} n)^4$ is discussed in \appendixv{appx:small-tau}{E}. In what follows we describe an algorithm that removes positions from $S$ transforming it into a $\tau$-partitioning set of size $\Oh(n / \tau)$.

Instead of storing $S$ explicitly, which is impossible in $\Oh(n / \tau)$ space, we construct a related-to-$S$ string $R$ of length $\Oh(\frac{n}{\tau} \log^{(3)} n)$ over a small alphabet such that $R$ can be packed into $\Oh(n / \tau)$ machine words. Positions of $S$ are represented, in a way, by letters of $R$. The construction of $R$ is quite intricate, which is necessary in order to guarantee that letters of $R$ corresponding to close positions of $S$ (namely, positions at a distance at most $\tau / 2^5$) are necessarily distinct even if the letters are not adjacent in $R$. This requirement is stronger than the requirement of distinct adjacent letters that was seen, for instance, in Lemma~\ref{lem:vishkin} but it is achieved by similar means using $\vbit$ reductions as in Section~\ref{sec:vishkin-process}. We then apply to $R$ a variant of the iterative process called \emph{recompression}~\cite{Jez4} that removes some letters thus shrinking the length of $R$ to $\Oh(n / \tau)$. Then, the whole procedure of Sections~\ref{sec:vishkin-process}--\ref{sec:time-improvement} that generated $S$ is performed again but this time we discard all positions of $S$ corresponding to removed positions of the string $R$ and store the remaining positions explicitly in a set $S^* \subseteq S$. We show that $S^*$ is $\tau$-partitioning and has size $\Oh(n / \tau)$. Let us elaborate on the details.

%\medskip

The algorithm starts with an empty string $R$ and receives positions of $S$ from left to right appending to the end of $R$ new letters corresponding to the received positions. It is more convenient to describe the algorithm as if it acted in two stages: the first stage produces a $\frac{3}{4}\tau$-partitioning set $S' \subseteq S$, for which a condition converse to property~(c) holds (thus, some positions of $S$ are discarded already in this stage), and the second stage, for each position of $S'$, appends to the end of $R$ a letter of size $\Oh((\log^{(3)} n)^2)$ bits. Both stages act in an almost online fashion and, hence, can be actually executed simultaneously in one pass without the need to store the auxiliary set $S'$. The separation is just for the ease of the exposition.

\subparagraph{\boldmath The first stage.}
%The goal of the construction of the set $S' \subseteq S$ is to get rid of ``close'' positions $i,j \in S$ such that $s[i..i{+}\tau/2] = s[j..j{+}\tau/2]$ and of all positions between such $i$ and $j$. The set $S'$ is defined as the set $S$ from which we exclude all positions $h \in S$ for which there exist $i,j \in S$ such that $i < h \le j$, $j - i \le \tau / 4$, and $s[i..i{+}\tau/2] = s[j..j{+}\tau/2]$. The algorithm generating $S'$ is as follows.
The goal is to construct set $S' \subseteq S$ by excluding from $S$ all positions $h$ for which there exist $i,j \in S$ such that $i < h \le j$, $j - i \le \tau / 4$, and $s[i..i{+}\tau/2] = s[j..j{+}\tau/2]$. The algorithm generating $S'$ is as follows.

We consider all positions of $S$  from left to right and, for each $i \in S$, process every $j \in (i..i{+}\tau/4] \cap S$ by comparing $s[i..i{+}\tau/2]$ with $s[j..j{+}\tau/2]$. If $s[i..i{+}\tau/2] = s[j..j{+}\tau/2]$, then we traverse all positions of the set $(i..j] \cap S$ from right to left marking them for removal until an already marked position is encountered. Since the marking procedure works from right to left, every position is marked at most once. The position $i$ is put into $S'$ iff it was not marked previously. During the whole process, we maintain a ``look-ahead'' queue that stores the positions $(i..i{+}\tau/4] \cap S$ and indicates which of them were marked for removal. %Now let us analyze this procedure.

Due to Lemma~\ref{lem:local-sparsity}, the size of the set $(i..i{+}\tau/4] \cap S$ is $\Oh(\log^{(3)} n)$. Therefore, the look-ahead queue takes $\Oh(\log^{(3)} n)$ space, which is $\Oh(n / \tau)$ since $n / \tau \ge \log^2 n$, and $\Oh(\log^{(3)} n)$ comparisons are performed for each $i$. Hence, if every comparison takes $\Oh(1)$ time, the set $S'$ is constructed in $\Oh(|S|\log^{(3)} n) = \Oh(\frac{n}{\tau}(\log^{(3)} n)^2)$ time, which is $\Oh(n)$ since $\tau \ge (\log^{(3)} n)^4$. Thus, it remains to explain how the comparisons can be performed.

%The key observation is that the described procedure has an online nature. According to this,
Similar to the algorithm of Section~\ref{sec:time-improvement}, we consecutively consider substrings $C'_i = s[i\tau..(i + 3)\tau)$, for $i \in [0..n / \tau - 3]$: when all positions from a set $S \cap [i\tau..(i + 3)\tau)$ are collected, we use the algorithm of Lemma~\ref{lem:sst-special} to build a SST for all suffixes of the string $C'_i$ whose starting positions are from $S$; the tree, endowed with an LCA data structure~\cite{HarelTarjan}, is used in the procedure for deciding which of the positions from the set $S \cap [(i + \frac{1}2)\tau .. (i + \frac{3}2)\tau)$ (or $S \cap [0 .. \frac{3}{2}\tau)$ if $i = 0$) should be marked for removal. Thus, after processing the last string $C'_i$, all positions of $S$ are processed and $S'$ is generated. By Lemma~\ref{lem:local-sparsity}, the number of suffixes in the SST for $C'_i$ is $\Oh(\log^{(3)} n)$ and, therefore, the tree occupies $\Oh(\log^{(3)} n) \le \Oh(n / \tau)$ space and its construction takes $\Oh(\tau + \log^{(3)} n \cdot \log \log^{(3)} n)$ time by Lemma~\ref{lem:sst-special}, which is $\Oh(\tau)$ since $\tau \ge (\log^{(3)} n)^4$. Thus, the total construction time for all the trees in the stage is $\Oh(\frac{n}{\tau}\tau) = \Oh(n)$ and the space used is $\Oh(\log^{(3)} n)$ since, at every moment, at most one tree is maintained.

The following lemma shows that the transformation within the first stage does not break $\tau$-partitioning properties. Its proof is deferred to \appendixv{appx:first-stage}{D.1}.
\begin{restatable}{lemma}{periodicityGap}
%The set $S'$ is $\tau$-partitioning. Also a converse of property~(c) holds for $S'$: if a substring $s[i..j]$ has a period at most $\tau / 4$, then $S' \cap [i{+}\frac{3}{4}\tau .. j{-}\frac{3}{4}\tau] = \emptyset$. Moreover, $S'$ is almost $\frac{3}{4}\tau$-partitioning: it satisfies properties~(a) and~(b) with $\frac{3}{4}\tau$ in place of $\tau$, but does not necessarily satisfy~(c).\label{lem:periodicity-gap}
The set $S'$ is $\tau$-partitioning and satisfies a converse of property~(c): if a substring $s[i..j]$ has a period at most $\tau / 4$, then $S' \cap [i{+}\frac{3}{4}\tau .. j{-}\frac{3}{4}\tau] = \emptyset$. Moreover, $S'$ is almost $\frac{3}{4}\tau$-partitioning, meeting properties~(a) and~(b) with $\frac{3}{4}\tau$ in place of $\tau$, but not necessarily~(c).\label{lem:periodicity-gap}
\end{restatable}

\subparagraph{The second stage.}
We consider all positions of $S'$ from left to right and, for each $p \in S'$, append to the end of the (initially empty) string $R$ a new carefully constructed letter $a_p$ occupying $\Oh((\log^{(3)} n)^2)$ bits. Thus, the string $R$ will have length $|S'|$ and will take $\Oh(|S'|(\log^{(3)} n)^2) = \Oh(\frac{n}{\tau} (\log^{(3)} n)^3)$ bits of space, which can be stored into $\Oh(n / \tau)$ machine words of size $\Oh(\log n)$ bits. The crucial property of $R$ for us is that any two letters of $R$ corresponding to close positions of $S'$ are distinct, namely the following lemma will be proved:

\begin{lemma}
For any $p, \bar{p} \in S'$, if $0 < \bar{p} - p \le \tau / 2^5$, then $a_p \ne a_{\bar{p}}$.\label{lem:r-letters-equality}
\end{lemma}

Consider $p \in S'$. We are to describe an algorithm generating an $\Oh((\log^{(3)} n)^2)$-bit letter $a_p$ for $p$ that will be appended to the string $R$.

%Let $m = |S' \cap (p..p{+}\tau/2^5]|$. 
Denote by $p_1, p_2, \ldots, p_{m}$ all positions of $S' \cap (p..p{+}\tau/2^5]$ in the increasing order. By Lemma~\ref{lem:local-sparsity}, $m \le \Oh(\log^{(3)} n)$ and, hence, there is enough space to store them. By construction, $s[p..p{+}\frac{\tau}{2}] \ne s[p_j..p_j{+}\frac{\tau}{2}]$, for each $j \in [1..m]$. One can compute the longest common prefix of $s[p..p{+}\frac{\tau}{2}]$ and $s[p_j..p_j{+}\frac{\tau}{2}]$, for any $j \in [1..m]$, in $\Oh(1)$ time using a SST with an LCA data structure~\cite{HarelTarjan} built in the first stage for a substring $C'_i = s[i\tau..(i + 3)\tau - 1]$ such that $p \in [i\tau..(i+\frac{3}{2})\tau)$. 
%(In the simultaneous run of the stages, we handle $p$, which was reported by the first stage after processing $C'_i$, only when $C'_{i+1}$ was processed too in order to have $p_1, p_2, \ldots, p_m$ prepared; thus, the first stage maintains two SSTs: one for a substring $C'_{i+1}$ currently under analysis and one for $C'_{i}$, retained for its use in the second stage.)
(In order to have $p_1, p_2, \ldots, p_m$ prepared, we handle $p$, which was reported by the first stage after processing $C'_i$, only when $C'_{i+1}$ was processed too; thus, the first stage maintains two SSTs: one for a substring $C'_{i+1}$ currently under analysis and one for $C'_{i}$, retained for its use in the second stage.)

Denote $\ell = 2^6\lceil\log^{(3)} n\rceil$. Recall that $S$ is produced by the $k$th phase of the procedure of Section~\ref{sec:vishkin-process}, for $k = \lfloor\log\frac{\tau}{2^4\log^{(3)} n}\rfloor$, and hence, by Lemma~\ref{lem:local-sparsity}, the size of any set $S \cap [i..j]$, for $i \le j$, is at most $2^6\lceil (j - i + 1) / 2^{k}\rceil$. Therefore, since $S' \subseteq S$ and $m$ is the size of the set $S' \cap (p..p{+}\tau/2^5]$, we obtain $m \le 2^6 (\tau / 2^5) / \frac{\tau}{2\cdot 2^4\log^{(3)} n} \le \ell$.

Let $w$ be the number of bits in an $\Oh(\log n)$-bit machine word sufficient to represent letters from the alphabet $[0..n^{\Oh(1)}]$ of $s$. For each $p_j$, denote $t_j = \sum_{i=0}^{\tau/2} s[p_j{+}i] 2^{wi}$; similarly, for $p$, denote $t = \sum_{i=0}^{\tau/2} s[p{+}i] 2^{wi}$. As in an analogous discussion in Section~\ref{sec:vishkin-process}, we do not discern the numbers $t_j$ and $t$ from their corresponding substrings in $s$ and use them merely in the analysis. The intuition behind our construction is that the numbers $t, t_1, t_2, \ldots, t_m$, in principle, could have been used for the string $R$ as letters corresponding to the positions $p, p_1, p_2, \ldots, p_m$ since $t, t_1, t_2, \ldots, t_m$ are pairwise distinct (due to the definition of $S'$) but, unfortunately, they occupy too much space ($\Oh(w\tau)$ bits each). One has to reduce the space for the letters retaining the property of distinctness. The tool capable to achieve this was already developed in Section~\ref{sec:vishkin-process}: it is the $\vbit$ reduction, a trick from Cole and Vishkin's deterministic locally consistent parsing~\cite{ColeVishkin}.

We first generate for $p$ a tuple of $\ell$ numbers $\langle w'_1, w'_2, \ldots, w'_\ell\rangle$: for $j \in [1..\ell]$, $w'_j = \vbit(t, t_j)$ if $j \le m$, and $w'_j = \infty$ otherwise. Since the longest common prefix of substrings $s[p..p{+}\frac{\tau}{2}]$ and $s[p_j..p_j{+}\frac{\tau}{2}]$, for $j \in [1..m]$, can be calculated in $\Oh(1)$ time, the computation of the tuple takes $\Oh(\ell) = \Oh(\log^{(3)} n)$ time. By Lemma~\ref{lem:vishkin}, each number $w'_j$ occupies less than $\lceil\log w + \log\tau + 1\rceil$ bits. Thus, we can pack the whole tuple into $\ell\lceil\log w + \log\tau + 1\rceil$ bits encoding each value $w'_j$ into $\lceil\log w + \log\tau + 1\rceil$ bits and representing $\infty$ by setting all bits to~$1$. We denote this chunk of $\ell\lceil\log w + \log\tau + 1\rceil$ bits by $\bar{t}$. In the same way, for each $p_i$ with $i \in [1..m]$, we generate a tuple $\langle w'_{i,1}, w'_{i,2}, \ldots, w'_{i,\ell}\rangle$ comparing $s[p_i..p_i{+}\tau/2]$ to $s[q..q{+}\tau/2]$, for each $q \in S' \cap (p_i..p_i{+}\tau/2^5]$, and using the $\vbit$ reduction; the tuple is packed into a chunk $\bar{t}_i$ of $\ell\lceil\log w + \log\tau + 1\rceil$ bits. See Figure~\ref{fig:w-scheme}. For each $j \in [1..m]$, the number $w'_j$ is not equal to $\infty$ and, thus, due to Lemma~\ref{lem:vishkin}, differs from the number $w'_{j,j}$ (the $j$th element of the tuple $\langle w'_{i,1}, w'_{i,2}, \ldots, w'_{i,\ell}\rangle$). Therefore, all the tuples---and, hence, their corresponding numbers $\bar{t}, \bar{t}_1, \bar{t}_2, \ldots, \bar{t}_m$---are pairwise distinct.

\newcommand{\rn}[2]{%% "rn": "remember node"
    \tikz[remember picture,baseline=(#1.base)]\node [inner sep=0] (#1) {$#2$};%
}

\begin{figure}[t]
\begin{equation*}
\begingroup
\renewcommand*{\arraystretch}{1.5}
\begin{matrix}
\rn{m11}{t}               & \rn{m12}{t_1}                 & \rn{m13}{t_2}                        & \ldots  & \rn{m14}{t_m}              & \rn{m15}{t_{m+1}}              & \rn{m16}{t_{m+2}} & \rn{m17}{t_{m+3}}  &
\rn{m18}{t_{m+4}} & \rn{m19}{t_{m+5}}
\\[5pt]
\rn{m21}{\bar{t}}        & \rn{m22}{\bar{t}_1}             & \rn{m23}{\bar{t}_2}                 & \ldots  & \rn{m24}{\bar{t}_m}         & \rn{m25}{\bar{t}_{m+1}}       & \rn{m26}{\bar{t}_{m+2}}       & \rn{m27}{\bar{t}_{m+3}}
\\[5pt]
\rn{m31}{\bar{\bar{t}}}  & \rn{m32}{\bar{\bar{t}}_1}      & \rn{m33}{\bar{\bar{t}}_2}            & \ldots  & \rn{m34}{\bar{\bar{t}}_m}    & \rn{m35}{\bar{\bar{t}}_{m+1}}
\\[5pt]
\rn{m41}{\bar{\bar{\bar{t}}}}  & \rn{m42}{\bar{\bar{\bar{t}}}_1} & \rn{m43}{\bar{\bar{\bar{t}}}_2} & \ldots & \rn{m44}{\bar{\bar{\bar{t}}}_m}
\\[5pt]
\rn{m51}{a_p}
\end{matrix}
\endgroup
\begin{tikzpicture}[overlay,remember picture]
\draw [->] (m11) -- (m21);\draw [->] (m21) -- (m31);\draw [->] (m31) -- (m41);\draw [->] (m41) -- (m51);
\draw [->] (m12) -- (m22);\draw [->] (m22) -- (m32);\draw [->] (m32) -- (m42);
\draw [->] (m13) -- (m23);\draw [->] (m23) -- (m33);\draw [->] (m33) -- (m43);
\draw [->] (m14) -- (m24);\draw [->] (m24) -- (m34);\draw [->] (m34) -- (m44);
\draw [->] (m15) -- (m25);\draw [->] (m25) -- (m35);
\draw [->] (m16) -- (m26);
\draw [->] (m17) -- (m27);
\draw [->] (m12) -- (m21);\draw [->] (m13) -- (m21);\draw [->] (m14) -- (m21);
\draw [->] (m13) -- (m22);\draw [->] (m14) -- (m22);
\draw [->] (m14) -- (m23);\draw [->] (m15) -- (m23);
\draw [->] (m15) -- (m24);\draw [->] (m15) -- (m24);
\draw [->] (m16) -- (m25);\draw [->] (m17) -- (m25);
\draw [->] (m17) -- (m26);
\draw [->] (m18) -- (m27);\draw [->] (m19) -- (m27);%\draw [->] (m110) -- (m27);\draw [->] (m111) -- (m27);
\draw [->] (m22) -- (m31);\draw [->] (m23) -- (m31);\draw [->] (m24) -- (m31);
\draw [->] (m23) -- (m32);\draw [->] (m24) -- (m32);
\draw [->] (m24) -- (m33);\draw [->] (m25) -- (m33);
\draw [->] (m25) -- (m34);\draw [->] (m25) -- (m34);
\draw [->] (m26) -- (m35);\draw [->] (m27) -- (m35);
\draw [->] (m32) -- (m41);\draw [->] (m33) -- (m41);\draw [->] (m34) -- (m41);
\draw [->] (m33) -- (m42);\draw [->] (m34) -- (m42);
\draw [->] (m34) -- (m43);\draw [->] (m35) -- (m43);
\draw [->] (m35) -- (m44);\draw [->] (m35) -- (m44);
\draw [->] (m42) -- (m51);
\draw [->] (m43) -- (m51);
\draw [->] (m44) -- (m51);
\end{tikzpicture}
\end{equation*}
\caption{The scheme generating $a_p$ via $\vbit$ reductions. If a node $\hat{t}$ has ingoing edges labeled with $\tilde{t}, \tilde{t}_1, \tilde{t}_2, \ldots, \tilde{t}_r$ (from left to right), then $\hat{t}$ encodes a tuple $\langle \tilde{w}_1, \tilde{w}_2, \ldots, \tilde{w}_\ell \rangle$ such that, for $j \in [1..r]$, $\tilde{w}_j = \vbit(\tilde{t}, \tilde{t}_j)$ and, for $j \in (r..\ell]$, $\tilde{w}_j = \infty$. In the figure, the numbers $t, t_1, t_2, \ldots, t_{m+5}$ correspond to consecutive positions $p, p_1, p_2, \ldots, p_{m+5}$ in the set $S'$, respectively. By looking at which of the ingoing edges are present and which are not, one can deduce that here we have $S' \cap (p..p{+}\tau/2^5] = \{p_1, \ldots, p_m\}$, $S' \cap (p_1..p_1{+}\tau/2^5] = \{p_2, \ldots, p_m\}$, $S' \cap (p_2..p_2{+}\tau/2^5] = \{p_3, \ldots, p_m, p_{m+1}\}$, $S' \cap (p_m..p_m{+}\tau/2^5] = \{p_{m+1}\}$, $S' \cap (p_{m+1}..p_{m+1}{+}\tau/2^5] = \{p_{m+2}, p_{m+3}\}$, $S' \cap (p_{m+2}..p_{m+2}{+}\tau/2^5] = \{p_{m+3}\}$, $S' \cap (p_{m+3}..p_{m+3}{+}\tau/2^5] = \{p_{m+4}, p_{m+5}\}$.}
\label{fig:w-scheme}
\end{figure}
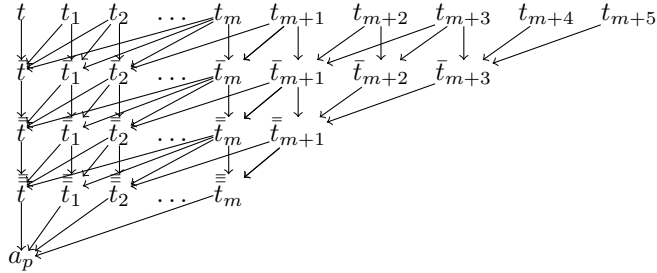

The numbers $\bar{t}, \bar{t}_1, \bar{t}_2, \ldots, \bar{t}_m$, like the numbers $t, t_1, t_2, \ldots, t_m$, could have been used, in principle, as letters for the string $R$ but they still are too large. We therefore repeat the same $\vbit$ reduction but now for the numbers $\bar{t}, \bar{t}_1, \bar{t}_2, \ldots, \bar{t}_m$ in place of $t, t_1, t_2, \ldots, t_m$ thus generating a tuple $\langle w''_1, w''_2, \ldots, w''_\ell\rangle$: for $j \in [1..\ell]$, $w''_j = \vbit(\bar{t}, \bar{t}_j)$ if $j \le m$, and $w''_j = \infty$ otherwise. The computation of $\vbit(\bar{t}, \bar{t}_j)$ takes $\Oh(\ell)$ time since $\bar{t}$ occupies $\ell$ machine words of size $\Oh(\log n)$ bits. It follows from Lemma~\ref{lem:vishkin} that the tuple $\langle w''_1, w''_2, \ldots, w''_\ell\rangle$ can be packed into a chunk $\bar{\bar{t}}$ of $\ell\lceil\log\ell + \log\lceil\log w + \log\tau + 1\rceil + 1\rceil$ bits (i.e., $\Oh(\log^{(3)} n \cdot \log \log n)$ bits), which already fits into one machine word. We perform analogous reductions for the positions $p_1, p_2, \ldots, p_m$ generating $m$ tuples $\langle w''_{i,1}, w''_{i,2}, \ldots, w''_{i,\ell}\rangle$, for $i\in [1..m]$, packed into new chunks $\bar{\bar{t}}_1, \bar{\bar{t}}_2, \ldots, \bar{\bar{t}}_m$, respectively. Note that, in order to produce a tuple $\langle w''_{i,1}, w''_{i,2}, \ldots, w''_{i,\ell}\rangle$, for $i \in [1..m]$, that is packed into $\bar{\bar{t}}_i$, we use not only the numbers $\bar{t}_i, \bar{t}_{i+1}, \ldots, \bar{t}_m$ corresponding to positions $p_i, p_{i+1}, \ldots, p_m$ but also similarly computed numbers at other positions from $S' \cap (p_i..p_i{+}\tau/2^5]$, if any. See Figure~\ref{fig:w-scheme} for a clarification: it can be seen that the ``top'' numbers include not only $t, t_1, \ldots, t_m$ precisely because of this.

By the same argument that proved the distinctness of $\bar{t}, \bar{t}_1, \bar{t}_2, \ldots, \bar{t}_m$, one can easily show that $\bar{\bar{t}}, \bar{\bar{t}}_1, \bar{\bar{t}}_2, \ldots, \bar{\bar{t}}_m$ are pairwise distinct. But they are still too large to be used as letters of $R$. Then again, we repeat the same reductions at positions $p, p_1, p_2, \ldots, p_m$ but now for the numbers $\bar{\bar{t}}, \bar{\bar{t}}_1, \bar{\bar{t}}_2, \ldots, \bar{\bar{t}}_m$ in place of $\bar{t}, \bar{t}_1, \bar{t}_2, \ldots, \bar{t}_m$, thus generating new chunks $\bar{\bar{\bar{t}}}, \bar{\bar{\bar{t}}}_1, \bar{\bar{\bar{t}}}_2, \ldots, \bar{\bar{\bar{t}}}_m$. Finally, once more, we do the $\vbit$ reduction for $\bar{\bar{\bar{t}}}, \bar{\bar{\bar{t}}}_1, \bar{\bar{\bar{t}}}_2, \ldots, \bar{\bar{\bar{t}}}_m$ generating a tuple $\langle w_1, w_2, \ldots, w_\ell\rangle$ such that, for $j \in [1..\ell]$, $w_j $ is $\vbit(\bar{\bar{\bar{t}}}, \bar{\bar{\bar{t}}}_j)$ if $j \le m$, and $\infty$ otherwise.

Using the same reasoning as in the proof of Lemma~\ref{lem:v-reduction}, one can deduce from Lemma~\ref{lem:vishkin} that the tuple $\langle w_1, w_2, \ldots, w_\ell\rangle$ fits into a chunk of $\ell \cdot 2 \log \log^{(3)} n \le 2^6 \lceil\log^{(3)} n\rceil^2$ bits (the inequality holds provided $n > 2^{16}$) encoding each value $w_j$ into $\lceil\log^{(3)} n\rceil$ bits and representing $\infty$ by setting all $\lceil\log^{(3)} n\rceil$ bits to~$1$. Denote by $a_p$ this chunk of $2^6\lceil\log^{(3)} n\rceil^2$ bits that encodes the tuple. We treat $a_p$ as a new letter of $R$ that corresponds to the position $p$ and we append $a_p$ to the end of $R$. Lemma~\ref{lem:r-letters-equality} follows then straightforwardly by construction.

Given $p \in S'$, the calculation of the numbers $\bar{t}, \bar{\bar{\bar{t}}}, a_p$ takes $\Oh(\ell^2)$ time. The calculation of $\bar{\bar{t}}$ requires $\Oh(\ell^3)$ time since each reduction $\vbit(\bar{t}, \bar{t}_j)$ for it takes $\Oh(\ell)$ time. Hence, the total time for the construction of $R$ is $\Oh(|S'|\ell^3) = \Oh(\frac{n}{\tau} (\log^{(3)} n)^4)$, which is $\Oh(n)$ as $\tau \ge (\log^{(3)} n)^4$.

% TODO: add reference to Appendix

\newcommand{\LB}{6}
\newcommand{\LBB}{5}
\newcommand{\BB}{11}
\newcommand{\BBB}{10}
\newcommand{\LBplusBBB}{16}

\subparagraph{Recompression.}
If the distance between any pair of adjacent positions of $S'$ is at least $\tau / 2^{\LB}$, then $|S'| \le 2^{\LB} n / \tau$ and, by Lemma~\ref{lem:periodicity-gap}, $S'$ can be used as the resulting $\tau$-partitioning set of size $\Oh(n / \tau)$. Unfortunately, in general, this is not the case and we have to ``sparsify'' $S'$.

There is a one-to-one correspondence between $S'$ and positions of $R$. Using a technique of Je{\.z}~\cite{Jez4} called \emph{recompression} 
%(see also~\cite{I})
, we can remove in $\Oh(|R|)$ time some letters of $R$ reducing by a fraction $\frac{4}{3}$ the number of pairs of adjacent letters $R[i], R[i{+}1]$ whose corresponding positions in $S'$ are at a distance at most $\tau / 2^{\LB}$.
%(see below how the information about distances is stored). 
We perform such reductions until the length of $R$ becomes at most $2^{14} \cdot n / \tau$. The positions of $S'$ corresponding to remaining letters will constitute a $\tau$-partitioning set of size $\Oh(n / \tau)$. In order to guarantee that this subset of $S'$ is $\tau$-partitioning, we have to execute the recompression reductions gradually increasing the distances that are of interest for us: first, we get rid of adjacent pairs with distances at most $\tau / \log^{(3)} n$ between them, then the threshold is increased to $2\tau / \log^{(3)} n$, then $2^2\tau / \log^{(3)} n$, and so on until (most) adjacent pairs with distances at most $2^{\log^{(4)} n - \LB} \tau / \log^{(3)} n = \tau / 2^{\LB}$ between them are removed in last recompression reductions. The details follow.

%\medskip

Since it is impossible to store in $\Oh(n / \tau)$ space the precise distances between positions of $S'$, the information about distances needed for recompression is encoded as follows. For each $i \in [0..|R|)$ and a position $p \in S'$ corresponding to the letter $R[i]$, we store an array of numbers $M_i[0..\lceil\log^{(4)} n\rceil]$ such that, for $j \in [0..\lceil\log^{(4)} n\rceil]$, $M_i[j]$ is equal to the size of the set $S' \cap (p..p{+}\tau / 2^j]$. By Lemma~\ref{lem:local-sparsity}, we have $|S' \cap (p..p{+}\tau]| \le \Oh(\log^{(3)} n)$ and, hence, each number $M_i[j]$ occupies $\Oh(\log^{(4)} n)$ bits. Therefore, all the arrays $M_i$ can be stored in $\Oh(|R|(\log^{(4)} n)^2) \le \Oh(\frac{n}{\tau}\log^{(3)} n \cdot (\log^{(4)} n)^2)$ bits, which fits into $\Oh(\frac{n}{\tau})$ machine words of size $\Oh(\log n)$ bits. All arrays $M_i$ are constructed in a straightforward way in $\Oh(|R|\log^{(3)} n) = \Oh(\frac{n}{\tau}(\log^{(3)} n)^2)$ time (which is $\Oh(n)$ since $\tau \ge (\log^{(3)} n)^4$) during the left-to-right pass over $S'$ that generated the string $R$.

Our algorithm consecutively considers all numbers $j \in [\LB..\lceil\log^{(4)} n\rceil]$ in decreasing order, starting from $j = \lceil\log^{(4)} n\rceil$. For each $j$, it iteratively performs a recompression procedure reducing the number of adjacent letters $R[i], R[i{+}1]$ whose corresponding positions from $S'$ are at a distance at most $\tau / 2^j$, until $R$ shrinks to a length at most $2^{j+\BBB}{\cdot}\frac{n}{\tau}$.
%Then, the algorithm analogously processes $j - 1, j - 2, \ldots$.
Thus, $|R| \le 2^{\LBplusBBB}{\cdot}\frac{n}{\tau}$ after last recompression reductions for $j = \LB$. Let us describe the recompression procedure.

Fix $j \in [\LB..\lceil\log^{(4)} n\rceil]$. To preserve property~(c) of the $\tau$-partitioning set $S'$ during the sparsifications, we impose an additional restriction: a letter $R[i]$ cannot be removed if either $i = 0$ or the distance between the position $p \in S'$ corresponding to $R[i]$ and the predecessor of $p$ in $S'$ is larger than $\tau / 2^5$, i.e., if $M_{i-1}[5] = 0$. The rationale is as follows: the position $p$ might be the right boundary of a gap in $S'$ of length $>\tau$ and it is dangerous to break the gap since, once $p$ is removed, the gap might not satisfy property~(c) (the range of the string $s$ corresponding to the gap should have a period that is at most $\tau/4$).

 %(the condition will play a role in the proof of Lemma~\ref{lem:main-lemma} below).

The processing of the number $j$ starts with checking whether $|R| \le 2^{j+\BBB} \cdot \frac{n}{\tau}$. If so, we skip the processing of $j$ and move to $j - 1$ (provided $j > \LB$). Suppose that $|R| > 2^{j+\BBB} \cdot \frac{n}{\tau}$. Denote $\sigma = 2^{2^6\lceil\log^{(3)} n\rceil^2}$, the size of the alphabet $[0..\sigma)$ of $R$.  Then, the algorithm creates an array $P[0..\sigma{-}1][0..\sigma{-}1]$ filled with zeros, which occupies $\Oh(\sigma^2) = \Oh(2^{2^7(\log^{(3)} n)^2}) = o(\log n)$ space, and collects in $P$ statistics on pairs of adjacent letters of $R$ whose corresponding positions in $S'$ are at a distance at most $\tau / 2^j$ and whose first letter may be removed: namely, we traverse all $i \in [1..|R|)$ and, if $M_i[j] \ne 0$ and $M_{i-1}[5] \ne 0$, then we increase by one the number $P[R[i]][R[i{+}1]]$. By Lemma~\ref{lem:r-letters-equality}, $R[i] \ne R[i+1]$ when $M_i[j] \ne 0$.

The core tool of the recompression technique proposed by Je{\.z}~\cite{Jez4} is an algorithm for multidigraph without self-loops $G = (V, E)$ that constructs a directed cut of size at least $\lceil \frac{|E|}{4} \rceil$ edges in time $\Oh(|V|^2)$ if the graph is given by an adjacency matrix. If we interpret $P$ as an adjacency matrix, we can use Je{\.z}'s technique (there are no self-loops because $R[i] \ne R[i + 1]$ when $M_i[j] \ne 0$ due to Lemma~\ref{lem:r-letters-equality}) and split the alphabet into two disjoint subsets correspoding to the cut: $[0..\sigma) = \acute{\Sigma} \sqcup \grave{\Sigma}$. After that we mark for removal from $R$ all indices $i \in [1..|R|{-}1)$ for which the following conditions hold: $M_i[j] \ne 0$, $M_{i-1}[5] \ne 0$, $R[i] \in \acute{\Sigma}$, and $R[i{+}1] \in \grave{\Sigma}$. Once the sets $\acute{\Sigma}$ and $\grave{\Sigma}$ are computed in time $\Oh(\sigma^2) = o(\log^2 n)$, the marking takes $\Oh(|R|)$ time and can be organized using a bit array of length $|R|$.

After the marking step we update values in all arrays $M_i$ according to removal marks in one right to left pass: for each $i \in [0..|R|)$ and $j'\in [0..\lceil\log^{(4)} n\rceil]$, the new value for $M_i[j']$ is the number of indices $i + 1, i + 2, \ldots, i + M_i[j']$ that were not marked for removal, i.e., $M_i[j']$ is the number of positions in the set $S' \cap (p..p{+}\tau/2^{j'}]$ whose corresponding letters $R[i']$ will remain in $R$, where $p \in S'$ is the position corresponding to $R[i]$. Since $M_i[j'] \le M_{i+1}[j'] + 1$, for $i \in [0..|R|{-}1)$, the pass updating $M$ can be executed in $\Oh(|R|\log^{(4)} n)$ time.

Finally, we delete letters $R[i]$ and arrays $M_i$, for all indices $i$ marked for removal, thus shrinking the length of $R$ and the storage for $M_i$. We call this procedure, which marks letters of $R$ and removes them and their corresponding arrays $M_i$, the recompression. One recompression iteration takes $\Oh(|R|\log^{(4)} n)$ time, where $|R|$ is the length of $R$ before shrinking.

The next lemma states that the recompression shrinks the string $R$ by a constant factor. 
%The proof repeats an argument from~\cite{Jez4} and~\cite[Lemma 7]{I} and is given in \appendixv{appx:recompression-shrink}{D.2}.
\begin{lemma}
If, for $j \in [\LB..\lceil\log^{(4)} n\rceil]$, before the recompression procedure there were $d$ non-zero numbers $M_i[j]$ with $i \in [1..|R|)$ such that $M_{i-1}[5] \ne 0$, then the arrays $M_i$ modified by the procedure, for all $i$ corresponding to unremoved positions of $R$, contain at most $\frac{3}{4}d$ non-zero numbers $M_i[j]$ such that $M_{i-1}[5]{\ne}0$.\label{lem:recompression}%
\begin{proof}
The proof repeats an argument from~\cite{Jez4} and~\cite[Lemma 7]{I}. Consider an undirected weighted graph $G$ corresponding to the digraph encoded in the adjacency matrix $P$. By construction of $P$, we have $d = \sum_{a\ne b} P[a][b]$, which follows from Lemma~\ref{lem:r-letters-equality} that guarantees $R[i] \ne R[i+1]$ when $M_i[j] \ne 0$.
Thus, $d$ is the sum of weights of all edges in $G$.
Putting a letter $a$ into either $\acute{\Sigma}$ or $\grave{\Sigma}$, we add to the cut at least half of the total weight of all edges connecting $a$ to the letters $0,1,\ldots,a{-}1$.
Therefore, the cut of $G$ induced by $\acute{\Sigma}$ and $\grave{\Sigma}$ has a weight at least $\frac{1}2 d$. The edges in the cut might be directed both from $\acute{\Sigma}$ to $\grave{\Sigma}$ and in the other direction.
Switching $\acute{\Sigma}$ and $\grave{\Sigma}$, if needed, we ensure that the direction from $\acute{\Sigma}$ to $\grave{\Sigma}$ has a maximal total weight, which is obviously at least $\frac{1}4 d$. According to this cut, we mark for removal from $R$ at least $\frac{1}{4} d$ letters $R[i]$ such that $M_i[j] \ne 0$. Hence, the number of non-zero values $M_i[j]$ such that $M_{i-1}[5] \ne 0$ is reduced by $\frac{1}{4} d$, which gives the result of the lemma since new non-zero values could not appear after the deletions.
\end{proof}
\end{lemma}

%After the recompression for a given $j \in [\LB..\lceil\log^{(4)} n\rceil]$, each letter $R[i]$ of the resulting string $R$ corresponds to a position $p$ from $S'$.
Suppose, for a fixed $j \in [\LB..\lceil\log^{(4)} n\rceil]$, the algorithm has performed one iteration of the recompression. Denote by $S''$ the set of all positions from $S'$ that ``survived'' the recompression for $j \in [\LB..\lceil\log^{(4)} n\rceil]$ and, thus, have a corresponding letter in the updated string $R$. There is a one-to-one correspondence between $S''$ and letters of $R$. For each $i \in [0..|R|)$ and $j' \in [0..\lceil\log^{(4)} n\rceil]$, the number $M_i[j']$ in the modified arrays $M_i$ is the size of the set $S'' \cap (p..p{+}\tau/2^{j'}]$, for a position $p \in S''$ corresponding to $i$. We therefore can again apply the recompression procedure thus further shrinking the length of $R$. The algorithm first again checks whether $|R| > 2^{j+\BBB} \cdot \frac{n}{\tau}$ and, if so, repeats the recompression. For the given fixed $j$, we do this iteratively until $|R| \le 2^{j+\BBB} \cdot \frac{n}{\tau}$. During this process, the number of zero values $M_i[j]$ in the arrays $M_i$ is always at most $2^j \cdot \frac{n}{\tau}$ since the equality $M_i[j] = 0$ implies that $S''' \cap (p..p{+}\tau/2^j] = \emptyset$, for a set $S'''\subseteq S'$ of size $|R|$ defined by analogy to the definition of $S''$ and for a position $p \in S'''$ corresponding to $i$. Therefore, due to Lemma~\ref{lem:recompression}, the condition $|R| \le 2^{j+\BBB} \cdot \frac{n}{\tau}$ eventually should be satisfied. Furthermore, as we are to show, for each $j$, the condition $|R| \le 2^{j+\BBB}\cdot \frac{n}{\tau}$ holds after at most three recompression iterations.

Given $j \in [\LB..\lceil\log^{(4)} n\rceil)$, the length of $R$ before the first iteration of the recompression for $j$ is at most $2^{j+\BB} \cdot \frac{n}{\tau}$ since this is a condition under which shrinking iterations stopped for $j + 1$. The same bound holds for $j = \lceil\log^{(4)} n\rceil$: the initial length of $R$ is at most $2^{\BB} \cdot\frac{n}{\tau} \log^{(3)} n$ (which is upper-bounded by $2^{j+\BB} \cdot\frac{n}{\tau}$) since $S' \subseteq S$ and $S$ is produced by the $k$th phase of the procedure of Section~\ref{sec:vishkin-process}, for $k = \lfloor\log\frac{\tau}{2^4\log^{(3)} n}\rfloor$, so that the size of $S$, by Lemma~\ref{lem:local-sparsity}, is at most $2^6\lceil n / 2^{k}\rceil \le 2^6 n / \frac{\tau}{2\cdot 2^4\log^{(3)} n} = 2^{11} \frac{n}{\tau} \log^{(3)} n$. Fix $j \in [\LB..\lceil\log^{(4)} n\rceil]$. Since the number of zero values $M_i[j]$ is always at most $2^j \cdot n / \tau$ and the number of zero values $M_{i-1}[5] = 0$ is at most $2^5\cdot \frac{n}{\tau}$, three iterations of the recompression for $j$ performed on a string $R$ with initial length $r$ shrink the length of $R$ to a length at most $(\frac{3}{4})^3 r + 2^j \cdot \frac{n}{\tau} + 2^5\cdot \frac{n}{\tau} \le (\frac{3}{4})^3 r + 2\cdot 2^{j} \cdot \frac{n}{\tau}$, by Lemma~\ref{lem:recompression}. Putting $r = 2^{j+\BB} \cdot \frac{n}{\tau}$, we estimate the length of $R$ after three iterations for $j$ from above by $((\frac{3}{4})^3 2^{\BB} + 2)2^j \cdot \frac{n}{\tau} < 2^{j+\BBB} \cdot \frac{n}{\tau}$. That is, for each $j$, three iterations are enough to reduce the length of $R$ to at most $2^{j+\BBB} \cdot \frac{n}{\tau}$.

Thus, the total running time of all recompression procedures is $\Oh(\sum_{j=\lceil\log^{(4)} n\rceil}^{\LB} 2^{j+\BB} \cdot\frac{n}{\tau} \log^{(4)} n) = \Oh(\frac{n}{\tau}\log^{(4)} n)$, which is $\Oh(n)$ since $\tau \ge (\log^{(3)} n)^4$. Observe that the most time consuming part is in recalculations of the arrays $M_i$, each taking $\Oh(|R|\log^{(4)} n)$ time, all other parts take $\Oh(|R|)$ time, i.e., $\Oh(\sum_{j=\lceil\log^{(4)} n\rceil}^{\LB} 2^{j+\BB} \cdot\frac{n}{\tau}) = \Oh(\frac{n}{\tau})$ time is needed for everything without the recalculations. The length of $R$ in the end is at most $2^{\LBplusBBB}\cdot n / \tau$, which is a condition under which shrinking iterations stopped for $j = \LB$.

Finally, we create a bit array $E$ of the same length as the original string $R$ that marks by $1$ those letters that survived all iterations. Additional navigational structures for linear-time $E$ construction are straightforward. We then re-run whole ``semi-online'' algorithm that generates the set $S'$ (from which the string $R$ was constructed) but, in this time, we discard all positions of $S'$ that correspond to unmarked indices in $E$ 
% (i.e., their corresponding letters in the original string $R$ did not survive the recompression iterations) 
and we store all positions corresponding to marked indices of $E$ explicitly in an array $S^*$. Since at most $2^{\LBplusBBB}\cdot n / \tau$ indices in $E$ are marked by~$1$, the size of $S^*$ is $\Oh(n / \tau)$.

Finally, we have all required instruments to prove the main lemma. The proof is rather technical and, in a way, similar to the proof of Lemma~\ref{lem:2k-partitioning}; it is detailed in \appendixv{appx:main-lemma}{D.2}.
\begin{restatable}{lemma}{mainLemma}
The set $S^*$ is $\tau$-partitioning; also a converse of property~(c) holds for $S^*$: if a substring $s[i..j]$ has a period at most $\tau / 4$, then $S^* \cap [i + \tau .. j - \tau] = \emptyset$.\label{lem:main-lemma}
\end{restatable}

%%
%% Bibliography
%%

%% Please use bibtex, 

\bibliography{refs}

\ifdefined\ArxivVersion
\appendix
\newpage

\section{LCE index and sparse suffix tree}
\label{appx:sst-special}
\subsection{\boldmath Missing proofs}
\sstSpecial*
\begin{proof}
Denote by $j_k$ the $k$th position in $S$ (so that $j_1 < \cdots < j_{b}$). Assume that the letters $s[i]$ with $i \ge m$ are equal to the special letter $-1$; with this condition, substrings $s[j..j']$ with $j' \ge m$ are well defined. In one pass, we collect all substrings $s[j_k{+}d\tau .. j_k{+}d\tau{+}2\tau]$ with $j_k \in S$ and $d \in [0..(j_{k+1} - j_k) / \tau)$ into a set $C$ (each substring is identified by its starting position), defining $j_0 = 0, j_{b+1} = m$ to cover all of $s$ by the substrings. We order $C$ by starting positions from left to right: $C = \{s[i_h .. i_h{+}2\tau] \colon 1 \le h \le |C|\}$, where $i_1 < i_2 < \cdots < i_{|C|}$. Observe that $i_{h+1} - i_h \le \tau$, for any $h \in [1..|C|)$. Since the number of substrings in $C$ is $b + \Oh(m / \tau) = \Oh(b)$ and their total length is $\Oh(b\tau + m) = \Oh(m)$, they can be sorted in $\Oh(b)$ space using either the radix sort, which gives $\Oh(m\log_b \sigma)$ time, or the ternary tree~\cite{BentleySedgewick}, which leads to $\Oh(m + b \log b)$ time. We choose the fastest method of sorting and obtain $\Oh(m + \min\{m \log_b \sigma, b \log b\})$ time. Let $r_h$ be the rank of $s[i_h..i_h{+}2\tau]$ in the sorted order (equal strings have equal ranks).

We build a suffix tree $T'$ for the string $r_1 r_2 \cdots r_{|C|}$ in $\Oh(|C|) = \Oh(b)$ time and space~\cite{Farach}. All suffixes $r_h r_{h+1} \cdots r_{|C|}$ such that $i_h \not\in S$ are then removed from $T'$. By property~(b), the equality $r_h = r_{h'}$, for any $h, h' \in [1..|C|]$ such that $i_h \in S$ and $i_{h'} \in S$, implies that, for all $d \in [0..\tau]$, $i_h + d \in S$ iff $i_{h'} + d \in S$. Hence, $s[i_h .. i_h{+}2\tau] = s[i_{h'} .. i_{h'}{+}2\tau]$ and $i_{h+1} - i_h = i_{h'+1} - i_{h'}$ since $i_{h+1} - i_h \le \tau$, for $h \in [1..|C|)$. We inductively deduce from this that if $r_h r_{h+1} \cdots r_{h+\ell-1} = r_{h'} r_{h'+1} \cdots r_{h'+\ell-1}$, for $\ell \ge 1$ and $h, h' \in [1..|C|]$ such that $i_h \in S$ and $i_{h'} \in S$, then $s[i_h .. i_{h+\ell-1}{+}2\tau] = s[i_{h'} .. i_{h'+\ell-1}{+}2\tau]$ and, for all $d \in [0 .. i_{h+\ell-1} - i_h + \tau]$, $i_h + d \in S$ iff $i_{h'} + d \in S$. Therefore, if $s[i_h .. t{-}1] = s[i_{h'} .. t'{-}1]$ and $s[t] < s[t']$, for some $t, t'$, and $i_h, i_{h'} \in S$, then $r_h r_{h+1} \cdots r_{h+\ell-1} = r_{h'} r_{h'+1} \cdots r_{h'+\ell-1}$ and $r_{h+\ell} < r_{h'+\ell}$, where $\ell \ge 0$ is the smallest non-negative integer such that $t \in [i_{h+\ell} .. i_{h+\ell}{+}2\tau]$. The tree $T'$ is transformed into $T$ as follows. For each node in $T'$ that corresponds to a string $r_h r_{h+1} \cdots r_{h+\ell-1}$ and for each pair of its lexicographically adjacent outgoing edges labelled by $r_{h+\ell}$ and $r_{h'+\ell}$, we find in $\Oh(\tau)$ time the first mismatched positions $t$ and $t'$ in $s[i_{h+\ell}..i_{h+\ell}{+}2\tau]$ and $s[i_{h+\ell}..i_{h+\ell}{+}2\tau]$, and, using this information, create a corresponding node of $T$. The total running time of the transformation is $\Oh(|C|\tau) = \Oh(m)$.
\end{proof}

\section{Refinement of Partitioning Sets}
\subsection{\boldmath Missing proofs for locally consistent parsing}
\label{appx:lcp}

%\Vishkin*
%\begin{proof}
%Consider $b_i$ and $b_{i+1}$. Denote $\ell = \lbit(a_i, a_{i+1})$ and $\ell' = \lbit(a_{i+1}, a_{i+2})$. As $a_i,a_{i+1} \in [0..2^u)$, we have $\ell \in [0..u)$. Hence, $b_i \le 2\ell + 1 \le 2u - 1$, which proves $b_i \in [0..2u)$. If $b_i = b_{i+1}$, then $\ell = \ell'$ and the bits with indices $\ell$ and $\ell' = \ell$ in $a_i$ and $a_{i+1}$ coincide; on the other hand, by the definition of $\ell = \lbit(a_i, a_{i+1})$, $a_i$ and $a_{i+1}$ must differ in this bit, which is a contradiction.
%\end{proof}

%\vReduction*
%\begin{proof}
%Since $v'_h \in [0..2nw) = [0..\Oh(n\log n))$, we deduce from Lemma~\ref{lem:vishkin} that $v''_h \in [0..\Oh(\log n))$, $v'''_h \in [0..2\log\log n + \Oh(1))$, and, due to the inequality $\log(x+\delta) \le \log x + \frac{\delta\log e}{x}$, we finally obtain $v_h \in [0..2\log^{(3)} n + 3)$, for sufficiently large $n$.
%\end{proof}

For the following proof, we need the well-known periodicity lemma.

\begin{lemma}[see~\cite{FineWilf}]
If two substrings $s[i..i']$ and $s[j..j']$ with periods $p$ and $q$, respectively, overlap on at least $p + q$ letters (i.e., $\min\{i',j'\} - \max\{i,j\} > p + q$), then the minimal period of their union, $s[\min\{i,j\}..\max\{i',j'\}]$, is at most $\mathop{\mathrm{gcd}}(p,q)$.\label{lem:fine-wilf}
\end{lemma}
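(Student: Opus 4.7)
The plan is to first apply the classical Fine-Wilf theorem to the overlap region and then propagate the resulting short period to the rest of the union using the separate periods $p$ and $q$ as transport.

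First, I would let $w = s[\max\{i,j\}..\min\{i',j'\}]$ denote the overlap substring. Since $w$ sits inside $s[i..i']$ (period $p$) and inside $s[j..j']$ (period $q$), both $p$ and $q$ are periods of $w$. The hypothesis $\min\{i',j'\} - \max\{i,j\} > p + q$ gives $|w| > p + q \geq p + q - \gcd(p,q)$, so the classical (non-overlapping) Fine-Wilf theorem grants $w$ the period $d := \gcd(p,q)$.

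Next, I would lift $d$ to the whole union $U = s[\min\{i,j\}..\max\{i',j'\}]$. Assuming WLOG $i \leq j$, fix any $k \in U$ with $k - d \in U$; the goal is $s[k] = s[k-d]$. If both positions already lie in $w$, the claim is immediate. If $k$ lies strictly to the right of $w$, then $k$ belongs to $s[j..j']$ and I would pick the smallest positive integer $t$ with $k - tq \in w$. The inequality $|w| > p + q$ together with $d \leq \min\{p,q\}$ ensures that $k - tq - d$ also lands inside $w$, and that all intermediate positions $k, k-q, \ldots, k-tq$ together with their shifts by $d$ stay inside $s[j..j']$. Iterated application of period $q$ then gives $s[k] = s[k-tq]$ and $s[k-d] = s[k-tq-d]$, and the period $d$ of $w$ closes the chain. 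When $k$ lies strictly to the left of $w$, the symmetric argument using period $p$ inside $s[i..i']$ goes through.

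The hard part will be the bookkeeping in this second step: one must verify that the chain of shifts by $q$ (or $p$) brings both $k$ and $k-d$ into $w$ without ever leaving $s[j..j']$ (or $s[i..i']$), and also check the two extra subcases that arise depending on whether $i' \leq j'$ or $j' < i'$. This is precisely where the strict bound $|w| > p + q$ is used; any weaker overlap hypothesis could strand either $k-tq$ or $k-tq-d$ outside $w$ and break the argument.
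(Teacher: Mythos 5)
The paper does not prove this lemma; it is stated with a citation to Fine and Wilf and used as a black box. So there is no in-paper argument to compare against, and I will only assess your proposal on its own merits.

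Your plan is the standard and correct route: the overlap $w$ inherits both periods, the strict bound $|w|>p+q\ge p+q-\gcd(p,q)$ triggers the classical Fine--Wilf theorem on $w$, and then $p$ and $q$ serve as transport to carry the period $d=\gcd(p,q)$ out to the rest of the union. The arithmetic you sketch for the rightward transport is sound: with $t$ minimal so that $k-tq\le \min\{i',j'\}$, one gets $k-tq\in(\min\{i',j'\}-q,\,\min\{i',j'\}]$, and then $k-tq-d>\min\{i',j'\}-q-p>\max\{i,j\}$ by the overlap hypothesis, so both landing points sit in $w$ and the chain never leaves the relevant substring. You are also right that a case split on $i'\le j'$ versus $j'<i'$ is needed to decide whether the right part of the union is covered by $s[j..j']$ or by $s[i..i']$ and hence which period to transport with.

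One small gap in the explicit case enumeration: the cases ``both $k$ and $k-d$ in $w$'', ``$k$ strictly right of $w$'', and ``$k$ strictly left of $w$'' are not exhaustive. When $\min\{i,j\}<\max\{i,j\}$, a position $k$ can lie inside $w$ while $k-d$ lies strictly to the left of $w$; this case is handled by the same argument applied to $k-d$ (shift $k-d$ rightward in steps of the appropriate period until it enters $w$, carrying $k$ along, which stays inside $w$ because $d+p+q<|w|$). A cleaner way to organize the propagation is to split on whether $k>\min\{i',j'\}$, whether $k-d<\max\{i,j\}$, or neither; these three are mutually exclusive because $d\le\min\{p,q\}<|w|$ rules out both happening at once. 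With that adjustment the proof closes.
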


\allR*
\begin{proof}
Local minima of $v_h$ are not in the region since, for $h \in [p..q]$, $R(j_h)$ implies $s_h = s_{h+1}$ and $v_h = \infty$. As for the ``boundary'' cases: (iii)~always gives $j_q \in S_k$; (i)~can affect only $j_p$ if $j_{p-1} - j_p > 2^{k-1}$; (ii)~holds for $j_p$ iff $q - p \ge 2$.

For any $h \in [p..q]$, since $s[j_h..j_h{+}2^k] = s[j_{h+1}..j_{h+1}{+}2^k]$ and $j_{h+1} - j_h \le 2^{k-1}$, the string $s[j_h..j_{h+1}{+}2^k]$ has period $j_{h+1} - j_h \le 2^{k-1}$. Then, for $h \in (p..q]$, $s[j_{h-1}..j_{h}{+}2^k]$ and $s[j_h..j_{h+1}{+}2^k]$ both have periods at most $2^{k-1}$ and overlap on $2^k$ letters. Hence, by Lemma~\ref{lem:fine-wilf}, their minimal periods coincide and are at most $2^{k-1}$. Applying this argument for all $h \in (p..q]$, we deduce that the minimal period of $s[j_p..j_q{+}2^k]$ is at most $2^{k-1}$.
\end{proof}

\allNonR*
\begin{proof}
We have either $v'_q = \infty$ (if $j_{q+1} - j_q > 2^{k-1}$) or $v'_{q+1} = \infty$ (if $R(j_{q+1})$ holds). Since the numbers $v_h$ are obtained via four $\vbit$ reductions and, for any $x$, $\vbit(x,\infty) = \infty$, we have $v_h = \infty$, for $h \in (q{-}3..q]$, and also $v_{q-3} = \infty$ if $v'_q = \infty$. Since $s_h \ne s_{h+1}$ provided $h \in [p..q)$, the claim for other $v_h$ with $h \in [p..q{-}4]$ is proved by four applications of Lemma~\ref{lem:vishkin} to the sequence $s_{p}, s_{p+1}, \ldots, s_q$. The criteria for $j_h \in S_k$ follow directly from the refinement rule of the $k$th phase.
\end{proof}

\subsection{\boldmath Proof that the refinement process builds a hierarchy of partitioning sets}

\localSparsity*
\begin{proof}
\label{appx:refinement-sparsity}
Let us prove by induction on $k$ the following claim: the range $[0..n)$ of positions of the string $s$ can be partitioned into disjoint non-empty blocks $B_1, B_2, \ldots, B_m$ such that $B_i = [b_i..b_{i+1})$, where $b_1 = 0$, $b_{m+1} = n$, and $b_1 < b_2 < \cdots < b_{m+1}$, and the blocks are of two types:
\begin{enumerate}
\item a block $B_i$ is \emph{normal} if $2^{k-5} \le |B_i| \le 4\cdot 2^{k-5}$ and $|B_i \cap S_k| \le 2$;
\item a block $B_i$ is \emph{skewed} if $|B_i| \ge 8\cdot 2^{k-5}$, $|B_i \cap S_k| \le 3$, and $[b_i..b_{i+1}{-}4\cdot 2^{k-5}] \cap S_k = \emptyset$, i.e., all positions of $B_i \cap S_k$ are concentrated in the suffix of $B_i$ of length $4\cdot 2^{k-5}$ (hence the name ``skewed'').
\end{enumerate}
The claim implies the lemma as follows. The worst case, which maximizes $|S_k \cap [i..i')|$, occurs when a range $[i..i')$ intersects one skewed block and at most $\lceil\frac{i'-i-3}{2^{k-5}}\rceil$ normal blocks in such a way that the skewed block is the leftmost one and contains three positions from $S_k$, all inside $[i..i')$, and each normal block contains two positions from $S_k$, all inside $[i..i')$. We then have $|S_k \cap [i..i')| \le 2\lceil\frac{i'-i-3}{2^{k-5}}\rceil + 3$. The bound can be rewritten as $2\lceil \frac{c 2^k + d}{2^{k-5}}\rceil + 3$, substituting integers $c \ge 0$ and $d \in [0..2^k)$ such that $i' - i - 3 = c 2^k + d$, and it is upper-bounded by $2^6c + \lceil \frac{d}{2^{k-5}}\rceil + 3 \le 2^6 c + 2^5 + 3 < 2^6(c + 1) = 2^6 \lceil\frac{i' - i - 3}{2^k}\rceil \le 2^6  \lceil\frac{i' - i}{2^k}\rceil$, as the lemma states.

% follows from the claim since every range $[i..i')$ can intersect at most $\lceil(i' - i) / 2^{k-5}\rceil$ normal blocks and at most $\lceil(i' - i) / (8\cdot 2^{k-5})\rceil$ skewed blocks.

Now let us return to the inductive proof of the claim.
The base of the induction is $k \le 6$ and it is trivial since the range $[0..n)$ always can be split into blocks of length $2$. Suppose that the inductive hypothesis holds for $k \ge 6$ and let us construct a partitioning of the range $[0..n)$ into blocks for $k + 1$.

First, we greedily unite blocks into disjoint pairs and singletons from left to right as follows: given a block $B_i$ such that the blocks $B_1 B_2 \cdots B_{i-1}$ were already united into pairs and singletons, we consider the following cases: (1)~if $B_i$ is a skewed block, then it forms a singleton and we analyze $B_{i+1}$ next; (2)~if both $B_i$ and $B_{i+1}$ are normal blocks, then $B_i$ and $B_{i+1}$ are united into a pair and we consider $B_{i+2}$ next; (3)~if $B_i$ is a normal block and $B_{i+1}$ is a skewed block, then we cut from $B_{i+1}$ a new normal block $B' = [b_{i+1}..b_{i+1}{+}2^{k-5})$ of length $2^{k-5}$, which is necessarily empty due to properties of skewed blocks, and we unite $B_i$ with $B'$ and analyze the (already cut) block $B_{i+1}$ next; the length of the skewed block $B_{i+1}$ is reduced by $2^{k-5}$ after the cutting and might become less than $8\cdot 2^{k-5}$ but it is still considered as skewed since this is not an issue as the block $B_{i+1}$ will be anyways dissolved into normal blocks in a moment. After the construction of all the pairs and singletons, we proceed as follows.

We consider all the produced pairs and singletons of blocks from left to right. A singleton block $B_i$ is always a skewed block and its length is at least $7\cdot 2^{k-5}$ (a prefix of length $2^{k-5}$ could be cut from the block). Let $S_k = \{j_1 < \cdots < j_{|S_k|}\}$. It follows from the refinement rule that any two consecutive positions $j_h$ and $j_{h+1}$ from $S_{k}$ can both belong to the set $S_{k+1}$ only if either $j_{h+1} - j_h > 2^{k}$, or $j_{h} - j_{h-1} > 2^{k}$ (and $v_{h+1}$ is a local minimum that is not preceded by $\infty$ in the latter case). In both cases there is a gap of length at least $2^{k}$ between either $j_h$ and $j_{h+1}$, or $j_{h-1}$ and $j_{h}$. Therefore, since only the last $4\cdot 2^{k-5} = 2^{k-3}$ positions of the skewed block $B_i$ may contain positions of $S_k$, the refinement rule in the case $|B_i \cap S_k| = 3$ necessarily removes from $S_k$ either the second or the last position of $S_k \cap B_i$ when producing $S_{k+1}$. Thus, we have $|B_i \cap S_{k+1}| \le 2$. We then split the skewed block $B_i$ into a series of new normal blocks all of which, except possibly the last one, do not contain positions from $S_{k+1}$: the last block is $(b_{i+1}{-}4\cdot 2^{k-5}..b_{i+1}]$ and it has length $4\cdot 2^{k-5} = 2\cdot 2^{k-4}$; the remaining prefix of $B_i$ is $[b_i..b_{i+1}{-}4\cdot 2^{k-5}]$ (the boundary $b_i$ used here can differ from the original $b_i$ if the skewed block $B_i$ was cut and, thus, $b_i$ was increased) and its length is at least $7\cdot 2^{k-5} - 4\cdot 2^{k-5}= 3\cdot 2^{k-5}$, the prefix is split into normal blocks arbitrarily (recall that the length of a normal block for the inductive step $k+1$ must be at least $2^{k-4}$ and at most $4\cdot 2^{k-4}$).

Consider a pair of normal blocks $B_i$ and $B_{i+1}$. For simplicity, we denote by $B_{i+1}$ the block following $B_i$ even if it was created by cutting the skewed block (actual $B_{i+1}$) that followed $B_i$ in the initial partitioning. The length of the united block $B_i B_{i+1}$ is at least $2^{k-4}$ and at most $4\cdot 2^{k-4}$ so that it could have formed a new normal block if at most two positions of $S_{k+1}$ belonged to it. By the observation above, if two consecutive positions $j_h$ and $j_{h+1}$ were retained in $S_{k+1}$ by the refinement rule, then there is a gap of length at least $2^{k}$ between either $j_h$ and $j_{h+1}$, or $j_{h-1}$ and $j_{h}$. Therefore, since $|B_iB_{i+1}| \le 4\cdot 2^{k-4} < 2^{k}$, the united block $B_i B_{i+1}$ contains at most three positions of $S_{k+1}$, i.e., one of the positions from $S_k \cap B_iB_{i+1}$ must be discarded from $S_{k+1}$. In case $|S_{k+1} \cap B_iB_{i+1}| \le 2$, we simply form a new normal block $B_iB_{i+1}$. The case $|S_{k+1} \cap B_iB_{i+1}| = 3$ is more interesting; it occurs when each of the blocks $B_i$ and $B_{i+1}$ contains two positions from $S_k$ and the refinement procedure retains the two positions from $S_k \cap B_i$ and removes the first position of $S_k \cap B_{i+1}$. By the observation above, we must have a gap before the block $B_i$ in this case: $S_{k+1} \cap [b_i{-}2^{k}{+}|B_i|..b_i) = \emptyset$. Since $|B_i| \le 4\cdot 2^{k-5} = 2^{k-3}$, we hence obtain $S_{k+1} \cap [b_i{-}7\cdot 2^{k-3}..b_i) = \emptyset$. Therefore, all newly produced blocks that are entirely contained in the range $[b_i{-}7\cdot 2^{k-3}..b_i)$ are empty, i.e., they do not contain positions of $S_k$ (recall that we process pairs and singletons of blocks from left to right and, so, only newly constructed blocks are located to the left of $B_i$). Let $\hat{B}_j, \hat{B}_{j+1}, \ldots, \hat{B}_{\ell}$ be a maximal sequence of consecutive newly constructed empty blocks to the left of $B_i$ such that $\hat{B}_{\ell}$ is a block immediately preceding $B_i$ (we use the notation $\hat{B}_j$ for the new blocks to distinguish them from the ``old'' blocks $B_1, B_2, \ldots$). We unite all blocks $\hat{B}_j, \hat{B}_{j+1}, \ldots, \hat{B}_{\ell}$  with $B_i$ and $B_{i+1}$ thus producing a new skewed block whose length is at least $|B_i B_{i+1}| + 7\cdot 2^{k-3} - 4\cdot 2^{k-4}$ (the negative term is because some block to the left of $B_i$ may only partially intersect the ``empty'' range $[b_i{-}7\cdot 2^{k-3}..b_i)$), which is at least $2^{k-4} + 5\cdot 2^{k-3} = 11\cdot 2^{k-4} \ge 8\cdot 2^{k-4}$.
\end{proof}

\partitioningTau*
\begin{proof}
\label{appx:refinement-partitioning}
The proof is by induction on $k$. The case $k = 0$ is trivial. Let $k > 0$ and $S_{k-1} = \{j_1 < \cdots < j_{|S_{k-1}|}\}$. We first establish property~(a) for $S_k$ with $\tau = 2^{k+3}$: if $s[i{-}2^{k+3}..i{+}2^{k+3}] = s[j{-}2^{k+3}..j{+}2^{k+3}]$ for $i,j \in [2^{k+3}..n{-}2^{k+3})$, then $i \in S$ iff $j \in S$. Fix $i$ and $j$ such that $s[i{-}2^{k+3}..i{+}2^{k+3}] = s[j{-}2^{k+3}..j{+}2^{k+3}]$. It suffices to show that $i \in S_k$ implies $j \in S_k$; the case $j \in S_k$ is symmetric. So, let $i \in S_k$.

By the inductive hypothesis, for any $\ell \in [-2^{k+2}..2^{k+2}]$, we have $i + \ell \in S_{k-1}$ iff $j + \ell \in S_{k-1}$. In particular, $i$ and $j$ both belong to $S_{k-1}$. Let $i = j_h$ and $j = j_{h'}$, for some $h$ and $h'$. Therefore, when $i \in S_k$ due to the ``boundary'' case~(i), i.e., either $j_h - j_{h-1} > 2^{k-1}$ or $j_{h+1} - j_{h} > 2^{k-1}$, then $j_{h'} - j_{h'-1} > 2^{k-1}$ or $j_{h'+1} - j_{h'} > 2^{k-1}$, respectively, as $2^{k-1} < 2^{k+2}$; thus, $j \in S_k$.

Now suppose that $i \in S_k$ because $\infty > v_{h-1} > v_h$ and $v_h < v_{h+1}$, where $v_{h-1}, v_h, v_{h+1}$ are as defined in the transformation of $S_{k-1}$ into $S_k$. By Lemma~\ref{lem:all-non-R}, whenever $v_h \ne \infty$, then $j_{h+\ell} - j_{h+\ell-1} \le 2^{k-1}$, for all $\ell \in [1..4]$, and the value of $v_h$ is derived using the substrings $s[j_{h+\ell}..j_{h+\ell}{+}2^k]$ with $\ell \in [0..4]$. Then, $j_{h+\ell} - j_{h+\ell-1} \le 2^{k-1}$, for all $\ell \in [0..4]$. Since $j_{h+4} - j_h \le 4\cdot 2^{k-1} < 2^{k+2}$, it follows from the inductive hypothesis that $j_{h+\ell} - j_{h+\ell-1} = j_{h'+\ell} - j_{h'+\ell-1}$, for all $\ell \in [0..4]$, and $s[j_{h+\ell} .. j_{h+\ell}{+}2^k] = s[j_{h'+\ell} .. j_{h'+\ell}{+}2^k]$, for all $\ell \in [-1..4]$. Hence, $v_{h'-1} = v_{h-1}$ and $v_{h'} = v_{h}$. In the same fashion, the case $v_{h+1} \ne \infty$ implies that $j_{h+5} - j_h \le 5\cdot 2^{k-1} < 2^{k+2}$ and $s[j_{h+5} .. j_{h+5}{+}2^k] = s[j_{h'+5} .. j_{h'+5}{+}2^k]$, thus inferring $v_{h'+1} = v_{h+1}$. By Lemma~\ref{lem:all-non-R}, in case $v_{h+1} = \infty$ either $j_{h+5} - j_{h+4} > 2^{k-1}$, which gives $j_{h'+5} - j_{h'+4} > 2^{k-1}$ since $j_{h+4} + 2^{k-1} - j_h < 2^{k+2}$, or $s[j_{h+4} .. j_{h+4}{+}2^k] = s[j_{h+5} .. j_{h+5}{+}2^k]$ and $j_{h+5} - j_{h+4} \le 2^{k-1}$, which gives $s[j_{h'+4} .. j_{h'+4}{+}2^k] = s[j_{h'+5} .. j_{h'+5}{+}2^k]$; thus, both alternatives imply $v_{h'+1} = \infty$. Therefore, $j$ (${=}j_{h'}$) belongs to $S_k$ because $v_{h'}$ is a local minimum. Analogously, we deduce that when $i \in S_k$ since $R(j_{h-1})$ does not hold but $R(j_h)$, $R(j_{h+1})$, $R(j_{h+2})$ hold, then the same is true for $R(j_{h'-1})$ and $R(j_{h'})$, $R(j_{h'+1})$, $R(j_{h'+2})$, respectively, and, thus, $j \in S_k$. The remaining case when $i \in S$ since $R(j_h)$ holds but $R(j_{h+1})$ does not hold is similar.

Let us establish property~(b) for $S_k$ with $\tau = 2^{k+3}$: if $s[i..i{+}\ell] = s[j..j{+}\ell]$, for $i,j \in S_k$ and some $\ell \ge 0$, then, for each $d \in [0..\ell{-}2^{k+3})$, $i + d \in S_k$ iff $j + d \in S_k$. In view of property~(a), the only interesting case is when $\ell > 2^{k+3}$ and $d \in (0..2^{k+3})$. Given $i + d \in S_k$, let us show that $j + d \in S_k$; the case $j + d \in S_k$ is symmetric. By the inductive hypothesis, we have $i + d' \in S_{k-1}$ iff $j + d' \in S_{k-1}$, for any $d' \in [0..\ell{-}2^{k+2})$; moreover, $s[i{+}d' .. i{+}d'{+}2^k] = s[j{+}d' .. j{+}d'{+}2^k]$, for such $d'$, since $2^k < 2^{k+2}$. In particular, $i + d$ and $j + d$ both belong to $S_{k-1}$. Let $i + d = j_h$ and $j + d = j_{h'}$, for some $h$ and $h'$. The remaining case analysis is exactly as in the proof of property~(a): the only strings of interest to the ``left'' of $i + d$ and $j + d$ in the analysis are $s[j_{h-1} .. j_{h-1}{+}2^k]$ and $s[j_{h'-1} .. j_{h'-1}{+}2^k]$, respectively, and they coincide since $j_{h} - j_{h-1} = j_{h'} - j_{h'-1} \le d$; all strings to the ``right'' are addressed using the inductive hypothesis.

It remains to establish property~(c) for $S_k$ with $\tau = 2^{k+3}\lfloor\log^{(3)} n\rfloor$: if $i,j \in S_k$ with $i < j$, $(i..j) \cap S_k = \emptyset$, and $j - i > \tau$, then $s[i..j]$ has a period at most $\tau / 4$. We shall actually prove a stronger claim that $s[i..j]$ has a period at most $2^{k-1}$. We assume that $n > 2^{16}$ so that $\tau > 2^{k+3}$. If $(i..j) \cap S_{k-1} = \emptyset$, then the property follows from the inductive hypothesis. Assume that the set $(i..j) \cap S_{k-1}$ is not empty and denote by $j_r, j_{r+1}, \ldots, j_t$ all its positions. Note that $j_\ell \not\in S_k$, for all $\ell \in [r..t]$, and $i = j_{r-1}$ and $j = j_{t+1}$. We have $j_{\ell} - j_{\ell-1} \le 2^{k-1}$, for all $\ell \in [r..t{+}1]$, since otherwise $j_{\ell-1}, j_\ell \in S_k$. The latter implies that $2 + t - r \ge (j - i) / 2^{k-1} > \tau / 2^{k-1} = 2^4\lfloor\log^{(3)} n\rfloor$.

Suppose that $R(j_{u})$ holds, for some $u \in [r..t]$, and let $u$ be the smallest such $u$. By Lemma~\ref{lem:all-R}, the last position of the all-$R$ region containing $j_u$ belongs to $S_k$ and its other positions, except possibly the first one, do not belong to $S_k$. Hence, $R(j_\ell)$ holds, for all $\ell \in [u..t{+}1]$, and $j$ ($=j_{t+1}$) is this last position. Suppose that $t + 1 - u \ge 2$. Then, by Lemma~\ref{lem:all-R}, the first position in the all-$R$ region containing $j_u$ belongs to $S_k$ too. Hence, this first position must be $j_{r-1} = i$. Therefore, the string $s[i..j]$ has a period at most $2^{k-1}$ due to Lemma~\ref{lem:all-R}.

Suppose that $R(j_r)$ does not hold, i.e., $j_r$ belongs to an all-non-$R$ region according to Lemma~\ref{lem:all-non-R}. Then, by the above argument, $R(j_u)$ may hold only for $u = t$ in $[r..t]$ (so that $ t + 1 - u < 2$) and, therefore, all $j_r, j_{r+1}, \ldots, j_{t-1}$ are in the same all-non-$R$ region. By Lemma~\ref{lem:all-non-R}, we have $v_\ell \ne \infty$ when $\ell \in [r..t{-}5]$. Hence, $(t - 5) - r + 1 < 8\log^{(3)} n + 12$ since otherwise, by Lemma~\ref{lem:local-density}, $j_\ell \in S_k$, for some $\ell \in [r..t{-}5]$. But it contradicts the inequality $t - r > 2^4\lfloor\log^{(3)} n\rfloor - 2$ derived above, assuming $n$ is large enough. Thus, this case (when $R(j_r)$ does not hold) is impossible.
\end{proof}

\section{Speeding up the Refinement Procedure}
\label{appx:refinement-speed}
\subparagraph{The case \boldmath $\tau < \sqrt{n}$.}
Each phase in this scheme uses $\Oh(1)$ space so that the overall space is $\Oh(\log\tau)$, which, as we assumed, is always available. Suppose that $\tau < \sqrt{n}$ . We have $b = \Theta(\frac{n}{\tau}) \ge \Omega(\sqrt{n})$ additional space for the algorithm in this case, which can be utilized in order to speed up LCE queries, the bottleneck of the scheme, as follows. Consider the (overlapping) substrings $C_i = s[i\lfloor\sqrt{n}\rfloor..(i + 3)\lfloor\sqrt{n}\rfloor - 1]$, for $i \in [0..n / \lfloor\sqrt{n}\rfloor{-}3]$. We build in linear time for $C_0$ an LCE data structure~\cite{HarelTarjan} using $\Oh(|C_0|) = \Oh(\sqrt{n})$ space that can answer LCE queries in $\Oh(1)$ time. Then, the algorithm processes the prefix $s[0.. 2\lfloor\sqrt{n}\rfloor{-}1]$ from left to right feeding the positions $[0.. 2\lfloor\sqrt{n}\rfloor)$ to the first and all subsequent phases and computing queries $\min\{2^k + 1, \lce(j, j')\}$, with $j, j' \in [0.. 2\lfloor\sqrt{n}\rfloor)$, emerging in the phases along the way in $\Oh(1)$ time; the latter is possible since $2^k \le \frac{\tau}{2^4\log^{(3)} n} < \frac{1}{8}\sqrt{n}$ for all $k \le \lfloor\log\frac{\tau}{2^4\log^{(3)} n}\rfloor$, assuming $n > 2^{16}$, and hence, the strings $s[j..j{+}2^k]$ and $s[j'..j'{+}2^k]$ in the queries are substrings of $C_0$. Since $6\cdot 2^k < \sqrt{n}$ for $k \le \lfloor\log\frac{\tau}{2^4\log^{(3)} n}\rfloor$, it follows from Lemma~\ref{lem:dist-process} that any subsequent LCE queries $\lce(j, j')$ in the algorithm that will emerge after the processing of the prefix $s[0.. 2\lfloor\sqrt{n}\rfloor{-}1]$ can be performed only on positions $j$ and $j'$ such that $\min\{j,j'\} \ge \sqrt{n}$, i.e., the positions $j$ and $j'$ and the corresponding substrings $s[j..j{+}2^k]$ and $s[j'..j'{+}2^k]$ in the queries will be inside the string $C_1$ in the next $\sqrt{n}$ steps. Accordingly, to continue the execution of the algorithm, we build an LCE data structure for $C_1$ in place of the structure for $C_0$ and continue the run feeding the positions $[2\lfloor\sqrt{n}\rfloor..3\lfloor\sqrt{n}\rfloor)$ to the first and all subsequent phases. We continue this procedure analogously: on a generic step, after feeding the positions $[i\lfloor\sqrt{n}\rfloor..(i+1)\lfloor\sqrt{n}\rfloor)$ using an LCE data structure for $C_{i-1}$, we construct an LCE data structure for $C_i$ in its place in $\Oh(|C_i|)$ time and feed the positions $[(i+1)\lfloor\sqrt{n}\rfloor..(i+2)\lfloor\sqrt{n}\rfloor)$ to the algorithm. The overall running time is $\Oh(n + \sum_i |C_i|) = \Oh(n)$ and the occupied space is $\Oh(\sqrt{n}) = \Oh(b)$.

\subparagraph{The case \boldmath $\tau \ge \sqrt{n}$.}
Let us generalize this idea to the case $\tau \ge \sqrt{n}$. Denote $b = \frac{n}{\tau}$. We have $\Oh(b) < \Oh(\sqrt{n})$ space and, hence, cannot resort to the above described scheme since LCE data structures for substrings of length $\Oh(b)$ are not enough to answer queries of the form $\min\{2^k + 1, \lce(j, j')\}$ when $2^k > \Omega(b)$. The key idea is that the queries $\lce(j, j')$ in a $k'$th phase can use only positions $j$ and $j'$ such that $j, j' \in S_{k}$, for all $k \in [0..k')$, and therefore, the full suffix tree that lies at the core of the LCE data structure~\cite{HarelTarjan} used for the substrings $C_i$ is unnecessary, we can use a sparse suffix tree only for suffixes whose starting positions are from a set $S_{k}$, for some $k < k'$. When the sparse suffix tree is equipped with an LCA data structure~\cite{HarelTarjan}, one can compute the queries $\lce(j,j')$, for $j, j' \in S_{k'-1}$, in $\Oh(1)$ time.

Denote $\hat{b} = \lfloor\frac{b}{\log n}\rfloor$. Our new scheme evenly splits all $\lfloor\log\frac{\tau}{2^4\log^{(3)} n}\rfloor$ phases into ``levels'', each containing $\Theta(\log\hat{b})$ phases. Observe that, since $b = \Omega(\log^2 n)$, we have $\log\hat{b} = \Theta(\log b)$. The scheme has $\lfloor\log(\frac{\tau}{2^4\log^{(3)} n}) / \log\hat{b}\rfloor = \Oh(\log_{\hat{b}} n)$ ``levels'', which is $\Oh(\log_b n)$ as $\log\hat{b} = \Theta(\log b)$. The idea is that each level has its own sparse suffix tree, endowed with an LCA data structure, that occupies $\Oh(\hat{b})$ space and can answer in $\Oh(1)$ time LCE queries on a range of $\Theta(\hat{b})$ positions from a set $S_{k}$ received by the lowest phase in the level; all phases inside the level use this tree for LCE queries, which is possible since the phases process sets $S_{k+1}, S_{k+2}, \ldots$ and $S_k \supseteq S_{k+1} \supseteq S_{k+2} \supseteq \cdots$. As in the solution for $\tau < \sqrt{n}$, the sparse suffix tree encodes only a certain range of the input at any given time and it has to be rebuild, in a ``sliding window'' manner, once a long enough prefix of the input has been processed. The total time required for all rebuilds in one level is $\Oh(n\log_b n)$, for the lowest level, and $\Oh(n)$, for all other levels. Thus, the time over all $\Theta(\log_b n)$ levels is $\Oh(n\log_b n)$. The overall space is $\Oh(\hat{b}\log_b n) = \Oh(b)$ (this is why the parameter $\hat{b}$ was introduced: in order to bound the space on each level separately). The precise description follows.

For integer $p \ge 0$, the $p$th level takes positions from the set $S_k$ produced by the previous level, for $k = p\lfloor\log\hat{b}\rfloor$, and produces the set $S_{k'}$, for $k' = (p + 1)\lfloor\log\hat{b}\rfloor$. The $p$th level receives positions of the set $S_{k}$ from left to right; for $p = 0$, the received positions are $0, 1, \ldots, n{-}1$. When sufficiently many positions of $S_{k}$ are collected (at most $\Oh(\hat{b})$), we temporarily pause the procedure of the previous level, process the collected chunk of positions thus producing some positions of $S_{k'}$ (from left to right), and then continue collecting positions of $S_{k}$ until the next chunk is ready, which is again processed analogously, and so on. Let us describe the scheme in details.

For integer $p \ge 0$, fix $k = p\lfloor\log\hat{b}\rfloor$ and $k' = (p + 1)\lfloor\log\hat{b}\rfloor$. The $p$th level works as follows.
By analogy to the case $\tau < \sqrt{n}$, we consider consecutively substrings $C_i = s[i\hat{b}\cdot 2^{k+3}..(i + 3)\hat{b}\cdot 2^{k+3} - 1]$, for $i \in [0..n / (\hat{b}\cdot 2^{k+3}) - 3]$. First, we collect all positions of $S_k$ produced by the $(p{-}1)$th level that are less than $3\hat{b}\cdot 2^{k+3}$ (i.e., span the substring $C_0$) and, then, temporarily pause the generation of new positions from $S_k$ (i.e., pause the levels $p-1, p-2, \ldots, 0$). Denote $Q_i = S_k \cap [i\hat{b}\cdot 2^{k+3}..(i + 3)\hat{b}\cdot 2^{k+3})$. By Lemma~\ref{lem:local-sparsity}, we have $|Q_i| \le \Oh(\hat{b})$. Since, by Lemma~\ref{lem:2k-partitioning}, the set $S_k$ is ``almost'' $2^k$-partitioning (it satisfies properties~(a) and~(b)), we can construct on the string $C_0$ a sparse suffix tree, for all its suffixes $s[j..|C_0|{-}1]$ starting at positions $j \in Q_0$, in $\Oh(|C_0| + \min\{|C_0|\log_{\hat{b}} n, \hat{b}\log\hat{b}\})$ time using Lemma~\ref{lem:sst-special}. This time can be upper-bounded by $\Oh(|C_0| \log_b n)$ for $p = 0$ (note that $\log_{\hat{b}} n = \Theta(\log_b n)$ since $b \ge \Omega(\log^2 n)$), and by $\Oh(|C_0| + \frac{|C_0|}{2^k}\log \hat{b}) = \Oh(|C_0| + \frac{|C_0|}{\hat{b}^p}\log\hat{b}) = \Oh(|C_0|)$ for $p \ge 1$ (recall that $k = p\lfloor\log\hat{b}\rfloor$). The sparse suffix tree is equipped in $\Oh(\hat{b})$ time by an LCA data structure~\cite{HarelTarjan} that allows us to compute LCE queries for substrings of $C_0$ starting at positions from $Q_0$ in $\Oh(1)$ time.

We build the set $S_{k'}$ as follows: we first process the positions $S_k \cap [0..2\hat{b}\cdot 2^{k+3})$ by the usual procedure of the phases $k + 1, k + 2, \ldots, k'$ using the sparse suffix tree of $C_0$ in order to answer LCE queries of the form $\min\{2^{k'} + 1, \lce(j,j')\}$ emerging along the way in $\Oh(1)$ time. This is possible since $2^{k'} \le \hat{b}\cdot 2^k$ and the queries $\lce(j,j')$ access only positions $j, j' \in Q_0 \cap [0..2\hat{b}\cdot 2^{k+3})$ so that $s[j..j{+}2^{k'}]$ and $s[j'..j'{+}2^{k'}]$ are substrings of $C_0$. By Lemma~\ref{lem:dist-process}, we shall report in this way all positions $S_{k'} \cap [0..2\hat{b}\cdot 2^{k+3} - 5\cdot 2^{k'}] \supseteq S_{k'} \cap [0 .. (1 + \frac{3}{8})\hat{b}\cdot 2^{k+3}]$ (note that since $2^{k'} \le \hat{b}\cdot 2^k$, we have $2\hat{b}\cdot 2^{k+3} - 5\cdot 2^{k'} \ge 2\hat{b}\cdot 2^{k+3} - 5\hat{b}\cdot 2^k = (1 + \frac{3}{8})2^{k+3}\hat{b}$). Then, we resume the procedures of the previous temporarily paused levels $p - 1, p-2, \ldots, 0$ that generate $S_k$ and collect all positions from $S_k \cap [3\hat{b}\cdot 2^{k+3}..4\hat{b}\cdot 2^{k+3})$. Once they are collected and, therefore, the set $Q_1$ is known, the previous levels $p - 1, p-2, \ldots, 0$ are again paused. We construct the sparse suffix tree of $C_1$, for all suffixes starting at positions from $Q_1$, in place of the tree for $C_0$ using Lemma~\ref{lem:sst-special}, which takes $\Oh(|C_1|\log_b n)$ or $\Oh(|C_1|)$ time depending on whether $p = 0$ or not. Further, we feed the positions $S_k \cap [2\hat{b}\cdot 2^{k+3}..3\hat{b}\cdot 2^{k+3})$ to the phases $k + 1, k + 2, \ldots, k'$ using the suffix tree to answer LCE queries. This is possible since, by Lemma~\ref{lem:dist-process}, all substrings $s[j..j{+}2^{k'}]$ that are accessed by LCE queries at this point have $j \ge 2\hat{b}\cdot 2^{k+3} - 6\cdot 2^{k'} > \hat{b}\cdot 2^{k+3}$. Thus, we report the rest of the set $S_{k'} \cap [0..3\hat{b}\cdot 2^{k+3} - 5\cdot 2^{k'}] \supseteq S_k \cap [0 .. (2 + \frac{3}{8})\hat{b}\cdot 2^{k+3}]$ as a result. Analogously, we continue the described procedure for $C_2, C_3, \ldots$: for $C_i$, we collect positions from $S_k \cap [(i+2)\hat{b}\cdot 2^{k+3}..(i+3)\hat{b}\cdot 2^{k+3})$, thus constructing $Q_i$, then temporarily pause the previous levels, construct the sparse suffix tree for the suffixes of $C_i$ starting at positions from $Q_i$, and feed the positions $S_k \cap [(i + 1)\hat{b}\cdot 2^{k+3}..(i + 2)\hat{b}\cdot 2^{k+3})$ to the phases $k + 1, k + 2, \ldots, k'$, thus reporting $S_{k'} \cap [0 .. (i + 1 + \frac{3}{8})\hat{b}\cdot 2^{k+3}]$ in the end; then, we move on to $C_{i+1}$.

Each level uses $\Oh(\hat{b})$ space. Since the number of levels is $\Oh(\log_b n)$, the total space is $\Oh(\hat{b}\log_b n) = \Oh(b)$. The overall running time of the first level ($p=0$) is $\Oh(\sum_i |C_i|\log_b n) = \Oh(n\log_b n)$, the time for each subsequent level ($p > 0$) is $\Oh(\sum_i |C_i|) = \Oh(n)$. Thus, the total time is $\Oh(n \log_b n)$.

\section{Recompression}
\subsection{\boldmath Proof that $S'$ obtained after the first stage is $\tau$-partitioning}
\label{appx:first-stage}
\periodicityGap*
\begin{proof}
For property~(a) of $S'$ (with $\frac{3}{4}\tau$ in place of $\tau$), consider $p$ and $q$ such that $s[p{-}\frac{3}{4}\tau .. p{+}\frac{3}{4}\tau] = s[q{-}\frac{3}{4}\tau..q{+}\frac{3}{4}\tau]$. It suffices to show that if $p \in S'$, then $q \in S'$. Since $S$ is a \mbox{$\frac{\tau}2$-partitioning} set and $S' \subseteq S$, the inclusion $p \in S'$ implies $q \in S$, by property~(a) of $S$ (with $\frac{\tau}{2}$ in place of $\tau$). The position $q$ could be excluded from $S'$ only if there exist $r \in (0..\tau/4]$ and $r' \in [0..\tau/4]$ such that $r' + r \le \tau/4$, $q - r \in S$, $q + r' \in S$, and $s[q{-}r..q{-}r{+}\frac{\tau}{2}] = s[q{+}r'..q{+}r'{+}\frac{\tau}{2}]$. Then, since $p - \frac{3}{4}\tau \le p - r - \frac{\tau}{2}$ and $p + r' + \frac{\tau}{2} \le p + \frac{3}{4}\tau$, we have $p - r \in S$ and $p + r' \in S$ by property~(a) of $S$ (with $\frac{\tau}{2}$ in place of $\tau$). Therefore, $p$ must be excluded from $S'$ too after the comparison of the equal substrings $s[p{-}r..p{-}r{+}\frac{\tau}{2}]$ and $s[p{+}r'..p{+}r'{+}\frac{\tau}{2}]$, which contradicts the assumption $p \in S'$.

For property~(b) of $S'$ (with $\frac{3}{4}\tau$ in place of $\tau$), consider $p,q \in S'$ such that $s[p..p{+}\ell] = s[q..q{+}\ell]$, for some $\ell \ge 0$. By contradiction, assume that there exists $d \in [0..\ell{-}\frac{3}{4}\tau)$ such that $p + d \in S'$ whereas $q + d \not\in S'$, and $d$ is the smallest such integer. Denote $p' = p + d$ and $q' = q + d$. Since $p' \in S' \subseteq S$, we have $q' \in S$, by property~(b) of $S$ (with $\frac{\tau}{2}$ in place of $\tau$). Then, as above, $q'$ could be excluded from $S'$ only if there are $r \in (0..\tau/4]$ and $r' \in [0..\tau/4]$ such that $r' + r \le \tau/4$, $q' - r \in S$, $q' + r' \in S$, and $s[q'{-}r..q'{-}r{+}\frac{\tau}{2}] = s[q'{+}r'..q'{+}r'{+}\frac{\tau}{2}]$. Hence, all positions $S \cap (q'{-}r .. q'{+}r']$ were excluded from $S'$, i.e., $S' \cap (q'{-}r .. q'{+}r'] = \emptyset$. Since $q \in S'$, we have $q' - r \ge q$. Since $d + r < \ell - \frac{\tau}{2}$ and $q' - r, q' + r' \in S$, we obtain $p' - r \in S$ and $p' + r \in S$, by property~(b) of $S$ (with $\frac{\tau}{2}$ in place of $\tau$). Therefore, the positions $S \cap (p'{-}r .. p'{+}r]$ all too should have been excluded from $S'$ together with $p'$, which is a contradiction.

For property~(c) of $S'$ (with $\tau$), consider $p, q \in S'$ such that $(p..q) \cap S' = \emptyset$ and $q - p > \frac{3}{4}\tau$ (the property requires to consider only $q - p > \tau$ but the claim works for $q  - p > \frac{3}{4}\tau$ too). We are to show that $s[p..q]$ has a period at most $\tau / 4$ (unfortunately, property~(c) for $\frac{3}{4}\tau$ in place of $\tau$ does not hold because, in this case, $s[p..q]$ should have a period at most $\frac{3}{16}\tau$, which is not true in general). If $(p..q) \cap S = \emptyset$, then the string $s[p..q]$ has a period at most $\tau / 8$ (${<}\tau/4$) because of property~(c) of $S$ (with $\frac{\tau}{2}$ in place of $\tau$). Suppose that $(p..q) \cap S \ne \emptyset$. Denote by $D$ the set of all pairs of positions $(i,j)$ such that $i,j  \in S$, $0 < j - i \le \tau /4$, and $s[i..i{+}\frac{\tau}{2}] = s[j..j{+}\frac{\tau}{2}]$. The set $S'$ is generated from $S$ by removing, for each pair $(i,j) \in D$, the positions $S \cap (i..j]$. Therefore, all positions of $S$ between $p$ and $q$ are covered by half-intervals $(i..j]$ with $(i,j) \in D$. Further, there is a subcover of $S \cap (p..q)$ consisting of interleaving half-intervals  $\{(i_t..j_t]\}_{t=1}^m$ from $D$, i.e., $i_1 < i_2 < \cdots < i_m$, $S \cap (p..q) \subseteq (i_1..j_m]$, and, for any $t \in [1..m)$, $i_{t+1} \le j_t$. Since, for each $(i,j) \in D$, the string $s[i..j{+}\frac{\tau}{2}]$ has a period at most $\tau / 4$, the period of the substring $s[i_1..j_m{+}\frac{\tau}{2}]$ is at most $\tau / 4$ by Lemma~\ref{lem:fine-wilf}. It is easy to see that $i_1 = p$. Hence, if $j_m + \frac{\tau}{2} \ge q$, then property~(c) for $S'$ is established: $s[p..q]$ has a period at most $\tau / 4$. Otherwise (if $j_m + \frac{\tau}{2} < q$), we have $S \cap (j_m..q) = \emptyset$ and, therefore, $s[j_m .. q]$ has a period at most $\tau / 8$, by property~(c) of $S$ (with $\frac{\tau}{2}$ in place of $\tau$). Then, $s[p..q]$ has a period at most $\tau / 4$ by Lemma~\ref{lem:fine-wilf} since it is covered by two overlapping substrings $s[p..j_m{+}\frac{\tau}{2}]$ and $s[j_m..q]$ with periods at most $\tau / 4$.

For the converse of property~(c), consider a substring $s[i..j]$ whose minimal period $\pi$ is at most $\tau / 4$. Let $q \in S \cap [i + \frac{3}{4}\tau .. j - \frac{3}{4}\tau]$. Denote $p = q - \pi$. Due to periodicity of $s[i..j]$, we obtain $s[q{-}\frac{\tau}{2} .. q{+}\frac{\tau}{2}] = s[p{-}\frac{\tau}{2} .. p{+}\frac{\tau}{2}]$. By property~(a) of $S$ (with $\frac{\tau}{2}$ in place of $\tau$), we have $p \in S$. Since $q \in (p .. p{+}\tau/4]$, the procedure generating $S'$ must have compared the strings $s[p..p{+}\frac{\tau}{2}]$ and $s[q..q{+}\frac{\tau}{2}]$ during the analysis of the position $p$ and, as a result, could not put $q$ into $S'$. Hence, $S' \cap [i + \frac{3}{4}\tau .. j - \frac{3}{4}\tau] = \emptyset$.
\end{proof}

%\subsection{\boldmath Proof that the recompression shrinks $R$ by a constant factor}
%\label{appx:recompression-shrink}

\subsection{\boldmath Proof that $S^*$ obtained after the recompression is $\tau$-partitioning}
\label{appx:main-lemma}
\begin{lemma}
If $s[p..p{+}\frac{7}{8}\tau] = s[q..q{+}\frac{7}{8}\tau]$, for $p, q \in S'$, then $a_p = a_q$.\label{lem:r-letters-locality}
\end{lemma}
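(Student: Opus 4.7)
The plan is to show, by induction on the four $\vbit$ reduction levels $t \to \bar t \to \bar{\bar t} \to \bar{\bar{\bar t}} \to a_p$, that every intermediate chunk produced in the cascade at $p$ equals the corresponding chunk produced at $q$ (shifted by $q - p$). Two ingredients are needed: a structural correspondence identifying the nested neighborhoods $S' \cap (p'..p' + \tau/2^5]$ consulted on the $p$-side with those consulted on the $q$-side, and a leaf equality identifying the base substrings $t = s[\cdot..\cdot + \tau/2]$ on the two sides.

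For the structural correspondence, I would apply property~(b) of $S'$ from Lemma~\ref{lem:periodicity-gap} (the almost $\frac{3}{4}\tau$-partitioning version) to $p, q \in S'$ with $\ell = \frac{7}{8}\tau$, obtaining $p + d \in S' \iff q + d \in S'$ for every $d \in [0..\frac{\tau}{8})$. Since each recursive neighbor step advances by at most $\tau/2^5 = \tau/32$ and the cascade has depth four, the consulted $S'$-positions lie in $[p..p + \tau/8]$, and this correspondence is extended to the full range by iterating property~(b) at each discovered neighbor $p_j$ with $\ell_j = \frac{7}{8}\tau - (p_j - p) \geq \frac{3}{4}\tau$, propagating the bijection through the whole tree of dependencies depicted in Figure~\ref{fig:w-scheme}. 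Consequently, at every position $p'$ visited during the computation of $a_p$ the set of right-neighbors in $S'$ within $\tau/2^5$ matches the corresponding set at $q + (p' - p)$, and in particular the cardinalities $m$ match. For the leaf equality, the deepest $t$-value consulted reads letters of $s$ up to $p + \tau/8 + \tau/2 = p + \frac{5}{8}\tau$, comfortably inside the range $[p..p + \frac{7}{8}\tau]$ where the hypothesis yields equality; so $t$ at every consulted position on the $p$-side equals $t$ at the mirror position on the $q$-side.

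With both ingredients in hand, the induction is routine: at each level a chunk is obtained by packing the values $\vbit(u, u_j)$ for a position $p'$ and its right-neighbors in $S' \cap (p'..p' + \tau/2^5]$, padded with $\infty$ up to length $\ell$. Since by the inductive hypothesis the constituent chunks match at corresponding positions and the number of neighbors matches, the $\vbit$ outputs and the $\infty$-padding pattern match identically at corresponding positions, so the new chunks match too. Pulling this all the way up the cascade gives $a_p = a_q$. The most delicate point I expect in the formal write-up is the boundary of the reach, where the deepest nested offset equals $\tau/8$ exactly; this is why the iterated application of property~(b) at intermediate neighbors (rather than a single application at $p, q$) is used, keeping the effective margin $\ell_j - \frac{3}{4}\tau$ strictly above each remaining offset needed.
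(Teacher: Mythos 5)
Your proposal follows the same route as the paper's proof: apply property~(b) of $S'$ (from Lemma~\ref{lem:periodicity-gap}, with $\frac34\tau$ in place of $\tau$) to establish a common ``right context'' of $p$ and $q$ of length about $\tau/8$, observe that the four nested $\vbit$ levels reach positions at offset at most $4\cdot\tau/2^5=\tau/8$, and then run an induction over the levels showing that the chunks $\bar{t},\bar{\bar{t}},\bar{\bar{\bar{t}}},a_p$ each depend only on successively longer ``right contexts'' of $p$. That is exactly the paper's argument. One caveat: you correctly flagged the boundary $d=\tau/8$ as delicate, but your proposed fix does not actually dispose of it. If $p_j$ is an intermediate neighbor at offset $d_j$ from $p$, re-applying property~(b) at $(p_j,q_j)$ with $\ell_j=\frac78\tau-d_j$ gives membership agreement for offsets $d<\ell_j-\frac34\tau=\frac{\tau}{8}-d_j$, i.e.\ absolute offsets $d_j+d<\tau/8$ --- exactly the same half-open range $[0..\tau/8)$ one gets from a single application at $(p,q)$; the margin equals the remaining reach rather than exceeding it. Note, though, that the paper's own write-up has the identical edge: it asserts the right context holds ``for any $d\in[0..\tau/8]$'' while property~(b) with $\ell=\frac78\tau$ literally yields only $d\in[0..\tau/8)$. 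So this is not a gap you introduced, and your overall strategy matches the paper.
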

\begin{proof}
Since $S'$ is an almost $\frac{3}{4}\tau$-partitioning set, as stated in Lemma~\ref{lem:periodicity-gap}, it follows from property~(b) for $S'$ that the positions $p$ and $q$ have a common ``right context'' of length $\tau / 8$: more formally, for any $d \in [0..\tau/8]$, we have $s[p{+}d .. p{+}d{+}\tau/2] = s[q{+}d .. q{+}d{+}\tau/2]$, and $p + d \in S'$ iff $q + d \in S'$.
In order to produce $a_p$, our algorithm first consecutively computed numbers $\bar{t}, \bar{\bar{t}}, \bar{\bar{\bar{t}}}$ for $p$. Denote by $p_1, p_2, \ldots, p_m$ all positions of $S' \cap (p..p{+}\tau/2^5]$ in the increasing order. By construction, the number $\bar{t}$ depends on a ``right context'' of $p$ of length $\tau/2^5$: $\bar{t}$ is produced by comparing $s[p..p{+}\tau/2]$ to all strings $s[p_1..p_1{+}\tau/2], \ldots,$ $s[p_m..p_m{+}\tau/2]$ and, thus, $\bar{t}$ coincides with analogously computed numbers at any other positions $r$ such that, for all $d \in [0..\tau/2^5]$, $s[p{+}d .. p{+}d{+}\tau/2] = s[r{+}d .. r{+}d{+}\tau/2]$ and $p + d \in S'$ iff $r + d \in S'$. It remains to show that the numbers $\bar{\bar{t}}, \bar{\bar{\bar{t}}}, a_p$ in the same sense depend on ``right contexts'' of $p$ of lengths $2\tau/2^5, 3\tau/2^5, 4\tau/2^5 = \tau/8$, respectively. The proof is similar for all three cases. Consider, for instance, the number $\bar{\bar{\bar{t}}}$, assuming that the claim holds for $\bar{\bar{t}}$. We obtain $\bar{\bar{\bar{t}}}$ by comparing $\bar{\bar{t}}$ to numbers $\bar{\bar{t}}_1, \bar{\bar{t}}_2, \ldots, \bar{\bar{t}}_m$ computed for $p_1, p_2, \ldots, p_m$, respectively. By the assumption, the number $\bar{\bar{t}}_m$ corresponding to the rightmost of the positions, $p_m$, depends on a ``right context'' of $p_m$ of length $2\tau/2^5$. Therefore, since $p_m - p \le \tau/2^5$, we obtain the claimed dependency of $\bar{\bar{\bar{t}}}$ on a ``right context'' of $p$ with length $\tau/2^5 + 2\tau/2^5 = 3\tau/2^5$.
\end{proof}

\mainLemma*
\begin{proof}
Since $S^* \subseteq S'$, the converse of property~(c) is inherited from the set $S'$ that satisfies it by Lemma~\ref{lem:periodicity-gap}; we relax the condition $S^* \cap [i + \frac{3}{4}\tau .. j - \frac{3}{4}\tau] = \emptyset$ slightly, for aesthetical reasons.

For $h \in [\LBB..\lceil\log^{(4)} n\rceil]$, denote by $S_h$ the set of positions from $S'$ whose corresponding letters remained in $R$ after the algorithm has performed all recompression procedures for all $j > h$.
In particular, $S_{\lceil\log^{(4)} n\rceil} = S'$ and $S_{\LBB} = S^*$. Note that the size of each set $S_h$ is at most $2^{h+\BB} \cdot \frac{n}{\tau}$.

For property~(a) of $S^*$, consider $p$ and $q$ such that $s[p{-}\tau..p{+}\tau] = s[q{-}\tau..q{+}\tau]$. Let us show by induction that, for each $h \in [\LBB..\lceil\log^{(4)} n\rceil]$ and each $d$ such that $|d| \le \frac{1}{8}\tau - \frac{8}{2^{h+1}}\tau$, we have $p + d \in S_h$ iff $q + d \in S_h$. In particular, for $h = \LBB$, $p \in S^*$ iff $q \in S^*$, which is precisely the claim of property~(a). The base of the induction is $h = \lceil\log^{(4)} n\rceil$: since, as stated in Lemma~\ref{lem:periodicity-gap}, $S'$ is an almost $\frac{3}{4}\tau$-partitioning set, we have $p + d \in S'$ iff $q + d \in S'$, for any $d$ such that $|d| \le \frac{1}{8}\tau$. By Lemma~\ref{lem:r-letters-locality}, letters of $R$ corresponding to positions $p + d$ and $q + d$ such that $p + d \in S'$ and $q + d \in S'$ coincide provided $|d| \le \frac{1}{8}\tau$ since $s[p{+}d..p{+}d{+}\frac{7}{8}\tau] = s[q{+}d..q{+}d{+}\frac{7}{8}\tau]$.

Fix $h \in [\LB..\lceil\log^{(4)} n\rceil]$ and suppose, by the inductive hypothesis, that $p + d \in S_h$ iff $q + d \in S_h$, for any $d$ such that $|d| \le \frac{1}{8}\tau - \frac{8}{2^{h+1}}\tau$. We are to prove the inductive step: $p + d \in S_{h-1}$ iff $q + d \in S_{h-1}$, for any $d$ such that $|d| \le \frac{1}{8}\tau - \frac{8}{2^{h}}\tau$. Let $R$ be a string obtained by the algorithm after performing all recompression procedures for all $j > h$. Thus, there is a one-to-one correspondence between the positions $S_h$ and letters of the string $R$. Consider a (contiguous) sequence of letters $a_{i} a_{{i}+1} \cdots a_{m}$ of $R$ corresponding to all positions $p + d \in S_h$ such that $|d| \le \frac{1}{8}\tau - \frac{8}{2^{h+1}}\tau$, and, analogously, a sequence $a_{{i}'} a_{{i}'+1} \cdots a_{m'}$ corresponding to positions $q + d \in S_h$. By the inductive hypothesis and due to Lemma~\ref{lem:r-letters-locality}, the sequences coincide. The algorithm performs on the string $R$ at most three recompression reductions removing some of the letters until the positions of $S_h$ corresponding to the remained letters constitute the set $S_{h-1}$. Denote by $r$ and $r'$ the positions of $S_h$ corresponding to the letters $a_m$ and $a_{m'}$, respectively. A discrepancy in the processing of the sequences by the first iteration of recompression may occur only in their last letters: for instance, the letter $a_m$ will be removed whereas $a_{m'}$ will be retained (see an example in Figure~\ref{fig:discrepancies}). Let us analyze this particular case (other cases are similar). This may happen only if $a_m \in \acute{\Sigma}$ and $a_m$ is followed in $R$ by a letter $a \in \grave{\Sigma}$ and the distance between $r$ and the position of $S_h$ corresponding to $a$ is at most $\tau / 2^h$; at the same time, either $a_{m'}$ ($=a_m$) is followed by a different letter $b \in \acute{\Sigma}$ ($\ne a$) or the distance between $r'$ and the position of $S_h$ corresponding to this following letter is larger than $\tau / 2^h$. We therefore deduce that the distance from $r$ to $p + \frac{1}{8}\tau - \frac{8}{2^{h+1}}\tau$ is less than $\tau / 2^h$ (for otherwise the letter $a$ following $a_m$ must be a part of the sequence $a_{i} a_{{i}+1} \cdots a_{m}$), which implies that $r - p > \frac{1}{8}\tau - \frac{8}{2^{h+1}}\tau - \frac{1}{2^h} \tau = \frac{1}{8}\tau - \frac{5}{2^h}\tau$.

Thus, we have shown that two sequences resulting from $a_{i} a_{{i}+1} \cdots a_{m}$ and $a_{{i}'} a_{{i}'+1} \cdots a_{m'}$ after the recompression coincide in all letters whose corresponding positions from $S_h$ are at a distance at most $\frac{1}{8}\tau - \frac{5}{2^h}\tau$ from $p$ and $q$, respectively. Exactly the same argument can be applied for the second recompression. A discrepancy in the processing of the resulting sequences by the second iteration of recompression (if any) may again occur only in last letters of the sequences: for instance, the last letter $c$ of the first sequence (the letter $a_{m-1}$ in the example of Figure~\ref{fig:discrepancies}) will be retained whereas the corresponding letter $c$ of the second sequence (the letter $a_{m'-1}$ in the example)
%, which is the second last letter and immediately precedes $a_{m'}$,
will be removed.
%This is, in a sense, a ``worst case'' scenario and it suffices to analyze only it as other cases are similar or simpler.
Denote by $r''$ the position of $S_h$ corresponding to the removed letter $c$. By analogy to the argument used for the first iteration, we deduce that the distance between $r''$ and $r'$ is at most $\tau / 2^h$. Observe that $r - p = r' - q$. Therefore, $r'' - q \ge r' - q - \frac{1}{2^h}\tau > \frac{1}{8}\tau - \frac{5}{2^h}\tau - \frac{1}{2^h}\tau = \frac{1}{8}\tau - \frac{6}{2^h}\tau$.
Thus, two sequences resulting from the second recompression coincide in all letters whose corresponding positions from $S_h$ are at a distance at most $\frac{1}{8}\tau - \frac{6}{2^h}\tau$ from $p$ and $q$, respectively. Analogously, we deduce that two sequences resulting from the third recompression (if any) coincide in all letters whose corresponding positions from $S_h$ are at a distance at most $\frac{1}{8}\tau - \frac{6}{2^h}\tau - \frac{1}{2^h}\tau = \frac{1}{8}\tau - \frac{7}{2^h}\tau$ from $p$ and $q$, respectively. This proves the inductive claim since $\frac{1}{8}\tau - \frac{7}{2^h}\tau \ge \frac{1}{8}\tau - \frac{8}{2^h}\tau$ and the set $S_{h-1}$ consists of all positions from $S_h$ that correspond to the letters of $R$ remained after the (at most) three recompressions.

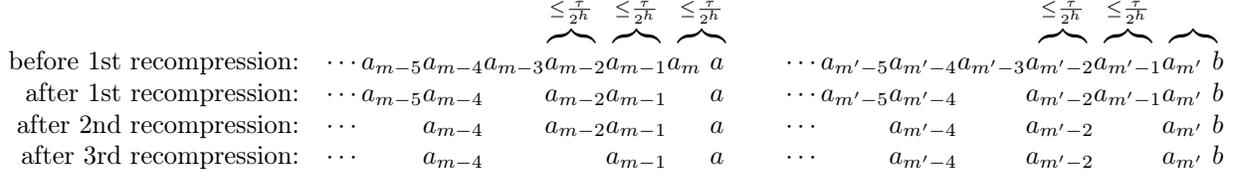
\begin{figure}[!hbt]
\[
\begin{array}{rl@{}cl}
\text{before 1st rec.} &\hspace{-2mm} \cdots a_{m-5}           a_{m-4} {a_{m-3}}         a_{m-2}   \rn{m1}{a_{m-1}} a_{m}           ~a &  & \cdots a_{m'-5}           a_{m'-4} {a_{m'-3}}         a_{m'-2} \rn{m2}{a_{m'-1}} a_{m'} ~b\\
\text{after 1st rec.}  &\hspace{-2mm} \cdots {a_{m-5}}         a_{m-4} \phantom{a_{m-3}} a_{m-2}           a_{m-1} \phantom{a_{m}} ~a &  & \cdots {a_{m'-5}}         a_{m'-4} \phantom{a_{m'-3}} a_{m'-2} {a_{m'-1}}         a_{m'} ~b\\
\text{after 2nd rec.}  &\hspace{-2mm} \cdots \phantom{a_{m-5}} a_{m-4} \phantom{a_{m-3}} a_{m-2}           a_{m-1} \phantom{a_{m}} ~a &  & \cdots \phantom{a_{m'-5}} a_{m'-4} \phantom{a_{m'-3}} a_{m'-2} \phantom{a_{m'-1}} a_{m'} ~b\\
\text{after 3rd rec.}  &\hspace{-2mm} \cdots \phantom{a_{m-5}} a_{m-4} \phantom{a_{m-3}} \phantom{a_{m-2}} a_{m-1} \phantom{a_{m}} ~a &  & \cdots \phantom{a_{m'-5}} a_{m'-4} \phantom{a_{m'-3}} a_{m'-2} \phantom{a_{m'-1}} a_{m'} ~b
\end{array}
\]
\begin{tikzpicture}[overlay, remember picture]
\node[anchor=south] at (m1) {$\overbrace{\phantom{aa}}^{{\le}\frac{\tau}{2^h}}~\overbrace{\phantom{aa}}^{{\le}\frac{\tau}{2^h}} ~\overbrace{\phantom{aa}}^{{\le}\frac{\tau}{2^h}}$};
\node[anchor=south] at (m2) {$\overbrace{\phantom{aa}}^{{\le}\frac{\tau}{2^h}} ~\overbrace{\phantom{aa}}^{{\le}\frac{\tau}{2^h}} ~\overbrace{\phantom{aa}}^{~}$};
\end{tikzpicture}
\caption{A schematic example of three iterations of recompression (``rec.'' on the figure) on equal sequences $a_{{i}} a_{{i}+1} \cdots a_{m}$ and $a_{{i}'} a_{{i}'+1} \cdots a_{m'}$ ($m - {i} = m' -{i}'$ and $a_{{i} + t} = a_{{i}' + t}$, for any $t \in [0..m{-}{i}]$). The overbraces designate restrictions on distances between positions of $S_h$ corresponding to letters.}\label{fig:discrepancies}
\end{figure}

For property~(b) of $S^*$, consider $p, q \in S^*$ such that $s[p..p{+}\ell] = s[q..q{+}\ell]$, for some $\ell \ge 0$. Clearly, only the case $\ell > \tau$ is interesting. Denote $\tilde{\ell} = \ell - \tau$. It suffices to prove that, for each $h \in [\LBB..\lceil\log^{(4)} n\rceil]$ and each $d$ such that $0 \le d < \tilde{\ell} + \frac{1}{8}\tau - \frac{8}{2^{h+1}}\tau$, we have $p + d \in S_h$ iff $q + d \in S_h$. In particular, for $h = \LBB$, it is precisely the claim of property~(b): for any $d \in [0..\tilde{\ell}) = [0..\ell{-}\tau)$, $p + d \in S^*$ iff $q + d \in S^*$.
The proof is essentially by the same inductive argument as for property~(a). The base of the induction, $h = \lceil\log^{(4)} n\rceil$, follows from Lemma~\ref{lem:periodicity-gap} where it is stated that $S'$ ($= S_{\lceil\log^{(4)} n\rceil}$) is an almost $\frac{3}{4}\tau$-partitioning set. By Lemma~\ref{lem:r-letters-locality}, the letters of $R$ corresponding to positions $p + d$ and $q + d$ such that $p + d \in S'$ and $q + d \in S'$ coincide provided $0 \le d < \tilde{\ell} + \frac{1}{8}\tau$ since $s[p{+}d..p{+}d{+}\frac{7}{8}\tau] = s[q{+}d..q{+}d{+}\frac{7}{8}\tau]$. The proof of the inductive step is very similar to the proof for property~(a) above; we therefore only briefly sketch it without diving into details.

Fix $h \in [\LB..\lceil\log^{(4)} n\rceil]$ and suppose, by the inductive hypothesis, that $p + d \in S_h$ iff $q + d \in S_h$, for any $d$ such that $0 \le d < \tilde{\ell} + \frac{1}{8}\tau - \frac{8}{2^{h+1}}\tau$. Let $R$ be a string produced by the algorithm after performing all recompression procedures for all $j > h$. There is a one-to-one correspondence between $S_h$ and letters of $R$. Consider a (contiguous) sequence of letters $a_i a_{i+1} \cdots a_{m}$ of $R$ corresponding to all positions $p + d \in S_h$ such that $0 \le d < \tilde{\ell} + \frac{1}{8}\tau - \frac{8}{2^{h+1}}\tau$, and, analogously, a sequence $a_{i'} a_{i'+1} \cdots a_{m'}$ corresponding to positions $q + d \in S_h$. By the inductive hypothesis and due to Lemma~\ref{lem:r-letters-locality}, the sequences coincide. The algorithm performs on $R$ at most three recompressions removing some letters until the positions of $S_h$ corresponding to the remained letters constitute the set $S_{h-1}$. Discrepancies occurring in the two sequences after the recompressions may affect only last positions that are close---at a distance at most $3\tau / 2^h$---to the ``right borders'', $p + \tilde{\ell} + \frac{1}{8}\tau - \frac{8}{2^{h+1}}\tau$ and $q + \tilde{\ell} + \frac{1}{8}\tau - \frac{8}{2^{h+1}}\tau$,  of the sequences, respectively.
%where the sequences may have different letters.
Therefore, all positions from $S_h \cap [p..\infty)$ and $S_h \cap [q..\infty)$ that were at a distance at least $3\tau / 2^h$ to the left of the ``right borders'' are ``synchronized'', i.e., such a position $p + d \in S_h$ is removed from $S_h$ iff $q + d \in S_h$ is removed too. The ``synchronized'' positions are exactly $p + d \in S_h$ and $q + d \in S_h$ such that $0 \le d < \tilde{\ell} + \frac{1}{8}\tau - \frac{8}{2^{h+1}}\tau - \frac{3}{2^h}\tau = \tilde{\ell} + \frac{1}{8}\tau - \frac{7}{2^{h}}\tau$, so that $p + d \in S_{h-1}$ iff $q + d \in S_{h-1}$. Since $\frac{1}{8}\tau - \frac{7}{2^{h}}\tau \ge \frac{1}{8}\tau - \frac{8}{2^{h}}\tau$ and the set $S_{h-1}$ is formed by remained positions of $R$ after the (at most) three recompressions, this proves the inductive step.

For property~(c) of $S^*$, consider $p, q \in S^*$ such that $q - p > \tau$ and $S^* \cap (p..q) = \emptyset$. By construction, a recompression procedure performed on a string $R$ may delete a letter $R[i]$ only if the distance from the position $r$ corresponding to $R[i]$ to the positions $r'$ and $r''$ of $S'$ corresponding to the letters $R[i{-}1]$ and $R[i{+}1]$ is at most $\tau / 2^5$. Further, if $R[i]$ is removed, then neither $R[i{-}1]$ nor $R[i{+}1]$ can be removed in the same iteration of recompression. The distance between $r'$ and $r''$ is at most $\tau / 2^4$. Therefore, it is impossible that there was a position of $S'$ from $(p..q)$ that got removed. Thus, $S' \cap (p..q) = \emptyset$. Since $S'$ is a $\tau$-partitioning set, we deduce that the substring $s[p..q]$ has a period at most $\tau / 4$ by property~(c) of $S'$.
\end{proof}

\section{\boldmath Small $\tau$}
\label{appx:small-tau}

Assume that $\tau < (\log^{(3)} n)^4$. If $\tau \ge 2^5 \log^{(3)} n$, we perform the procedure of Sections~\ref{sec:vishkin-process}--\ref{sec:time-improvement} generating a \mbox{$\frac{\tau}{2}$-partitioning} set $S$ of size $\Oh(\frac{n}{\tau}\log^{(3)} n)$. If $\tau < 2^5 \log^{(3)} n$, we put $S = [0..n)$, which is a \mbox{$\frac{\tau}{2}$-partitioning} set of size $\Oh(\frac{n}{\tau}\log^{(3)} n)$ in this case. All this is done in $\Oh(n)$ time and $\Oh(\frac{n}{\tau})$ space (provided the set $S$ is not stored explicitly), which are the time and space bounds we aim for. The problem is in the remaining stages of the algorithm. Specifically, it is in the following ``slow'' parts:
\begin{enumerate}%[label=(\roman*)]
\item[(i)] the generation of the subset $S' \subseteq S$ requires $\Oh(n + |S| \log^{(3)} n)$ time;
\item[(ii)] the construction of the string $R$ takes $\Oh(n + |S'|(\log^{(3)} n)^3)$ time;
\item[(iii)] the initialization of the arrays $M_i$, for all $i \in [0..|S'|)$, requires $\Oh(|S'|\log^{(3)} n)$ time;
\item[(iv)] each update of the arrays $M_i$ before shrinking the string $R$ takes  $\Oh(|R|\log^{(4)} n)$ time.
\end{enumerate}

If we optimize all these four bottlenecks to run, respectively, in $\Oh(n)$, $\Oh(n)$, $\Oh(|S'|)$, and $\Oh(|R|)$ time, then the running time of the whole algorithm will be $\Oh(n)$. We do all four optimizations by the four russians' trick~\cite{ArlazarovDinicKronrodFaradzev}. The idea of the trick is that if one has to perform complicated queries on chunks of $c$ bits, then instead of computing the queries explicitly each time, we can precalculate a table of size $2^c$ with answers for every possible chunk. What are the ``chunks'' and the ``queries'' in our case? The easiest is to analyze~(iv) to give an example. Let us describe how one can update the arrays $M_i$ in $\Oh(|R|)$ time before shrinking $R$.

\subparagraph{Part (iv).} Suppose that we have marked some letters of $R$ for removal using a bit array $A[0..|R|{-}1]$ of length $R$: $R[i]$ is marked iff $A[i] = 1$. Every array $M_i[0..\lceil\log^{(4)} n\rceil]$ occupies $\Oh((\log^{(4)} n)^2)$ bits (each entry is of size $\Oh(\log^{(4)} n)$ bits). The na{\"i}ve updating procedure considers every entry $M_i[j]$ and counts which of the letters $R[i{+}1], R[i{+}2], \ldots, R[i{+}M_i[j]]$ were not marked for removal, assigning the result to $M_i[j]$. Recall that each number $M_i[j]$, for $j \in [0..\lceil\log^{(4)} n\rceil]$, is at most $\Oh(\log^{(3)} n)$. Therefore, the new value for $M_i[j]$ is determined by the old value $M_i[j]$ and the subarray $A[i..i{+}\Oh(\log^{(3)} n)]$, for an appropriate constant under the big-O. By storing the bit array $A$ in a sequence of $\Oh(1 + |R| / \log n)$ machine words of size $\Theta(\log n)$ bits, we can retrieve and pack the subarray $A[i..i{+}\Oh(\log^{(3)} n)]$ in $\Oh(1)$ time into one machine word and can concatenate the subarray to the bit representation of the whole $M_i$. The bit representation of $M_i$ takes $\Oh((\log^{(4)} n)^2)$ and, thus, the resulting chunk after the concatenation occupies $\Oh((\log^{(4)} n)^2) + \Oh(\log^{(3)} n) = \Oh(\log^{(3)} n)$ bits. We view this chunk storing the concatenated bit representations of $M_i[0..\lceil\log^{(4)} n\rceil]$ and $A[i..i{+}\Oh(\log^{(3)} n)]$ as an integer number $x$ with $\Oh(\log^{(3)} n)$ bits. It is clear that the chunk determines the state of the updated array $M_i$. Therefore, we can in advance before the start of the whole algorithm consider all possible valid chunks that encode in the same way arrays $M[0..\lceil\log^{(4)} n\rceil]$ with entries of size $\Oh(\log^{(4)} n)$ bits concatenated with bit arrays of length $\Oh(\log^{(3)} n)$ and we can precalculate the updated arrays $M$ in a table $B$ so that the updated array $M_i$ is already stored in the entry $B[x]$. Thus, we simply read $B[x]$, which contains a bit representation for the updated array $M_i$ occupying $\Oh((\log^{(4)} n)^2)$ bits, and we rewrite $M_i$ with the content of $B[x]$. All is done $\Oh(1)$ time since all the bit representations take only $\Oh(1)$ machine words. The size of the table $B$ is $\Oh(2^{\Oh(\log^{(3)} n)} \cdot (\log^{(4)} n)^2) = o(\log n)$ bits and, hence, all precalculations can be performed in $o(n)$ time with $\Oh(1)$ space (the space is measured in $\Oh(\log n)$-bit machine words).

\subparagraph{Part (i).}
Let us describe how the set $S'$ is produced in $\Oh(n)$ time from the set $S$, which is fed to our algorithm from left to right in an online fashion. The na{\"i}ve procedure, for each position $p \in S$, collects all positions from the set $S \cap (p..p{+}\tau/4]$ and compares $s[p..p{+}\tau/2]$ to each of the substrings $s[q..q{+}\tau/2]$ with $q \in S \cap (p..p{+}\tau/4]$; if $s[p..p{+}\tau/2] = s[q..q{+}\tau/2]$, then all the positions $S \cap (p..q]$ are marked for removal and, thus, will not be reported into $S'$. The idea of a faster solution is that all the comparisons in this algorithm, for the given $p$, are performed inside a very short substring $s[p..p{+}\tau/4+\tau/2]$ (recall that $\tau < (\log^{(3)} n)^4$); we therefore can pack all this string into one small chunk that fits into one machine word specifying which of its positions are from $S$ and, then, we perform the marking for removal in $\Oh(1)$ time using a precomputed answer. The issue with this idea is that the alphabet can be quite large so that the packing is impossible.

A solution is to ``reduce'' the alphabet for the substring $s[p..p{+}\tau/4+\tau/2]$ by sorting all letters and substituting them by their ordering numbers. The sorting can be performed in linear time using fusion trees~\cite{FredmanWillard,PatrascuThorup2}. More precisely, during the left-to-right processing of $S$, we consecutively consider (overlapping) substrings $s[i\tau..(i{+}2)\tau]$, for $i \in [0..n/\tau-2]$. Using the fusion tree, we can sort all letters of a given substring $s[i\tau..(i{+}2)\tau]$ in $\Oh(\tau)$ time assigning to them their ordering numbers, i.e., reducing the alphabet to $[0..2\tau]$ (see~\cite{FredmanWillard,PatrascuThorup2}; note that we have $\Omega(n / \tau) = \Omega(n / \log n)$ space). Denote by $\hat{s}_i$ the string $s[i\tau..(i{+}2)\tau]$ with letters substituted by the ordering numbers (so that all new letters are from $[0..2\tau]$). The string $\hat{s}$ occupies $\Oh(\tau\log\tau) = o((\log^{(3)} n)^5)$ bits and, thus, can be packed into one machine word. Now in order to process $p \in S$, we first check whether the substring $s[p..p{+}\tau/4{+}\tau/2]$ is contained in the last preprocessed substring $\hat{s}_i$. If not, we set $i = \lfloor p / \tau\rfloor$ and preprocess $s[i\tau..(i{+}2)\tau]$ in $\Oh(\tau)$ time generating $\hat{s}_i$. Note that in this way we never preprocess the same substring $s[i\tau..(i{+}2)\tau]$ twice since the positions $p \in S$ are fed to our algorithm from left to right. Next, using standard bit operations on the machine word containing $\hat{s}_i$, we retrieve the substring $s[p..p{+}\tau/4{+}\tau/2]$ from the string $s[i\tau..(i{+}2)\tau]$ encoded in $\hat{s}_i$ in which the alphabet is ``reduced'', i.e., all letters are substituted with numbers from $[0..2\tau]$. The retrieved substring occupies $\Oh(\tau\log\tau)$ bits and is stored in one machine word. We concatenate to the bit representation of this substring a bit array $a$ of length $\tau/4$ that indicates which of the positions $i + 1, i + 2, \ldots, i + \tau/4$ belong to $S$: for $h \in (0..\tau/4]$, we have $i + h \in S$ iff $a[h - 1] = 1$. The bit array $a$ is easy to maintain in one machine word during the execution of the algorithm using the bit shift operations. Thus, we obtain a chunk of $\Oh(\tau\log\tau) + \tau/4 = o((\log^{(3)} n)^5)$ bits that encodes the concatenated bit representations of the substring $s[p..p{+}\tau/4{+}\tau/2]$, with a ``reduced'' alphabet, and of the bit array $a$, indicating all positions from $S \cap (p..p{+}\tau/4]$. We view this chunk as an integer number $x$ with $o((\log^{(3)} n)^5)$ bits.

Clearly, the chunk $x$ determines which of the positions from $S \cap (p..p{+}\tau/4]$ should be marked for removal during the processing of $p$. Therefore, we can in advance before the start of the whole algorithm consider all possible valid chunks that encode in the same way strings $t[0..\tau/2+\tau/4]$ over the alphabet $[0..2\tau]$ concatenated with bit arrays of length $\tau/4$ and we can precalculate which of the positions from $S \cap (p..p{+}\tau/4]$ will be marked for removal in a table $C$ so that the entry $C[x]$ stores a bit array $c$ of length $\tau/4$ that ``masks'' positions for removal: for $h\in (0..\tau/4]$, $c[h - 1] = 0$ iff the position $q \in S \cap (p..p{+}\tau/4]$ such that $|S \cap (p..q]| = h$ must be marked for removal because there exists $q' \in S \cap [q..p{+}\tau/4]$ such that $s[p..p{+}\tau/2] = s[q'..q'{+}\tau/2]$. Thus, we simply read $C[x]$ and apply the masking array stored in $C[x]$ to mark for removal some positions from $S$ after $p$.

Thus, the processing of $p\in S$ is done in $\Oh(1)$ time since all the bit representations take only $\Oh(1)$ machine words. The total time for processing all substrings $s[i\tau..(i{+}2)\tau]$, for $i \in [0..n/\tau-2]$, is $\Oh(\frac{n}{\tau}\tau) = \Oh(n)$. The size of the table $C$ is $\Oh(2^{o((\log^{(3)} n)^5)} \cdot \tau) = o(\log n)$ bits and, hence, all precalculations can be performed in $o(n)$ time with $\Oh(1)$ space. As in the original algorithm from Section~\ref{sec:recompression}, the computation of $S'$ is executed in an online manner reporting its positions from left to right without storing them (only few are stored for internal purposes).

\subparagraph{Part (ii).}
Let us describe now how the string $R$ can be computed from the set $S'$, which is fed to our algorithm from left to right in an online fashion. The idea is very similar to what was done for Part~(i) to produce the set $S'$ from $S$. The algorithm starts with an empty string $R$ and considers all $p \in S'$ from left to right generating, for each $p \in S'$, a letter $a_p$ that is then appended to the end of the string $R$. As is evident from Lemma~\ref{lem:r-letters-locality}, the construction of $a_p$ requires only the substring $s[p..p{+}\frac{7}{8}\tau]$ and the positions $S' \cap (p..p{+}\frac{7}{8}\tau]$. At first glance, the same trick can be applied as in Part~(i): we consecutively consider substrings $s[i\tau..(i{+}2)\tau]$, for $i \in [0..n/\tau-2]$, reducing their alphabets to $[0..2\tau]$ and encoding them into chunks $\hat{s}_i$ of $\Oh(\tau\log\tau)$ bits; when a position $p \in S'$ arrives, we retrieve the substring $s[p..p{+}\frac{7}{8}\tau]$ with a reduced alphabet from the chunk $\hat{s}_i$ with $i = \lfloor p / \tau\rfloor$, constructing $\hat{s}_i$ from scratch if it was not built previously, and then, we concatenate to the chunk a bit array of length $\Oh(\tau)$ indicating which of the positions from $(p..p{+}\frac{7}{8}\tau]$ belong to $S'$. Unfortunately, this scheme does not work since the alphabet reductions loose some essential information required to construct the letters $a_p$. Precalculations therefore are more involved.

Denote by $p_1, p_2, \ldots, p_m$ the set of all positions from $S' \cap (p..p{+}\tau/2^5]$ in increasing order. Recall that the procedure of Section~\ref{sec:recompression} first generates for $p$ a tuple of numbers $\langle w'_1, w'_2, \ldots, w'_\ell\rangle$: for $j \in [1..\ell]$, $w'_j = \vbit(t, t_j)$ if $j \le m$, and $w'_j = \infty$ otherwise, where $t = \sum_{i=0}^{\tau/2} s[p{+}i] 2^{wi}$ and $t_j = \sum_{i=0}^{\tau/2} s[p_j{+}i] 2^{wi}$, for $j \in [1..m]$. The numbers $t$ and $t_j$ simply represent the substrings $s[p..p{+}\tau/2]$ and $s[p_j..p_j{+}\tau/2]$. In order to compute $\vbit(t, t_j)$, it suffices to find the length $L$ of the longest common prefix of $s[p..p{+}\tau/2]$ and $s[p_j..p_j{+}\tau/2]$ and, then, compute the lowest bit at which the numbers $s[p{+}L]$ and $s[p_j{+}L]$ differ and which of these numbers has 0 and 1 in this differing bit. While the length $L$ can be computed on the substring $s[p..p{+}\frac{7}{8}\tau]$ with a reduced alphabet, the information about the bits at which the numbers $s[p{+}L]$ and $s[p_j{+}L]$ differ is lost after the alphabet reduction. However, it turns out that this information can be stored too in small space without the need to preserve the numbers $s[p{+}L]$ and $s[p_j{+}L]$ themselves.

Consider the procedure reducing the alphabet for a substring $s[i\tau..(i{+}2)\tau]$ with $i \in [0..n/\tau-2]$. The procedure sorts all letters assigning to them ordering numbers from $[0..2\tau]$. Denote by $b_0, b_1, \ldots, b_{k-1}$ all distinct letters of $s[i\tau..(i{+}2)\tau]$ in increasing order before the reduction. Each letter $b_i$ is an $\Oh(\log n)$-bit number that is mapped by the reduction to its index $i$. Each number $b_i$ can be represented as a bit string $\bar{b}_i$ of length $\Oh(\log n)$ in which the bits are written from the lowest to the highest. We can construct a compacted trie on the strings $\bar{b}_i$. Since $k \le 2\tau + 1$, the compacted trie can be stored in $\Oh(\tau\log\log n)$ bits: every edge in the trie stores the length of the bit string written on it (which takes $\Oh(\log\log n)$ bits), each internal node contains pointers to its children (taking $\Oh(\log\tau)$ bits since the number of nodes is $\Oh(\tau)$), and each leaf stores the index $i$ of the corresponding number $b_i$ (taking $\Oh(\log\tau)$ bits). The compacted trie can be built in $\Oh(k)$ time using a fusion tree~\cite{FredmanWillard}; in fact, the fusion tree on the numbers $b_0, b_1, \ldots, b_{k-1}$ implicitly constructs precisely this trie (see also~\cite{ChanLarsenPatrascu,GrossiOrlandiRamanRao} where this is emphasized more explicitly). Our idea is that we do not have to store the numbers $b_0, b_0, \ldots, b_{k-1}$ in addition to the trie in order to find the lowest bit at which two numbers $b_i$ and $b_{i'}$ differ: the bit corresponds precisely to the position in the compacted trie at which the corresponding strings $\bar{b}_i$ and $\bar{b}_{i'}$ diverge.

By analogy to the solution for Part~(i), when processing $p \in S'$, we first retrieve the substring $s[p..p{+}\frac{7}{8}\tau]$ with a reduced alphabet from the string $s[i\tau..(i{+}2)\tau]$ with $i = \lfloor p / \tau\rfloor$ encoded in $\hat{s}_i$. Then, we concatenate to the bit representation of this substring a bit array $a$ of length $\lfloor\frac{7}{8}\tau\rfloor$ that indicates which of the positions $i + 1, i + 2, \ldots, i + \lfloor\frac{7}{8}\tau\rfloor$ belong to $S'$: for $h \in (0..\frac{7}{8}\tau]$, we have $i + h \in S'$ iff $a[h - 1] = 1$. The bit array $a$ is easy to maintain in one machine word during the execution of the algorithm using the bit shift operation. Next, we concatenate to the resulting bit chunk a bit representation of the compacted trie on the bit strings $\bar{b}_0, \bar{b}_1, \ldots, \bar{b}_{k-1}$ (without storing the letters $b_0, b_0, \ldots, b_{k-1}$ themselves), which adds more $\Oh(\tau\log\log n)$ bits. Thus, we obtain a chunk of $\Oh(\tau\log\tau) + \frac{7}{8}\tau + \Oh(\tau\log\log n) = \Oh((\log^{(3)} n)^4\log\log n)$ bits that encodes the concatenated bit representations of the substring $s[p..p{+}\frac{7}{8}\tau]$ with a reduced alphabet, of the bit array $a$ indicating all positions from $S' \cap (p..p{+}\frac{7}{8}\tau]$, and of the compacted trie. We view this chunk as an integer number $x$ with $\Oh((\log^{(3)} n)^4\log\log n)$ bits.

The chunk $x$ determines the letter $a_p$. Indeed, in order to compute the letter, we first have to compute the numbers $w'_j = \vbit(t, t_j)$: this can be done by first computing the length $L$ of the longest common prefix of the corresponding substrings of $s[p..p{+}\frac{7}{8}\tau]$ at positions $p$ and $p_j$ and, then, by finding the lowest differing bit of the numbers $s[p{+}L]$ and $s[p_j{+}L]$, which can be performed using the compacted trie; similar computations should be executed for other positions from $S' \cap (p..p{+}\frac{7}{8}]$ but all of them involve only substrings of the string $s[p..p{+}\frac{7}{8}\tau]$, due to Lemma~\ref{lem:r-letters-locality}. All further computations can be executed as described in Section~\ref{sec:recompression}. Instead of performing all this from scratch, we can in advance consider all possible valid chunks that encode in the same way strings $t[0..\frac{7}{8}\tau]$ over the alphabet $[0..2\tau]$ concatenated with bit arrays of length $\frac{7}{8}\tau$ and with compacted tries on $\Oh(\tau)$ bit strings of length $\Oh(\log n)$ (without storing the $\Oh(\tau)$ strings themselves); for each such chunk, we precalculate the resulting letter in a table $D$ so that the entry $D[x]$ stores $a_p$. Thus, we simply read $D[x]$ and append it to $R$.

The processing of $p\in S$ is done in $\Oh(1)$ time since all the bit representations take only $\Oh(1)$ machine words. The total time for processing all substrings $s[i\tau..(i{+}2)\tau]$, for $i \in [0..n/\tau-2]$, is $\Oh(\frac{n}{\tau}\tau) = \Oh(n)$. The size of the table $D$ is $\Oh(2^{\Oh((\log^{(3)} n)^4\log\log n)} \cdot (\log^{(3)} n)^2)$ bits, which fits into $\Oh(n/\tau)$ space provided $\tau < (\log^{(3)} n)^4$ as can be easily seen since $\log(n/\tau) = \Theta(\log n)$ whereas the logarithm of the space for $D$ is $o(\log\log n)$. Hence, all precalculations can be performed in $o(n)$ time within $\Oh(n / \tau)$ space.

\subparagraph{Part (iii).}
It remains to describe how all the arrays $M_i[0..\lceil\log^{(4)} n\rceil]$ can be initialized in $\Oh(|S'|)$ time. Given a letter $R[i]$ and a position $p \in S'$ corresponding to it, recall that $M_i[j]$, for $j \in [0..\lceil\log^{(4)} n\rceil]$, is equal to the size of the set $S' \cap (p..p{+}\tau/2^j]$. We receive positions $p$ of $S'$ from left to right and, using the bit shift operation, maintain along the way a bit array $a$ of length $\tau$ that is stored in a machine word $x$ of size $\Oh(\log n)$ bits such that $a$ indicates which of the positions of $(p..p{+}\tau]$ are from $S'$, i.e., for $h \in (0..\tau]$, we have $p + h \in S'$ iff $a[h - 1] = 1$. Obviously, the array $a$ determines the content of $M_i$. The array $M_i$ occupies $\Oh((\log^{(4)} n)^2)$ bits and, therefore, its bit representation can be stored into one machine word. We hence can precompute a table $F$ of size $\Oh(2^{\tau}\cdot (\log^{(4)} n)^2) = o(\log n)$ bits such that $F[x]$ stores the bit representation of the array $M_i$. The content of $F[x]$ is then copied in $\Oh(1)$ time in place of $M_i$. The table $F$ can be straightforwardly precalculated in $o(n)$ time and $\Oh(1)$ space.

\fi
\end{document}